\title{Causal Boxes: Quantum Information\-/Processing Systems
  Closed under Composition}
\author[1]{Christopher~Portmann\email{chportma@ethz.ch}}
\author[2]{Christian~Matt\email{mattc@inf.ethz.ch}}
\author[2]{Ueli~Maurer\email{maurer@inf.ethz.ch}}
\author[1]{Renato~Renner\email{renner@ethz.ch}}
\author[3]{Bj\"orn~Tackmann\email{bta@zurich.ibm.com}}
\affil[1]{Institute for Theoretical Physics, ETH Zurich, 8093 Z\"urich, Switzerland.}
\affil[2]{Department of Computer Science, ETH Zurich, 8092 Z\"urich, Switzerland.}
\affil[3]{IBM Research -- Zurich, 8803 R\"uschlikon, Switzerland.}
\date{\today}
\newcommand{\cj}{Choi\-/Ja\-mi\-o\l{}\-kow\-ski\xspace}
\newcommand{\cjrep}{Choi\-/Ja\-mi\-o\l{}\-kow\-ski representation\xspace}
\newcommand{\cjreps}{Choi\-/Ja\-mi\-o\l{}\-kow\-ski representations\xspace}
\newcommand{\strep}{Stine\-spring representation\xspace}
\newcommand{\streps}{Stine\-spring representations\xspace}
\newcommand{\natrep}{natural representation\xspace}
\newcommand{\vacuum}{\ket{\Omega}}
\newcommand{\T}{\cT}
\newcommand{\cut}{\fC(\T)}
\newcommand{\bcut}{\overline{\fC}(\T)}
\newcommand{\completion}{\overline{\fZ}(\T)}
\newcommand{\LtwoOp}{\ell^2}
\newcommand{\Ltwo}[2]{#2 \tensor \LtwoOp\brs{#1}}
\newcommand{\LtwoTC}{\Ltwo{\T}{\C^d}}
\newcommand{\fock}[2][@]{\mathcal{F}\ifthenelse{\equal{#1}{@}}{}{_{#1}}\brs{#2}}
\newcommand{\tcop}[1]{\mathfrak{T}\brs{#1}}
\newcommand{\bdop}[1]{\mathfrak{B}\brs{#1}}
\newcommand{\ctrl}{\mathtt{ctrl}\text{-}}
\newcommand{\QS}{\mathtt{QS}}
\newcommand{\QU}{\mathtt{U}}
\newcommand{\QV}{\mathtt{V}}
\newcommand{\inrm}{\mathrm{in}}
\newcommand{\C}{\complex}
\newcommand{\R}{\reals}
\newcommand{\Q}{\rationals}
\newcommand{\N}{\naturals}
\newcommand{\ports}{\mathsf{ports}}
\newcommand{\connect}[1]{\xleftrightarrow{#1}}
\begin{document}

\pdfbookmark[1]{Title page}{titlepage}

\maketitle

\begin{abstract}
  Complex information\-/processing systems, for example quantum
  circuits, cryptographic protocols, or multi\-/player games, are
  naturally described as networks composed of more basic
  information\-/processing systems. A modular analysis of such systems
  requires a mathematical model of systems that is closed under
  composition, i.e., a network of these objects is again an object of
  the same type. We propose such a model and call the corresponding
  systems \emph{causal boxes}.
 
  Causal boxes capture superpositions of causal structures, e.g.,
  messages sent by a causal box $A$ can be in a superposition of
  different orders or in a superposition of being sent to box $B$ and
  box $C$. Furthermore, causal boxes can model systems whose behavior
  depends on time. By instantiating the Abstract Cryptography
  framework with causal boxes, we obtain the first composable security
  framework that can handle arbitrary quantum protocols and
  relativistic protocols.
\end{abstract}

\clearpage
\phantomsection
\pdfbookmark[1]{\contentsname}{sec:toc}
\tableofcontents
\clearpage

\section{Introduction}
\label{sec:intro}

In this work we are concerned with modeling quantum
information\-/processing systems, i.e., interactive systems that
receive and send quantum messages. Similar formalisms for modeling
such systems were developed by Gutoski and Watrous~\cite{GW07,Gut12},
Chiribella, D'Ariano and Perinotti~\cite{CDP09}, and
Hardy~\cite{Har11,Har12,Har15} (see also Hardy~\cite{Har05,Har07}), to
which we refer in the following using the term from \cite{CDP09},
namely \emph{quantum combs}. Quantum combs are a generalization of
\emph{random systems}~\cite{Mau02,MPR07} to quantum information
theory. A comb is a system with internal memory that processes
messages one at a time as they are received. An example is depicted in
\figref{fig:comb}, where the dashed lines capture the comb structure.

\begin{figure}[htb]
\begin{centering}
\begin{tikzpicture}[scale=1.2,
wire/.style={->,>=stealth,thick},
circle1/.style={draw,circle,fill,thick,inner sep=1.5pt}]

\node[circle1] (u1) at (0,0) {};
\node[circle1] (u2) at (2,0) {};
\node[circle1] (u3) at (4,0) {};
\node[circle1] (u4) at (6,0) {};

\node (v11) at (-.75,-1.1) {};
\node (v12) at (.75,-1.1) {};
\node (v21) at (1.25,-1.1) {};
\node (v22) at (2.75,-1.1) {};
\node (v31) at (3.25,-1.1) {};
\node (v32) at (4.75,-1.1) {};
\node (v41) at (5.25,-1.1) {};
\node (v42) at (6.75,-1.1) {};

\draw[wire] (u1) to (u2);
\draw[wire] (u2) to (u3);
\draw[wire] (u3) to (u4);

\draw[wire] (v11) to (u1);
\draw[wire] (v21) to (u2);
\draw[wire] (v31) to (u3);
\draw[wire] (v41) to (u4);
\draw[wire] (u1) to (v12);
\draw[wire] (u2) to (v22);
\draw[wire] (u3) to (v32);
\draw[wire] (u4) to (v42);

\draw[dashed,thick] (-.35,-1.2) to ++(0,1.6) to ++(6.7,0) to
++(0,-1.6) to ++(-.7,0) to ++(0,.8) to ++(-1.3,0) to ++(0,-.8) to
++(-.7,0) to ++(0,.8) to ++(-1.3,0) to ++(0,-.8) to ++(-.7,0) to
++(0,.8) to ++(-1.3,0) to ++(0,-.8) -- cycle;
\end{tikzpicture}

\end{centering}
\caption[Quantum comb]{\label{fig:comb}A single
  information\-/processing system modeled as a comb. The nodes
  represent an operation and the arrows capture a quantum state. Each
  tooth of the comb corresponds to a pair of an input and an output
  message.}
\end{figure}
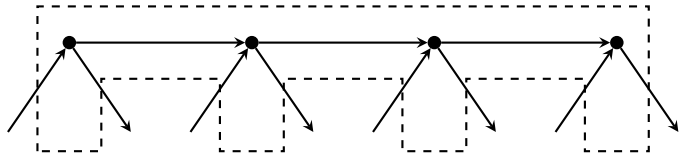

The quantum comb framework~\cite{GW07,Gut12,CDP09,Har11,Har12,Har15}
provides rules for representing these objects independently from their
internal state and composing them when the order of messages is
predefined. It also defines a notion of distance between combs. This
has found applications in modeling two player games~\cite{GW07} and
two player cryptographic protocols~\cite{DFPR14}.

\subsection{Ordering messages}
\label{sec:intro.order}

The composition of systems described as combs is however not always
well\-/defined. Consider the example drawn in \figref{fig:ndsystem}:
two players, Alice and Bob, each send a message to a third player,
Charlie, who outputs the first message he receives and ignores the
second. Each of the systems is a well\-/defined comb. Alice and Bob
just output a single message. When Charlie receives the first message,
$m = (v,p)$ \--- value $v$ from player $p$ \--- he outputs $v$ and
ignores all further inputs. But the composition of all three systems
(depicted as a dashed box in \figref{fig:ndsystem}) is not defined: it
is a system with no input and one output, but this output is
undetermined.

\begin{figure}[htb]
\begin{centering}

\begin{tikzpicture}
  \node[draw,thick,minimum height=1cm,minimum width=1.5cm] (A) at (0,1) {Alice};
  \node[draw,thick,minimum height=1cm,minimum width=1.5cm] (B) at (0,-1) {Bob};
  \node[draw,thick,minimum height=1cm,minimum width=1.5cm] (C) at (3,0)  {Charlie};
  \draw[->,>=stealth,thick] (A) to node[auto,sloped,pos=.02] {\tt Alice!} (C);
  \draw[->,>=stealth,thick] (B) to node[auto,sloped,pos=.85] {\tt Bob!} (C);
  \draw[->,>=stealth,thick] (C) to node[auto] {\tt Alice}
  node[auto,swap] {or {\tt Bob}?} (5.4,0);
   \node[fit=(A)(B)(C),dashed,draw,thick] {};
\end{tikzpicture}

\end{centering}
\caption[Order dependent system]{\label{fig:ndsystem}Alice and Bob
  both send messages to Charlie, who outputs the first message he
  receives. Although each system can be described by a comb, the
  composition of the three, depicted as the dashed box, is undefined.}
\end{figure}
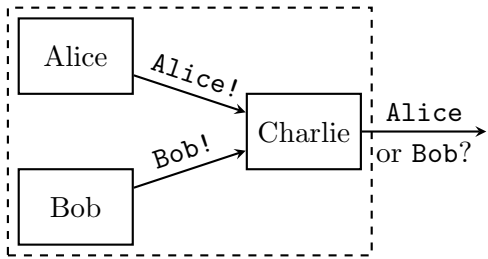

The composition of these three systems is undefined, because Alice's
and Bob's messages are unordered, yet the output of Charlie depends on
this order. However, if one considers physical systems, e.g., an
implementation of \figref{fig:ndsystem}, the composition is a new
well\-/defined physical system. Messages are output at a certain time,
which results in a well\-/defined order. This ordering information was
ignored in the descriptions of the systems given above, and results in
the ill\-/defined composition. Combs are well\-/suited for analysing
systems that have a predefined causal structure, e.g., ordered
networks~\cite{CDP09}, some simple two player games~\cite{GW07} and
two player cryptographic protocols~\cite{DFPR14}. But Chiribella et
al.~\cite{CDPV13} prove that their framework is ill\-/suited for
modeling settings where the causal structure is not predefined, e.g.,
when it is determined by an input or a coin toss.\footnote{To solve
  this, Chiribella et al.~\cite{CDPV13} propose a non\-/causal model
  of quantum information\-/processing systems, which we discuss
  briefly in \secref{sec:intro.related}.}

\subsection{Superpositions of orders}
\label{sec:intro.superposition}

In a quantum framework it is not sufficient to have all messages
(dynamically) ordered, it must also be possible to have messages in
superpositions of different orders. For example, a player might choose
to send a message $\ket{\psi}$, she might choose to send ``nothing''
\--- which we denote by a vacuum state $\vacuum$ \--- or she might
send a superposition of the two, i.e., she prepares and sends the
state
\begin{equation} \label{eq:superposition.1}
  \alpha\ket{\psi}+\beta\vacuum \,.\end{equation} A message could also be
in a superposition of sent to Alice and sent to Bob, i.e.,
\begin{equation} \label{eq:superposition.2} \alpha\ket{\psi}_A \tensor
  \vacuum_B + \beta \vacuum_A \tensor \ket{\psi}_B \,.\end{equation}
This results in a player receiving superpositions of different numbers
of messages in different orders. Since the value of a message can only
be influenced by a message that is ordered before, a superposition of
different orders of messages corresponds to a superposition of the
causal structure between these messages.

Such superpositions are not only a possibility offered by quantum
mechanics, but are also necessary to perform certain quantum
information\-/processing tasks. Consider the problem of designing a
circuit that performs a controlled unitary for an unknown $U$ given
only a single black\-/box access to $U$, which we illustrate in
\figref{fig:ctrlU1}. It has been proven in \cite{TGMV13,AFCB14} that
this task is impossible. And yet, adding control to an unknown unitary
can easily be implemented in practice, and has been done in
\cite{ZRKZPLO11}. As pointed out in \cite{AFCB14,FDDB14}, the
discrepancy between the two results stems from the assumption made in
the impossibility proofs \cite{TGMV13,AFCB14} that a wire in a circuit
models a (non\-/vacuum) state, whereas in a physical model one has the
freedom of not inputing anything to the black box \--- i.e., one does
not use the input wire \--- in which case no output is produced
either. The solution from \cite{ZRKZPLO11} consists in sending a
vacuum state on one of the wires and performing a controlled switch
between the vacuum state and the input. This is illustrated in
\figref{fig:ctrlU2}, where the wires are depicted as arrows to
emphasize that they have a different meaning from the wires in
\figref{fig:ctrlU1}. We refer to \cite{AFCB14,FDDB14} for a further
discussion of this.

\begin{figure}[tb]
\begin{subfigure}{\textwidth}
\begin{centering}
\hfill
\begin{tikzpicture}[
wire/.style={-,thick},
unitary2/.style={draw,thick,minimum width=.7cm,minimum height=2.1cm},
unitary1/.style={draw,thick,minimum width=.7cm,minimum height=.7cm}]

\node (l1) at (1.5,0) {};
\node[above right,yshift=-1] (zero) at (l1) {$\zero$};
\node (l2) at (1.5,-.7) {};
\node (l3) at (1.5,-1.4) {};
\node (r1) at (6.3,0) {};
\node (r2) at (6.3,-.7) {};
\node (r3) at (6.3,-1.4) {};

\node[unitary2] (u1) at (2.7,-.7) {$A$};
\node[unitary1] (u2) at (3.9,-1.4) {$U$};
\node[unitary2] (u3) at (5.1,-.7) {$B$};

\draw[wire] (l1) to (u1.west |- l1);
\draw[wire] (l2) to (u1.west |- l2);
\draw[wire] (l3) to (u1.west |- l3);
\draw[wire] (u1.east |- r1) to (u3.west |- r1);
\draw[wire] (u1.east |- r2) to (u3.west |- r2);
\draw[wire] (u1.east |- r3) to (u2.west |- r3);
\draw[wire] (u2.east |- r3) to (u3.west |- r3);
\draw[wire] (u3.east |- r1) to (r1);
\draw[wire] (u3.east |- r2) to (r2);
\draw[wire] (u3.east |- r3) to (r3);

\node[fit=(zero)(u3)(l3)(r3)] (all) {};
\node[above right] at (all.north west) {Construction:};
\end{tikzpicture}
\hfill
\begin{tikzpicture}[
wire/.style={-,thick},
circle1/.style={draw,circle,fill,thick,inner sep=1.5pt},
unitary1/.style={draw,thick,minimum width=.7cm,minimum height=.7cm}]

\node (l1) at (2.7,0) {};
\node[above right,yshift=-1] (zero) at (l1) {$\zero$};
\node (l2) at (2.7,-.7) {};
\node (l3) at (2.7,-1.4) {};
\node (r1) at (5.1,0) {};
\node (r2) at (5.1,-.7) {};
\node (r3) at (5.1,-1.4) {};

\node[unitary1] (u1) at (3.9,0) {$W_U$};
\node[circle1] (u2) at (3.9,-.7) {};
\node[unitary1] (u3) at (3.9,-1.4) {$U$};

\draw[wire] (l1) to (u1);
\draw[wire] (l2) to (u2);
\draw[wire] (l3) to (u3);
\draw[wire] (u1) to (r1);
\draw[wire] (u2) to (r2);
\draw[wire] (u3) to (r3);
\draw[wire] (u2) to (u3);

\node[fit=(zero)(u3)(l3)(r3)] (all) {};
\node[above right] at (all.north west) {Goal:};
\end{tikzpicture}
\hfill
\end{centering}
\caption[Impossibility]{\label{fig:ctrlU1}Impossibility of a $\ctrl U$: there do not
  exist any operators $A$ and $B$ such that for all $U$ the circuit on
  the left is equivalent to the circuit on the
  right~\cite{TGMV13,AFCB14}.}
\end{subfigure}

\vspace{9pt}

\begin{subfigure}{\textwidth}
\begin{centering}
\hfill
\begin{tikzpicture}[
wire/.style={-,thick},
wArrow/.style={->,>=stealth,thick},
circle1/.style={draw,circle,fill,thick,inner sep=1.5pt},
cross1/.style={inner sep=2pt},
unitary1/.style={draw,thick,minimum width=.7cm,minimum height=.7cm}]

\node (l1) at (1.5,0) {};
\node[above right,yshift=-1] (zero) at (l1) {$\vacuum$};
\node (l2) at (1.5,-.7) {};
\node (l3) at (1.5,-1.4) {};
\node (r1) at (6.3,0) {};
\node (r2) at (6.3,-.7) {};
\node (r3) at (6.3,-1.4) {};

\node[circle1] (c1) at (2.7,-.7) {};
\node[cross1] (x1) at (2.7,0) {};
\node[cross1] (x2) at (2.7,-1.4) {};
\node[unitary1] (u2) at (3.9,-1.4) {$U$};
\node[circle1] (c2) at (5.1,-.7) {};
\node[cross1] (x3) at (5.1,0) {};
\node[cross1] (x4) at (5.1,-1.4) {};

\draw[wArrow] (l1) to (r1);
\draw[wArrow] (l2) to (r2);
\draw[wArrow] (l3) to (u2.west |- r3);
\draw[wArrow] (u2.east |- r3) to (r3);

\draw[wire] (c1) to (x1.center);
\draw[wire] (c1) to (x2.center);
\draw[wire] (c2) to (x3.center);
\draw[wire] (c2) to (x4.center);

\draw[wire] (x1.north west) to (x1.south east);
\draw[wire] (x1.north east) to (x1.south west);
\draw[wire] (x3.north west) to (x3.south east);
\draw[wire] (x3.north east) to (x3.south west);
\draw[wire] (x2.north west) to (x2.south east);
\draw[wire] (x2.north east) to (x2.south west);
\draw[wire] (x4.north west) to (x4.south east);
\draw[wire] (x4.north east) to (x4.south west);

\node at (7,-.7) {\huge $=$};
\end{tikzpicture}
\begin{tikzpicture}[
wire/.style={-,thick},
wArrow/.style={->,>=stealth,thick},
circle1/.style={draw,circle,fill,thick,inner sep=1.5pt},
unitary1/.style={draw,thick,minimum width=.7cm,minimum height=.7cm}]

\node (l1) at (2.7,0) {};
\node[above right,yshift=-1] (zero) at (l1) {$\vacuum$};
\node (l2) at (2.7,-.7) {};
\node (l3) at (2.7,-1.4) {};
\node (r1) at (5.1,0) {};
\node (r2) at (5.1,-.7) {};
\node (r3) at (5.1,-1.4) {};

\node[circle1] (u2) at (3.9,-.7) {};
\node[unitary1] (u3) at (3.9,-1.4) {$U$};

\draw[wArrow] (l1) to (r1);
\draw[wArrow] (l2) to (r2);
\draw[wArrow] (l3) to (u3);
\draw[wArrow] (u3) to (r3);
\draw[wire] (u2) to (u3);
\end{tikzpicture}
\hfill
\end{centering}
\caption[Implementation]{\label{fig:ctrlU2}Implementation of a $\ctrl U$: the circuit
  on the left performs a controlled switch between the top and bottom
  wires before and after the unitary $U$, which is equivalent to the
  controlled unitary depicted on the
  right~\cite{ZRKZPLO11,AFCB14,FDDB14}. When nothing \--- i.e., the
  vacuum state $\vacuum$ \--- is input to $U$, nothing is output
  either.}
\end{subfigure}

\caption[Controlled unitary]{\label{fig:ctrlU}(Im)Possibility of implementing a $\ctrl U$.
  In the circuits in \subref{fig:ctrlU1} a wire \--- depicted as a
  line \--- corresponds to the space on which $U$ acts. In the
  circuits in \subref{fig:ctrlU2} a wire \--- depicted as an arrow
  \--- can also transmit a vacuum state, $\vacuum$.}
\end{figure}

Another example is given by the \emph{quantum switch}~\cite{CDPV13}:
using only a single black\-/box access to each of two systems
implementing some unknown unitaries $U$ and $V$, the quantum switch
applies these gates in (a controlled) superposition of the two
possible orders, i.e., on input
\begin{equation} \label{eq:qswitch.in} \left( \alpha \zero + \beta
    \one \right) \tensor \ket{\varphi}\,, \end{equation}
it outputs
\begin{equation} \label{eq:qswitch.out} \alpha \zero \tensor UV
  \ket{\varphi} + \beta \one \tensor VU \ket{\varphi}\,,\end{equation}
as illustrated in \figref{fig:qswitch}. This particular system \---
which can be physically implemented~\cite{FDDB14,ACB14,PMACADHRBW15},
but cannot be expressed using a classical ordering of
gates~\cite{CDPV13} \--- has been proven to be useful in reducing the
complexity of computational tasks~\cite{CDFP12,ACB14} and increasing
the success probability of discriminating between quantum states and
channels~\cite{Chi12,FAB15}. As an illustration of our framework we
show in \secref{sec:qswitch} how to describe the quantum switch using
the language of causal boxes that we develop in this work.

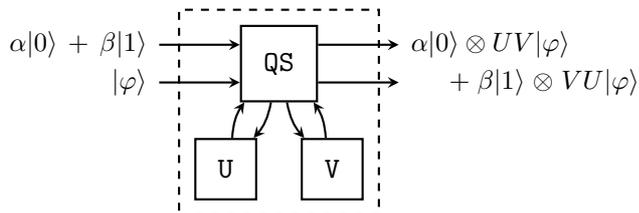
\begin{figure}[tb]
\begin{centering}
\begin{tikzpicture}[
wire/.style={-,thick},
wArrow/.style={->,>=stealth,thick},
unitary2/.style={draw,thick,minimum width=1cm,minimum height=1cm},
unitary1/.style={draw,thick,minimum width=.8cm,minimum height=.8cm}]

\node[inner sep=0] (l1) at (-1.6,.25) {};
\node[left,text width=3cm,align=right] at (l1) {\small $\alpha \zero + \beta \one$};
\node[inner sep=0] (l2) at (-1.6,-.25) {};
\node[left] at (l2) {\small $\ket{\varphi}$};
\node[inner sep=0] (r1) at (1.6,.25) {};
\node[right,text width=3cm,align=left] at (r1)
    {\small $\alpha \zero \tensor UV \ket{\varphi}$};
\node[inner sep=0] (r2) at (1.6,-.25) {};
\node[right,text width=3cm,align=right] at (r2)
    {\small ${}+\beta \one \tensor VU \ket{\varphi}$};

\node[unitary2] (qs) at (0,0) {$\QS$};
\node[unitary1] (u) at (-.7,-1.4) {$\QU$};
\node[unitary1] (v) at (.7,-1.4) {$\QV$};

\node[fit=(qs)(u)(v),thick,dashed,draw,inner sep=.2cm] (box) {};

\draw[wArrow] (l1) to (qs.west |- l1);
\draw[wArrow] (l2) to (qs.west |- l2);
\draw[wArrow,bend left=15] (qs) to (u);
\draw[wArrow,bend right=15] (qs) to (v);
\draw[wArrow,bend left=15] (u) to (qs);
\draw[wArrow,bend right=15] (v) to (qs);
\draw[wArrow] (qs.east |- r1) to (r1);
\draw[wArrow] (qs.east |- r2) to (r2);
\end{tikzpicture}

\end{centering}
\caption[Quantum switch]{\label{fig:qswitch}The quantum switch, $\QS$, queries first
  $\QU$ then $\QV$ or first $\QV$ then $\QU$ depending on the control qubit.}
\end{figure}

It is thus essential that a framework for modeling quantum information
processes can capture superpositions of different numbers of messages
arriving in superpositions of different orders. To the best of our
knowledge, no existing causal framework has such a feature. Physical
systems such as those from \figref{fig:ctrlU2} and
\figref{fig:qswitch}, and constructions from
\cite{ZRKZPLO11,AFCB14,FDDB14,Chi12,FAB15,CDFP12,ACB14,PMACADHRBW15}
cannot be modeled in existing frameworks. Furthermore, impossibility
results proven for combs~\cite{CDPV13} or a restricted circuit
model~\cite{TGMV13,AFCB14} do not hold anymore when we consider the
larger class of systems allowed by quantum mechanics. Similarly,
existing quantum composable security frameworks~\cite{Unr04,Unr10} can
only make security statements for a limited class of quantum
information\-/processing systems. They include an artificial classical
scheduler that systematically measures states such as
\eqref{eq:superposition.1} and \eqref{eq:superposition.2} to determine
who is the recipient of the next message, and can thus only model a
classical ordering of messages.\footnote{The composable security
  framework from \cite{BM04} has an additional restriction on the
  quantum systems that can be modeled in the framework: they must
  correspond to a set of ordered gates, which is essentially
  equivalent to an ordered network captured by a quantum
  comb~\cite{GW07,Gut12,CDP09,Har11,Har12,Har15}.}

\subsection{Causal boxes}
\label{sec:intro.boxes}

The main contribution of this work is to introduce \emph{causal boxes}
as an abstract model for discrete quantum information\-/processing
systems. Causal boxes include quantum
combs~\cite{GW07,Gut12,CDP09,Har11,Har12,Har15} and the models from
composable security frameworks~\cite{BM04,Unr04,Unr10} as special
cases. Crucially, they have the property of being closed under
composition. Naturally, composing two physical systems results in a
new physical system. The main challenge when modeling an abstraction
of physical systems is to identify the minimal characterization that
preserves this closure property without excluding any
systems. 

In a quantum comb~\cite{GW07,Gut12,CDP09,Har11,Har12,Har15}, the teeth
of the comb capture the predefined causal structure of the system. Any
input on a wire at one of the teeth can influence only outputs on
teeth that are further to the right in \figref{fig:comb}. Since we are
interested in systems which do not have such a predefined order,
causal boxes have a flat (unordered) interface, depicted in
\figref{fig:box1}. Instead, the messages themselves are ordered, each
one consisting of a pair $(v,t)$ of a value $v$ and a global order
$t \in \T$ from some partially ordered set $\T$. The parameter $t$
captures the order of the message with respect to other messages,
i.e., its position within $\T$.  A causal box processes inputs
according to this order: an output in position $t$ can only depend on
inputs arriving strictly before $t$ \--- that is, the box must satisfy
\emph{causality}. Having all messages assigned some order $t \in \T$
allows protocols that involve time to be naturally modeled by
interpreting $t$ as the time at which a message is sent or received. A
partial order on $\T$ may also be used to model space\-/time in a
relativistic setting. These applications are discussed further in
\secsref{sec:intro.applications} and \ref{sec:conclusion}.

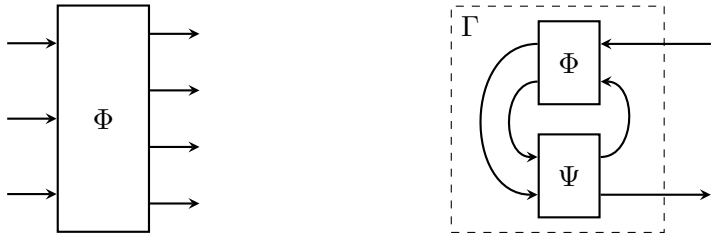
\begin{figure}[tb]
  \subcaptionbox[A single box]{\label{fig:box1}An abstract depiction of a causal
    box with a flat interface. Messages may arrive in any order \---
    or superpositions thereof \--- on the wires.}[.5\textwidth][c]{
\begin{tikzpicture}[
wArrow/.style={->,>=stealth,thick},
unitary2/.style={draw,thick,minimum width=1.2cm,minimum height=3cm}]

\node (l1) at (-1.4,1) {};
\node (l2) at (-1.4,0) {};
\node (l3) at (-1.4,-1) {};
\node (r1) at (1.4,1.125) {};
\node (r2) at (1.4,.375) {};
\node (r3) at (1.4,-.375) {};
\node (r4) at (1.4,-1.125) {};

\node[unitary2] (u) at (0,0) {$\Phi$};

\draw[wArrow] (l1) to (u.west |- l1);
\draw[wArrow] (l2) to (u.west |- l2);
\draw[wArrow] (l3) to (u.west |- l3);
\draw[wArrow] (u.east |- r1) to (r1);
\draw[wArrow] (u.east |- r2) to (r2);
\draw[wArrow] (u.east |- r3) to (r3);
\draw[wArrow] (u.east |- r4) to (r4);
\end{tikzpicture}}
\subcaptionbox[Composing causal boxes]{\label{fig:box2}The two
  causal boxes $\Phi$ and $\Psi$ are connected by their wires. This
  results in a new box $\Gamma$.}[.5\textwidth][c]{
\begin{tikzpicture}[
wArrow/.style={->,>=stealth,thick},
unitary2/.style={draw,thick,minimum width=.8cm,minimum height=1.1cm}]

\node (l1) at (-1.4,1) {};
\node (l2) at (-.9,0.5) {};
\node (l3) at (-.9,-0.5) {};
\node (l4) at (-1.4,-1) {};
\node (r1) at (2,1) {};
\node (r2) at (.9,0.5) {};
\node (r3) at (.9,-0.5) {};
\node (r4) at (2,-1) {};

\node[unitary2] (u1) at (0,.75) {$\Phi$};
\node[unitary2] (u2) at (0,-.75) {$\Psi$};
\node[draw,dashed,minimum width=2.8cm,minimum height=3cm] (u) at (-.15,0) {};
\node[below right] at (u.north west) {$\Gamma$};

\draw[wArrow] (u1.west |- l2) .. controls (l2) and (l3) .. (u1.west |- l3);
\draw[wArrow] (u1.west |- l1) .. controls (l1) and (l4) .. (u1.west |- l4);
\draw[wArrow] (u2.east |- r3) .. controls (r3) and (r2) .. (u1.east |- r2);
\draw[wArrow] (r1) to (u1.east |- r1);
\draw[wArrow] (u2.east |- r4) to (r4);
\end{tikzpicture}}
\caption[Causal boxes]{\label{fig:box}The input wires of causal boxes
  are unordered. Instead, the messages traveling on the wires are
  ordered and processed according to this order, regardless of the
  wire. An output generated in position $t$ can depend on any input
  that arrived strictly before $t$. Cycles are permitted in a network
  of boxes, since a connection between two systems can capture a
  physical link between an output and input port.}
\end{figure}

We provide a characterization of causal boxes which is independent of
their internal state. For example, a box which accepts only classical
inputs and produces only classical outputs would have a classical
description, regardless of whether it internally performs quantum
computation. We show how to compose two boxes, i.e., how to define the
new box resulting from connecting two boxes by their wires. This is
illustrated in \figref{fig:box2}. A connection between two boxes can
be thought of as a physical connection \--- a wire literally plugs one
box into another. Unlike for ordered networks modeled by
combs~\cite{GW07,Gut12,CDP09,Har11,Har12,Har15}, we allow cycles in
networks of causal boxes. In particular, if for a single causal box
$\Phi$ an input on a wire $A$ generates an output on a wire $B$, it is
perfectly legal to put a loop from $B$ to $A$. This does not create
any causality conflicts since messages are ordered and an output can
only depend on inputs that arrived before.

A wire \--- depicted by an arrow in \figref{fig:box} \--- does not
necessarily represent a single message. It is possible to send no
message, multiple messages, or any superposition thereof on a single
wire, e.g., states such as those of \eqnsref{eq:superposition.1} and
\eqref{eq:superposition.2} can be generated by a causal box. A
message can then also be in a superposition of arriving in position
$t_1$ and position $t_2$, and a causal box can process this without
breaking the superposition.

We also define a notion of distance between causal boxes as the
probability that a distinguisher\footnote{For simplicity, a
  distinguisher can be thought of as another causal box. Though the
  exact definition is slightly more powerful, see
  \secref{sec:metric}.} can correctly guess to which of two boxes it
is connected.  Considering different classes of distinguishers results
in different distance measures. For example, computationally bounded
or unbounded distinguishers allow for different notions of security to
be captured in cryptographic applications.

\subsection{Applications}
\label{sec:intro.applications}

\paragraph{Distributed systems.}
Causal boxes allow arbitrary distributed systems to be modeled, e.g.,
where multiple messages are sent simultaneously to different
subsystems and the order of their arrival determines the outcome. The
analysis of simple two player protocols can be performed with quantum
combs~\cite{GW07,DFPR14} as they have a predefined order \--- first
player $A$ sends a message, then player $B$, then $A$, etc. But if the
protocol involves more parties or if these players access some
external ressource, the order of the messages is not predetermined and
causal boxes need to be used to model these systems.

No-go theorems which are proven using a restricted computational model
that cannot capture every quantum information\-/processing system \---
such as those of \cite{CDPV13,TGMV13,AFCB14} \--- merely prove that
the model is inadequate to accomplish the desired task. To show that a
(quantum) information\-/theoretic task is impossible, a general
framework such as the one we propose must be employed.\footnote{Since
  our framework is \emph{causal}, an impossibility proof would
  naturally only be meaningful if non\-/causal systems are unphysical;
  see the discussion in \secref{sec:conclusion}.} For example, the
authors of \cite{FDDB14} suggest that if the provider of a black\-/box
unitary wishes to enforce that its system is not used to implement a
controlled unitary,\footnote{This can be essential in a cryptographic
  context: the unitary could contain sensitive information in the
  global phase, which can be accessed if it is used as a controlled
  unitary.} this may be achieved by including a photon number counter
in the box that breaks the superpositions between the vacuum and a
photon. Causal boxes can be used to prove whether such a construction
does indeed forbid using a black\-/box unitary in a controlled way.

\paragraph{Computation with superposition of causal orders.}
Although in most distributed protocols one might expect the players to
measure whether they receive a message or not \--- and thus break any
superposition between the causal order of events \--- certain
distributed systems cannot be implemented without preserving such a
superposition. A system which transforms any unitary into a controlled
unitary \--- which was discussed in \secref{sec:intro.superposition}
\--- is one such example. Furthermore, as shown in
\cite{CDFP12,ACB14,Chi12,FAB15}, there is a computational advantage in
implementing systems that preserve such a superposition of causal
orders. In those works, the quantum switch (introduced in
\secref{sec:intro.superposition}) was used to illustrate this
advantage. It is rather straightforward to describe the effect of
plugging the quantum switch into the systems $\QU$ and $\QV$ \--- this
corresponds to the dashed box from \figref{fig:qswitch} with
input\-/output behavior given by \eqnsref{eq:qswitch.in} and
\eqref{eq:qswitch.out}. However, the quantum switch itself, i.e., the
box $\QS$ in \figref{fig:qswitch}, cannot be described as a circuit or
comb~\cite{CDPV13}. Possible implementations of the quantum switch
using linear optics have been suggested~\cite{FDDB14,ACB14} and an
experimental realization has been performed~\cite{PMACADHRBW15}, which
effectively provide an implementation dependent description of the
switch. Non\-/causal descriptions of the quantum switch have also been
proposed~\cite{CDPV13,OG16,ABCFGB15}, but as shown by the
implementations~\cite{FDDB14,ACB14,PMACADHRBW15}, the quantum switch
is a causal system, and a non\-/causal description is thus quite
unsatisfactory.  To illustrate our framework, we provide in
\secref{sec:qswitch} an abstract, but mathematical, (implementation
independent) description of the quantum switch as a causal box.


\paragraph{Cryptography.}
Traditional quantum composable security
frameworks~\cite{BM04,Unr04,Unr10} can model only a classical ordering
of messages. The composition theorems of these frameworks are de facto
limited to this restricted class of systems. The Abstract
Cryptography~(AC) framework~\cite{MR11} (see \cite{PR14} for an
introduction to quantum AC) models cryptography as a resource theory:
a protocol constructs a resource (e.g., a secure channel) from another
resource (e.g., a secret key).  The AC framework treats resources as
abstract objects, i.e., it only demands that they have certain
properties, but does not prescribe a specific model for them.  By
instantiating the resources and protocols with causal boxes, one
immediately gets a fully quantum composable security framework that
can handle superpositions of orders of messages.

To model protocols that involve time, the ordered set $\T$ may be
interpreted as the time at which a message is sent or
received. Players can then be synchronized by clocks and use timeouts:
a player aborts or behaves differently if a message has not been
received by time $t$. This has been used, for example, in
authentication protocols to reduce the secret key
consumption~\cite{Mau13,Por14}. Relativistic cryptography uses the
fact that information cannot be transmitted faster than the speed of
light to implement tasks such as bit
commitment~\cite{Ken99,Ken12,KTHW13,LKBHTWZ14,AK15} and position
verification~\cite{BCFGGOS11,Unr14}. However, due to the lack of
cryptography frameworks that can handle time, these works use ad hoc
security definitions.\footnote{A framework that models circuits that
  are located in space\-/time is proposed in~\cite{Unr14}, but it only
  defines security for position\-/based verification and
  authentication. The issue of defining a generic (composable)
  security framework in space\-/time is not adressed in that work.} A
partial order on $\T$ captures relativistic space\-/time in a natural
way. This allows relativistic protocols to be modeled in our
framework. AC instantiated with causal boxes then provides composable
security definitions that are applicable to relativistic
cryptography. This is discussed further in \secref{sec:conclusion}.

If a distributed system does not provide information about the order
in which messages are received, it can be modeled as a family of
systems, where each element in the family corresponds to one possible
behavior of this system \--- which, ultimately, determines the
ordering of the messages (this is discussed further in
\secref{sec:conclusion}). Security then has to hold for every element
in the family. Explicitly modeling the set of possible behaviors of a
system removes the artificial concept of a scheduler that is present
in other frameworks~\cite{Unr04,Unr10}. Requiring that a protocol be
secure for all behaviors of the underlying systems is equivalent to
considering the worst case scheduling and has the advantage of being
sound in a model with multiple adversaries.\footnote{The scheduler is
  typically controlled by the adversary~\cite{Unr04,Unr10}, but if
  multiple (non\-/cooperating) adversaries are present, this type of
  scheduling is ill\-/defined.}  Scenarios involving mutually
distrustful dishonest players can be captured this
way~\cite{UM10,MR11,MM13}.

\subsection{Related work}
\label{sec:intro.related}

As already mentioned in this introduction, the quantum comb
framework~\cite{GW07,Gut12,CDP09,Har11,Har12,Har15} can model any
quantum information\-/processing system in which the order of the
messages is classical and predefined, e.g., a quantum circuit and many
two\-/player games. The systems models from quantum composable
security frameworks~\cite{Unr04,Unr10} allow a dynamical (but still
classical) ordering of messages. This is essential for modeling
cryptographic protocols, where it might be decided at runtime
(depending on an input or coin flip) if a message is sent.

The need for physical models that are not restricted to classical
orderings of messages was \--- to the best of our knowledge \--- first
noticed by Hardy, who developed a framework for probability theories
with indefinite causal structures~\cite{Har05,Har07,Har09,Har10}. This
motivated Oreshkov, Costa and Brukner~\cite{OCB12,Bru14a} to propose a
process matrix formalism for modeling quantum systems with indefinite
causal structures: quantum theory is assumed to be valid in local
laboratories, but no reference is made to any global causal relations
between the operations in the different laboratories. Various aspects
of the process matrix formalism have been developed further in
subsequent works, e.g., the multi-party setting~\cite{OG16,ABCFGB15},
extensible causality~\cite{OG16}, witnesses for causal
non\-/separability~\cite{ABCFGB15,Bra16}, violations of causal
inequalities~\cite{BW14,Bru14b,BAFCB16,BW16b}, classical systems with
indefinite causal structures~\cite{BFW14,BW16a}, and non\-/causal
circuits~\cite{BW16c}. Chiribella et al.~\cite{CDPV13} develop an
equivalent framework in which systems are modeled as higher\-/order
quantum transformations. Oreshkov and Cerf~\cite{OC16} consider an
even more general framework: they introduce a model with no predefined
time, even within the individual laboratories. This encompasses
previous work as special cases.

On one hand, this process matrix formalism and its derivatives are
more general than the current work, because they do not only capture
superpositions of causal structures, but also non\-/causal structures,
e.g., where a message going from $A$ to $B$ is both (causally) before
and after a message from $B$ to $A$. However, this added generality
makes it unsuited for modeling real world applications, e.g., the
behavior of players in a cryptographic protocol, since one might be
assigning them unphysical behaviors. On the other hand, these models
are more restricted than ours, as they do not allow messages to be
dynamically ordered, which is essential for many applications. This is
discussed further in \secref{sec:conclusion}.

Concurrently to this work, the same authors developed a theory of
deterministic systems~\cite{MMPRT16}, which exploits certain
properties specific to the classical case. This allows for a framework
that is both simpler and more general than what one obtains by
restricting the current work to deterministic systems. In particular,
the notion of causality from~\cite{MMPRT16} is less restrictive and
includes systems that can execute an infinite number of causal steps
in a finite amount of time (see \secref{sec:boxes.causality} for a
discussion of the causality definitions).

\subsection{Structure of this paper}
\label{sec:intro.structure}

In \secref{sec:theory} we give an overview of an abstract theory of
systems. The main body of this work consists in formalizing systems
which capture quantum information processes and satisfy this abstract
theory \--- namely causal boxes.  In \secref{sec:space} we model the
state space of a wire. This constitutes the input and output spaces of
causal boxes. Then in \secref{sec:boxes} we define causal boxes as
maps from the input wires to the output wires. In
\secref{sec:representation} we prove some lemmas on the Stinespring
and \cjreps of causal boxes, which serve as a general toolbox in the
next sections. We then treat the composition of causal boxes in
\secref{sec:network}, where we define operations for connecting boxes
by their wires. In \secref{sec:metric} we show how to define a notion
of distance on the space of causal boxes using distinguishers. In
\secref{sec:qswitch} we illustrate our framework by using it to model
the quantum switch. We conclude in \secref{sec:conclusion} with some
final remarks and open questions. An overview of the appendices is
given on \pref{app}.

\section{A theory of systems}
\label{sec:theory}

Our goal is to model discrete quantum information\-/processing
systems, such that a network of these systems is itself a valid
system. In the following we provide a precise formulation of this
desideratum. For this we describe a theory of systems which captures
the required properties on an abstract level. It serves as a guideline
for the main technical contribution of this paper
(Sections~\ref{sec:space} to \ref{sec:metric}), which can be
understood as instantiating the theory with causal boxes. In this, we
follow the top\-/down paradigm of Maurer and Renner~\cite{MR11}, which
consists in modeling objects at the highest possible level of
abstraction, and then proceeding downwards, introducing in each new
lower level only the minimal necessary specializations. An alternative
and more detailed formulation of this abstract theory of systems can
be found in \cite{MMPRT16}. A very similar approach has been used by
Hardy~\cite{Har13} to model composition of abstract physical
objects. Many of the axioms stated in this section and in
\cite{MMPRT16} can also be found in Hardy's work~\cite{Har13}, e.g.,
composition order independence.

On an abstract level an information\-/processing system is an object
that reads inputs and produces outputs. To every such object $\Phi$ we
assign a set of ports, $\ports(\Phi)$, through which messages are
sent and received. Let $\fS$ be the set of all
information\-/processing systems of interest. Two systems can be
composed by connecting some of their ports. Let $\Phi,\Psi \in \fS$
and let $P \subseteq \ports(\Phi) \times \ports(\Psi)$ be a set of
pairs of compatible ports, i.e., they have the same dimensions,
consist of one in- and one out-port and each port appears at most once
in $P$. The composition operation, which we write 
$\Phi \connect{P} \Psi$, must satisfy certain properties.

\paragraph{Closure.} The first property needed is
\emph{closure under composition}, namely, for any $\Phi,\Psi \in \fS$
and compatible pairs of ports
$P \subseteq \ports(\Phi) \times \ports(\Psi)$,
\begin{equation*} \label{eq:closure} \Phi \connect{P} \Psi \in
  \fS \,. \end{equation*}

In \thmref{thm:closure} we prove that this holds for causal boxes.

\paragraph{Composition order independence.} When drawing systems one
typically produces figures as in \figref{fig:associativity}: boxes
connected by wires. Such a picture illustrates the fact that the order
in which the systems are connected should not matter. This property is
called \emph{composition order independence}: for any
$\Phi_1,\Phi_2,\Phi_3 \in \fS$ and compatible pairs of ports
$P_{ij} \subseteq \ports(\Phi_i) \times \ports(\Phi_j)$,
\begin{equation} \label{eq:comp.order.indep} \left(\Phi_1
  \connect{P_{12}} \Phi_2\right) \connect{P_{13} \cup P_{23}} \Phi_3
= \Phi_1 \connect{P_{12} \cup P_{13}} \left( \Phi_2 \connect{P_{23}}
  \Phi_3 \right) \,.\end{equation}

\begin{figure}[tb]
\begin{centering}

\begin{tikzpicture}[scale=1.2]\small

  \def\t{.4}
  \def\u{.8}

  \node[draw,thick,minimum height=2.4cm,minimum width=1.2cm] (r) at (-2,0) {$\Phi_1$};
  \node[draw,thick,minimum height=1.2cm,minimum width=1.2cm] (s) at (0,.5) {$\Phi_2$};
  \node[draw,thick,minimum height=2.4cm,minimum width=1.2cm] (t) at (2,0) {$\Phi_3$};

  \node[inner sep=0] (A) at (r.east |- 0,.7) {};
  \node[inner sep=0] (B) at (r.east |- 0,.3) {};
  \node[inner sep=0] (C) at (r.east |- 0,-.3) {};
  \node[inner sep=0] (D) at (r.east |- 0,-.7) {};

  \draw[<-,>=stealth,thick] (A.center) to (s.west |- A);
  \draw[->,>=stealth,thick] (B.center) to node[auto,yshift=-2] {$P_{12}$} (s.west |- B);
  \draw[<-,>=stealth,thick] (C.center) to (t.west |- C);
  \draw[->,>=stealth,thick] (D.center) to  node[auto,yshift=-2] {$P_{13}$} (t.west |- D);
  \draw[<-,>=stealth,thick] (s.east |- A) to (t.west |- A);
  \draw[->,>=stealth,thick] (s.east |- B) to node[auto,yshift=-2] {$P_{23}$} (t.west |- B);

\end{tikzpicture}

\end{centering}
\caption[Composition order
independence]{\label{fig:associativity}Systems $\Phi_1$ and $\Phi_2$
  are connected by the pairs of ports $P_{12}$, $\Phi_1$ and $\Phi_3$
  are connected by $P_{13}$, and $\Phi_2$ and $\Phi_3$ are connected
  by $P_{23}$. The resulting system must be independent of the order
  of the connections.}
\end{figure}
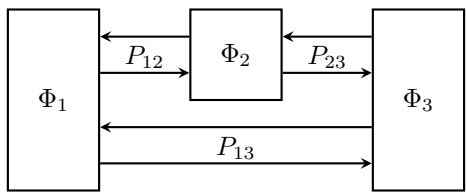

In \thmref{thm:comp.order.indep} we prove that
\eqnref{eq:comp.order.indep} is satisfied by the composition operation
on causal boxes.

\paragraph{Pseudo-metric.} It is often useful for a theory of systems
to provide a pseudo\-/metric on the space of systems $\fS$, namely a
function $d : \fS \times \fS \to \R^+$ such that for any
$\Phi,\Psi,\Gamma \in \fS$,
\begin{align*}
d(\Phi,\Phi) & = 0 \,, \\
d(\Phi,\Psi) & = d(\Psi,\Phi) \,, \\
d(\Phi,\Psi) & \leq d(\Phi,\Gamma) + d(\Gamma,\Psi) \,.
\end{align*}
If additionally  $d(\Phi,\Psi) = 0 \implies \Phi = \Psi$, then $d$ is a metric.

In \secref{sec:metric} we define a pseudo\-/metric on the space of
causal systems: the distinguishing advantage. This is defined in terms
of the probability that a distinguisher,\footnote{Formally, a
  distinguisher is another causal box that may additional put loops on
  the system to which it is plugged and outputs a bit corresponding to
  its guess.}  connected to one of two causal boxes $\Phi$ or $\Psi$,
can successfully guess with which causal box it is interacting. We
prove in \thmref{thm:metric} that this is indeed a pseudo\-/metric
\--- and if the set of all possible distinguishers is considered, the
distinguishing advantage is actually a metric.

\section{The space of partially ordered messages}
\label{sec:space}

\subsection{Ordering messages}
\label{sec:space.ordering}

In a network of information\-/processing devices, messages are
received and sent at a certain time. What is actually received (or
sent) by a system can be thought of as a pair $(v,t)$, where $v$ is
the message and $t$ the time at which it arrives (or the time at which
it is sent). All messages are then naturally ordered. If the
input\-/output behavior of a system is described without any ordering
information, one obtains situations as in \figref{fig:ndsystem}, where
the output of a composed system is undefined. However, time captures
more information than what is necessary to define an
ordering. 
For information\-/processing systems to be closed under composition,
it is sufficient for every message $v$ to be assigned a position in a
partially ordered set. We do this by modeling inputs and outputs of a
system as pairs $(v,t)$, where $v$ is a message and $t$ is an element
of some countable,\footnote{Since we consider discrete
  information\-/processing systems, it is natural that the input and
  output spaces also be discrete. $\T$ can be seen as an abstraction
  of some larger (physical) space, that contains extra (possibly
  uncountably many) points that are not relevant to the discrete
  systems modeled, since they are never used.} partially ordered set
$\T$, denoting the position\footnote{In the following we use
  ``position'' to emphasize the relative meaning of the partial order
  defined on $\T$, though conceptually it may be simpler to think of
  $t \in \T$ as the time at which a message is sent or received.} of
$v$ with respect to other messages.

To compose a system $\Phi$ in an arbitrary environment, where some
part of the environment might run before, after or between any events
occurring in the system $\Phi$, one would need to choose $\T$ such
that for any $t,t' \in \T$ with $t < t'$, there exist
$t_1,t_2,t_3 \in \T$ with $t_1 < t < t_2 < t' < t_3$, e.g., $\T = \Q$.
This could be needed for example in a cryptographic setting. For
modeling synchronized systems which are divided in rounds, $\T = \N$
might be sufficient. For defining the quantum switch depicted in
\figref{fig:qswitch}, if we are not modeling composition with any
other systems, then it is enough to choose $\T = \{1,\dotsc,6\}$,
since exactly $6$ messages are produced by these systems (see
\secref{sec:qswitch} for details, where the quantum switch is modeled
as a causal box). In this framework we do not specify the set $\T$ any
further than by its countable and partially ordered properties. It is
up to each application to define what is needed.

To model quantum messages, we simply define a Hilbert space with a
basis given by $\{(v,t)\}_{v,t}$. This is done in
\secref{sec:space.time}. As explained in
\secref{sec:intro.superposition}, the space of a wire should also
contain an element representing no message, multiple messages or any
superposition thereof. In the classical case this corresponds to the
space of all multisets of pairs $(v,t)$, i.e., a wire might transmit
nothing $\{\}$, one message $\{(v,t)\}$, two messages
$\{(v_1,t_1),(v_2,t_2)\}$, etc. In the quantum case, this corresponds
to the bosonic Fock space. We define the exact space of a wire and its
main properties in \secref{sec:space.wires}.

\begin{rem}[Terminology: wires and systems]
  \label{rem:terminology}
  It is standard in quantum information theory for the word ``system''
  to refer to the label of a Hilbert space, e.g., a (bipartite) system
  $AB$ is in a state $\rho_{AB}$. Since in this work the objects of
  study are not quantum states but objects that receive and send
  quantum states, we denote these interactive objects as
  (information\-/processing) \emph{systems}. We use the terms
  \emph{wire} and \emph{sub-wire} to label the input and output
  Hilbert spaces of a system, e.g., a causal box $\Phi$ might be a
  system that maps states from a (bipartite) input wire $AB$ to a
  (bipartite) output wire $CD$.
\end{rem}

\subsection{Single message space}
\label{sec:space.time}

As introduced in \secref{sec:space.ordering}, an input to a system is
a pair of a message $v \in \cV$ and a position $t \in \T$. The
corresponding quantum state is an element of a Hilbert space with an
orthonormal basis given by $\{\ket{v,t}\}_{v \in \cV, t \in T}$. For a
finite $\cV$ and infinite $\T$, this Hilbert space corresponds to
\begin{equation} \label{eq:hilbertspace}
  \Ltwo{\T}{\C^{|\cV|}}\,,\end{equation} where
$\LtwoOp(\T) = \{ (x_t)_{t \in \T} : x_t \in \C, \norm{x} <
\infty\}$
is the sequence space with bounded $2$\=/norm with
$\norm{x} = \sqrt{\braket{x}{x}}$ and\footnote{For the scalar
  product $\braket{x}{y}$ to be defined for arbitrary sequences
  $(x_t)_{t \in \T}$ and $(y_t)_{t \in \T}$, one would need to specify
  the order of the summation in \eqnref{eq:scalarproduct}. But for the
  set of sequences with bounded $2$\=/norm that define the Hilbert
  space $\LtwoOp(\T)$, all orders result in the same scalar product,
  so we omit
  it.}\begin{equation} \label{eq:scalarproduct}\braket{x}{y} = \sum_{t
    \in \T} \overline{x_t}y_t\,. \end{equation} In the following we
refer to a state $\ket{\psi} \in \LtwoTC$ as a qudit with position
information. The orthonormal basis
$ \{ \ket{v,t}\}_{v \in \cV, t \in \T }$ is given by
$\ket{v,t}= \ket{v} \tensor \ket{t}$, where
$\{\ket{v}\}_{v \in \cV}$ is an orthonormal basis of $\C^{|\cV|}$ and
$\ket{t}$ is the sequence with a $1$ in position $t \in \T$ and all
other elements of the sequence are $0$.

In the following we write $\ket{v_t}$ instead of $\ket{v,t}$ for the
message $\ket{v}$ arriving in position $t$. For example, the state
$\alpha \ket{0_x} + \beta\ket{1_x}$ corresponds to a message in a
superposition of $\zero$ and $\one$ arriving in position $x$. And the
state $\alpha \ket{0_x} + \beta\ket{0_y}$ corresponds to the message
$\zero$ in a superposition of arriving in positions $x$ and $y$.

\subsection{Wires}
\label{sec:space.wires}

To capture multiple qudits being sent on the same wire to some system,
we define the Hilbert space of a wire as a Fock space. For a Hilbert
space $\hilbert$, the corresponding bosonic Fock space is given by
\begin{equation*} 
  \fock{\hilbert} \coloneqq
  \bigoplus_{n = 0}^\infty \vee^n \hilbert\,, \end{equation*} where
$\vee^n \hilbert$ denotes the symmetric subspace of
$\hilbert^{\tensor n}$, and $\hilbert^{\tensor 0}$ is the one
dimensional space containing the vacuum state $\vacuum$. This is
explained in more detail in \appendixref{app:fock}. Applying this to
the Hilbert space of a qudit with position information from
\eqnref{eq:hilbertspace}, we get
\begin{equation} \label{eq:quditfockspace}
\fock{\LtwoTC} = \bigoplus_{n = 0}^\infty \vee^n \left(\LtwoTC\right)\,.
\end{equation}
The orthogonal subspaces $\vee^n \left(\LtwoTC\right)$ for
$n \in \N_0$ capture $n$ messages being sent on a wire. The
restriction to the symmetric space guarantees that there is no order
amongst the qudits other than what might be defined from their state,
e.g., their position in $\T$. For example, if a wire contains two
copies of a state $\zero$ in position $t$, it would be in the state
$\ket{0_t} \tensor \ket{0_t}$. If it contains two copies of $\zero$ at
different positions $t_1$ and $t_2$, it would be in the state
$\frac{1}{\sqrt 2}\left( \ket{0_{t_1}} \tensor \ket{0_{t_2}} +
  \ket{0_{t_2}} \tensor \ket{0_{t_1}} \right)$.
When modeling concrete systems, one may always restrict the model by
adding constraints, e.g., one may consider only systems that produce
at most one message or never produce multiple messages at the same
position $t \in \T$.

Let $A$ denote a wire that carries
$d_A$\=/dimensional messages. We write $\cF^{\T}_A$ for the
corresponding state space, namely
\[\cF^{\T}_A \coloneqq \fock{\Ltwo{\T}{\C^{d_A}}}\,.\]
The subscript allows the spaces of different wires to be
distinguished, e.g., two wires $A$ and $B$ have joint space
$\cF^{\T}_A \tensor \cF^{\T}_B$, which we also write $\cF^{\T}_{AB}$.
For any Hilbert spaces $\hilbert_A$ and
$\hilbert_B$,
\begin{equation} \label{eq:fockisomorphism} \fock{\hilbert_A} \tensor
  \fock{\hilbert_B} \cong \fock{\hilbert_A \oplus
    \hilbert_B}\,,\end{equation} where the isomorphism preserves the
meaning associated with the bases of the Fock spaces, i.e., a tensor
product of two vacuum states on the left in
\eqnref{eq:fockisomorphism} is mapped to a vacuum state on the right,
a tensor product of a vacuum state and one message on the left is
mapped to a single message with the same value and position on the
right, etc.\ (see \remref{rem:fockisomorphism}
in \appendixref{app:fock} for the exact isomorphism). From
\eqnref{eq:fockisomorphism} and the fact that
\[ \left(\Ltwo{\T}{\C^{d_A}}\right) \oplus
\left(\Ltwo{\T}{\C^{d_B}}\right) \cong \Ltwo{\T}{\left(\C^{d_A} \oplus
    \C^{d_B}\right)}\,, \]
one can see that $\cF^{\T}_{AB}$ also satisfies
\eqnref{eq:quditfockspace}, i.e., it can be interpreted as the state
space of a single wire carrying messages of dimension $d_A+d_B$. The
converse also holds: any wire $A$ of messages of dimension $d_A$ can
be split in two sub-wires $A_1$ and $A_2$ of messages of dimensions
$d_{A_1}+d_{A_2} = d_A$. We then have
\begin{equation} \label{eq:splitspace}
\cF^{\T}_{A} \cong \cF^{\T}_{A_1} \tensor \cF^{\T}_{A_2}\,.
\end{equation}
Thus, a (sub-)wire $A_1$ \--- or more precisely, the corresponding
message subspace $\C^{d_{A_1}}$ \---may be regarded as a way of
labeling a subspace of $\C^{d_A}$, which is useful to specify how
causal boxes are connected with each other (e.g., messages in the
sub-wire corresponding to $\C^{d_{A_1}}$ are connected to one causal
box and those in $\C^{d_{A_2}}$ are connected to another, see
\secref{sec:network}). In the following we refer to the dimension of
the messages of a wire as the dimension of the wire, e.g., if we say
that the wire $A$ has dimension $d_A$, then
$\cF^{\T}_A = \fock{\Ltwo{\T}{\C^{d_A}}}$.

For any subset $\cP \subseteq \T$, we write
\begin{equation}
\label{eq:partialtime}
\cF^{\cP}_A \coloneqq \fock{\Ltwo{\cP}{\C^{d_A}}}
\end{equation} for the space of
all states occurring at some position in $\cP$, e.g.,
\[
\cF^{\leq t}_A = \fock{\Ltwo{\T^{\leq t}}{\C^{d_A}}}\,, 
\]
where $\T^{\leq t} \coloneqq \{s \in \T : s \leq t\}$. From
\eqnref{eq:fockisomorphism} and
\[\Ltwo{\T}{\C^{d_A}} \cong \left(\Ltwo{\cP}{\C^{d_A}}\right) \oplus
\left(\Ltwo{\widetilde{\cP}}{\C^{d_A}}\right)\,,\]
where $\widetilde{\cP} \coloneqq \T \setminus \cP$, we have
\begin{equation} \label{eq:splittime}
\cF^{\T}_A \cong \cF^{\cP}_A \tensor \cF^{\widetilde{\cP}}_A \,.
\end{equation}
This can be interpreted as splitting a wire $A$ into two sub-wires
carrying the messages that are in positions in $\cP$ and
$\widetilde{\cP}$, respectively.

A natural embedding of $\cF^{\cP}_A$ in $\cF^{\T}_A$ is obtained by
appending ``nothing'' to $\cF^{\cP}_A$, i.e.,
\begin{equation} \label{eq:notation.embedding} \cF^{\cP}_A \cong
  \cF^{\cP}_A \tensor \vacuum^{\widetilde{\cP}}_A \subseteq \cF^{\T}_A
  \,, \end{equation} where $\vacuum^{\widetilde{\cP}}_A$ denotes the
one dimensional subspace of $\cF^{\widetilde{\cP}}_A$ that contains
the vacuum state. Throughout this work we use $\cF^{\cP}_A$ to denote
both the space defined in \eqnref{eq:partialtime} as well as its
embedding in $\cF^{\T}_A$ given on the right\-/hand side of
\eqnref{eq:notation.embedding}.

\section{Defining causal boxes}
\label{sec:boxes}

Intuitively, a causal box is a transformation from an input message
space to an output message space that satisfies causality. We
introduce these two aspects \--- the transformation and notion of
causality \--- separately. First, in \secref{sec:boxes.set} we
formalize the notion of a transformation from the input to the output
space as a set of maps. In \secref{sec:boxes.cuts} we define
terminology that provides us with a more convenient representation of
these maps. Then in \secref{sec:boxes.causality} we introduce the
notion of causality that these maps must satisfy in order to be a
valid causal box. We put these two aspects together in
\secref{sec:boxes.definition}, where we give the formal definition of
a causal box. Finally, in \secref{sec:boxes.subnormalized} we define
subnormalized causal boxes.

\subsection{A set of maps}
\label{sec:boxes.set}

Let $\cF^\T_X$ and $\cF^\T_Y$ denote the Hilbert spaces of an input
wire $X$ and output wire $Y$. And let $\tcop{\cF^\T_X}$ and
$\tcop{\cF^\T_Y}$ be the corresponding sets of trace class
operators.\footnote{$V \in \tcop{\hilbert}$ if
  $\trnorm{V} = \sum_i \bra{i} \sqrt{\hconj{V}V} \ket{i} < \infty$,
  where $\{\ket{i}\}$ is an orthonormal basis of $\hilbert$. For
  example, a density operator on a wire $A$ is a non\-/negative
  self\-/adjoint operator $\rho \in \tcop{\cF^{\T}_A}$ with
  $\tr \rho = 1$.} A causal box with these input and output wires can
be thought of as a transformation from the message space of $X$ to the
message space of $Y$. But instead of defining it as a map
$\Phi : \tcop{\cF^\T_X} \to \tcop{\cF^\T_Y}$, a causal box is given by
a set of mutually consistent\footnote{\emph{Mutually consistent} means
  that any two maps $\Phi^{\leq t}$ and $\Phi^{\leq u}$ with
  $t \leq u$ must produce the same output on $\T^{\leq t}$, i.e.,
  $\Phi^{\leq t} = {\tr_{\nleq t}} \circ \Phi^{\leq u}$, where
  $\tr_{\nleq t}$ traces out messages in positions
  $\T^{\leq u} \setminus \T^{\leq t}$. This is introduced formally in
  \secref{sec:boxes.definition}.} maps
\begin{equation} \label{eq:boxes.set} \Phi = \left\{\Phi^{\leq t} :
    \tcop{\cF^\T_X} \to \tcop{\cF^{\leq t}_Y}\right\}_{t \in
    \T}\,,\end{equation} where $\cF^{\leq t}_Y$ is the subspace of
$\cF^\T_Y$ that contains only messages in positions
$\T^{\leq t} = \{p \in \T : p \leq t\}$. This allows systems to be
included that produce an unbounded number of messages on the entire
set $\T$. For example, let $\T = \N$ and consider a beacon system that
outputs a state $\zero$ at every point $t \in \N$.  This is
well\-/defined on every subset $\{1,\dotsc,t\}$, but the limit
behavior is not, as it would consist in a box that outputs an
``infinite tensor product'' of $\zero$.

\begin{rem}[Finite causal boxes]
\label{rem:single}
In \appendixref{app:finite} we show how to define the subset of causal
boxes that can be represented by a single map
$\Phi : \tcop{\cF^\T_X} \to \tcop{\cF^\T_Y}$ and which is closed under
composition. This roughly corresponds to the set of systems that stop
processing inputs after some point $t_{\max} \in \T$ \--- or after
some set of unordered points $T_{\max} \subseteq \T$ in the case of a
partial order on $\T$.
\end{rem}

\subsection{Cuts}
\label{sec:boxes.cuts}

A map $\Phi^{\leq t}$ from \eqnref{eq:boxes.set} produces outputs on
the subset of positions $\T^{\leq t} \subseteq \T$, which contains all
points $p \leq t$. This subset allows the behavior of a system up to a
certain position $t$ to be defined. More generally, we are interested
in describing the behavior of systems on subsets of positions
$\cC \subseteq \T$ that are not (necessarily) upper bounded by a
single point $t$, but by a set $\cP$, e.g.,
$\cC = \cup_{p \in \cP} \T^{\leq p}$. We refer to such a subset of
$\T$ as a \emph{cut}, which we illustrate in \figref{fig:cut}.

\begin{figure}[tb]
\begin{centering}

\begin{tikzpicture}[
wire/.style={->,>=stealth,thick},
point/.style={fill,draw,circle,inner sep=1.5pt}]

  \def\x{1.2}

  \node[point] (t1) at (0,0) {};
  \node[point] (t2) at (0,-\x) {};
  \node[point] (t3) at (0,-2*\x) {};
  \node[point] (t4) at (\x,0) {};
  \node[point] (t5) at (\x,-\x) {};
  \node[point] (t6) at (\x,-2*\x) {};
  \node[point] (t7) at (2*\x,-\x/2) {};
  \node[point] (t8) at (2*\x,-3*\x/2) {};
  \node[point] (t9) at (3*\x,-\x) {};
  \node[point] (t10) at (3*\x,-2*\x) {};
  \node[point] (t11) at (4*\x,-\x/2) {};

  \draw[wire] (t1) to (t4);
  \draw[wire] (t2) to (t5);
  \draw[wire] (t3) to (t6);
  \draw[wire] (t4) to (t7);
  \draw[wire] (t5) to (t7);
  \draw[wire] (t5) to (t8);
  \draw[wire] (t6) to (t8);
  \draw[wire] (t7) to (t11);
  \draw[wire] (t8) to (t9);
  \draw[wire] (t8) to (t10);
  \draw[wire] (t9) to (t11);

  \node[inner sep=0] (p1) at (3*\x/2,\x/2) {};
  \node[inner sep=0] (p2) at (3*\x/2,-5*\x/2) {};
  \node[below left] at (p1) {$\cC$};
  \node[inner sep=0] (p3) at (5*\x/2,\x/2) {};
  \node[inner sep=0] (p4) at (5*\x/2,-5*\x/2) {};
  \node[below left] at (p3) {$\cD$};

  \draw[dashed,thick] (p1) to (p2);
  \draw[dashed,thick] (p3) to (p4);

  \node[fit=(t1)(t3)(t11),inner sep=.6cm,thick,draw] (T) {};
  \node[above left] at (T.north east) {$\T$};

\end{tikzpicture}

\end{centering}
\caption[Cuts]{\label{fig:cut}A partially ordered set $\T$ with two
  cuts $\cC \subseteq \cD \subseteq \T$ containing $6$ and $8$ points,
  respectively.}
\end{figure}
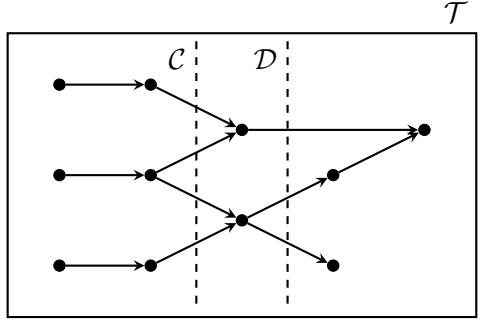

\begin{deff}[Cut]
\label{def:cut}
Let $\T$ be a partially ordered set. We define a \emph{cut of $\T$} as
any subset $\cC \subseteq \T$ such that
\[ \cC = \bigcup_{t \in \cC} \T^{\leq t}\,,\]
where $\T^{\leq t} = \{p \in \T : p \leq t\}$. We say that a cut $\cC$
is \emph{bounded} if there exists a point $t \in \T$ such that
$\cC \subseteq \T^{\leq t}$. We denote the set of all cuts of $\T$ as
$\cut$ and the set of all bounded cuts as $\bcut$.
\end{deff}

Although it is sufficient to describe a causal box by a set of maps as
in \eqnref{eq:boxes.set}, we will actually define it as a set of
(mutually consistent) maps (that respect causality),
\[\Phi = \left\{\Phi^\cC : \tcop{\cF^\T_X} \to \tcop{\cF^{\cC}_Y}\right\}_{\cC \in
  \bcut}\,.\]
These can be derived from \eqnref{eq:boxes.set} by setting
$\Phi^{\cC} \coloneqq {\tr_{\widetilde{\cC}}} \circ \Phi^{\leq t}$,
where $t$ is any point such that $\cC \subseteq \T^{\leq t}$,
$\widetilde{\cC} \coloneqq \T^{\leq t} \setminus \cC$ and
$\tr_{\widetilde{\cC}}$ trace out all messages in positions in
$\widetilde{\cC}$.

\subsection{Causality}
\label{sec:boxes.causality}

To be valid, an information\-/processing system must respect
\emph{causality}: an output can only depend on past inputs. For a
totally ordered set $\T$ one can formalize this by requiring that for
every causal box there exist a monotone function $\chi : \T \to \T$
such that the output up to position $t$ can be computed from the input
up to position $\chi(t) < t$. In the case of a partially ordered $\T$,
there might be many unordered points $p < t$ that are needed to
compute the output up to position $t$. We thus define the causality
function on cuts, $\chi : \cut \to \cut$, and require that an output
on $\cC \in \bcut$ can be computed from the input on
$\chi(\cC) \subsetneq \cC$.

To be consistent, if the output on $\cC$ can be computed from
$\chi(\cC)$ and the output on $\cD$ computed from $\chi(\cD)$, then
the output on $\cC \cup \cD$ can be computed from
$\chi(\cC) \cup \chi(\cD)$. So we require
$\chi(\cC \cup \cD) = \chi(\cC) \cup \chi(\cD)$.\footnote{This follows
  immediately if one defines a function $\hat{\chi} : \T \to \cut$
  such that the output on $\T^{\leq t}$ can be computed from
  $\hat{\chi}(t)$ and one sets
  $\chi(\cC) \coloneqq \bigcup_{t \in \cC} \hat{\chi}(t)$.}
Furthermore, we also expect that if $\cC \subseteq \cD$, then
$\chi(\cC) \subseteq \chi(\cD)$, because if $\chi(\cC)$ is needed to
compute the output on $\cC$, then certainly it is needed for the
output on $\cD \supseteq \cC$.

These conditions are however not sufficient to guarantee that we have
well\-/defined systems. Consider for example a system with
$\T = \Q^+$, that for every input received in position $1-t$, for
$0 < t \leq 1$, produces an output in position $1-t/2$, i.e.,
$\chi([0,1-t/2]) = [0,1-t]$. Furthermore, this system initially
outputs a message in position $0$. If now the messages output are
looped back to the input, this system should produce messages at
points $\{0,1/2,3/4,7/8,15/16,\dotsc\}$. It would effectively output
an infinite number of messages before position $t=1$, which is
ill\-/defined. The problem here is that the gap between every input
and correlated output gets smaller and smaller as $t \to 1$, and this
system executes an infinite number of (causal) steps before reaching
$t=1$. We thus additionally need that every point $t' \in \T$ may be
reached from any point $t \leq t'$ in a finite number of causal
steps. By definition, an output on $\T^{\leq t'}$ may be directly
computed from an input on $\chi(\T^{\leq t'})$. Given an input on
$\chi^2(\T^{\leq t'})$ on may compute the output on $\T^{\leq t'}$ in
two steps: one first obtains the output on $\chi(\T^{\leq t'})$, then
given the input on $\chi(\T^{\leq t'})$ one gets the output on
$\T^{\leq t'}$. In this sense, a point $t'$ is reached from $t$ in a
finite number of steps of there exists an $n$ such that
$t \notin \chi^n(\T^{\leq t'})$, which is captured in
\defref{def:causality.function} by \eqnref{eq:causality.finite}.

\begin{deff}[Causality function]
\label{def:causality.function}
A function $\chi : \cut \to \cut$ is a \emph{causality function} if it
satisfies the following conditions.
\begin{align}
  \forall \cC,\cD \in \cut, \quad & \chi(\cC \cup \cD) = \chi(\cC) \cup \chi(\cD)\,, \label{eq:causality.homomorphism} \\
  \forall \cC,\cD \in \cut, \quad & \cC \subseteq \cD \implies \chi(\cC) \subseteq \chi(\cD)\,, \label{eq:causality.monotone} \\
  \forall \cC \in \bcut \setminus \{\emptyset\}, \quad & \chi(\cC) \subsetneq \cC\,, \label{eq:causality.decreasing} \\
  \forall \cC \in \bcut, \forall t \in \cC, \exists n \in \N, \quad & t \notin \chi^n\left(\cC\right)\,, \label{eq:causality.finite}
\end{align} where $\chi^n$ denotes $n$ compositions of
$\chi$ with itself, $\chi^n = \chi \circ \dotsb \circ \chi$.
\end{deff}

For example, if $\T = \Q$, a minimum delay $\delta$ between every
input and correlated output is sufficient to satisfy
\defref{def:causality.function}. Our definition is thus a
generalization of \emph{delta\-/causality}~\cite{LS98}, which requires
exactly such a delay $\delta$. Our definition is however more
restrictive than the notion of causality used in \cite{MMPRT16}, which
simply requires every output to depend only on inputs received
strictly earlier, thus allowing an infinite number of outputs to be
produced in a finite amount of time.\footnote{This is achieved by
  restricting the systems considered to classical deterministic
  systems, for which an infinite string of bits is well\-/defined (as
  opposed to an infinite tensor product of qubits, which is not).}

\begin{rem}[Causality in totally ordered systems]
\label{rem:infcondition}
In the case of a totally ordered $\T$, instead of
\eqnref{eq:causality.finite} one may alternatively require that for
$\chi : \T \to \T$ the strict inequality $\chi(t) < t$ must also hold
in the limit as $t \to t_0$, i.e.,
$\inf_{t > t_{0}} \chi(t) < \inf_{t > t_{0}} t$ and
$\sup_{t < t_{0}} \chi(t) < \sup_{t < t_{0}} t$. This is discussed
in \appendixref{app:total}, where we prove that it implies
\eqnref{eq:causality.finite}.
\end{rem}

\subsection{The definition of causal boxes}
\label{sec:boxes.definition}

We can now formally define a causal box.

\begin{deff}[Causal box]
  \label{def:quantumbox} 
  A \emph{$(d_X,d_Y)$\-/causal box} $\Phi$ is a system with input
  wire $X$ and output wire $Y$ of dimension $d_X$ and $d_Y$, defined
  by a set of mutually consistent, completely positive,
  trace\-/preserving (CPTP) maps
\[\Phi = \left\{\Phi^\cC : \tcop{\cF^\T_X} \to \tcop{\cF^{\cC}_Y}\right\}_{\cC \in
  \bcut}\,,\] i.e., for all $\cC,\cD \in \bcut$ with $\cC \subseteq \cD$,
\begin{equation} \label{eq:consistency}
  \Phi^\cC = {\tr_{\cD \setminus \cC}} \circ \Phi^\cD\,,
\end{equation} 
where $\tr_{\cD \setminus \cC}$ traces out the messages occurring at
positions in $\cD \setminus \cC$.\footnote{As noted in
  \eqnref{eq:splittime},
  $\cF^{\cD}_Y \cong \cF^{\cC}_Y \tensor \cF^{\cD\setminus\cC}_Y$.}
Furthermore, these maps must respect causality: there must exist a function
$\chi : \cut \to \cut$ satisfying \defref{def:causality.function} such
that for all $\cC \in \bcut$,\footnote{\eqnref{eq:causality} uses the
  embedding of $\cF^{\chi(\cC)}_X$ in $\cF^{\T}_X$ so that the output
  space of $\tr_{\T \setminus \chi(\cC)}$ matches the input space of
  $\Phi^\cC$.}
  \begin{equation}
    \label{eq:causality} \Phi^\cC = \Phi^\cC \circ \tr_{\T \setminus \chi(\cC)}\,.
  \end{equation}
\end{deff}

\eqnref{eq:causality} says that the output on positions in $\cC$ can
be computed from the input on positions $\chi(\cC) \subsetneq \cC$,
i.e., the output on $\cC$ can only depend on inputs that are before,
namely $\chi(\cC)$. This equation may be alternatively expressed as
requiring that $\Phi^\cC = \hat{\Phi}^\cC \tensor \tr$, where
$\hat{\Phi}^\cC$ is some map
$\tcop{\cF^{\chi(\cC)}_X} \to \tcop{\cF^{\cC}_Y}$ and $\tr$ acts on
the space $\cF^{\T \setminus \chi(\cC)}_X$. Using this,
\defref{def:quantumbox} can be written more compactly as a set of CPTP
maps
\[\Phi = \left\{\Phi^\cC : \tcop{\cF^{\chi(\cC)}_X} \to
  \tcop{\cF^{\cC}_Y} \right\}_{\cC \in \bcut}\,,\]
such that for all $\cC,\cD \in \bcut$, $\cC \subseteq \cD$,
\begin{equation} 
  \label{eq:combined}
  {\tr_{\cD \setminus \cC}} \circ \Phi^\cD = \Phi^\cC \circ \tr_{\T \setminus \chi(\cC)}\,.
\end{equation} In the following we will often use this form.

\begin{rem}[Ports]
\label{rem:ports}
From \eqnref{eq:splitspace} we know that a wire can be split into a
tensor product of sub-wires. It is thus sufficient to define causal
boxes with one index $X$ ($Y$) for the input (output) wires, as these
can be subdivided as needed. For example, $Y$ could consist of two
sub-wires $Y_1$ and $Y_2$, which are then connected to different
systems $\Psi$ and $\Gamma$. The ports of an information\-/processing
system, $\ports(\Phi)$, introduced in \secref{sec:theory} to define
composition of systems, correspond to a predefined partition of the
input and output wires in sub-wires.
\end{rem}

\subsection{Subnormalized boxes}
\label{sec:boxes.subnormalized}

\defref{def:quantumbox} only considers trace\-/preserving causal
boxes. In \defref{def:subbox} we generalize this to include
trace\-/decreasing maps. This can be used to model, for example, the
subnormalized box resulting from post-selecting states on some
measurement outcome.

\begin{deff}[Subnormalized causal box]
  \label{def:subbox} A set of completely positive (CP) maps,
  \[ \Phi = \left\{\Phi^\cC : \tcop{\cF^{\T}_X} \to \tcop{\cF^{\cC}_{Y}} \right\}_{\cC \in \bcut} \]
  defines a \emph{$(d_X,d_Y)$\-/subnormalized causal box} if there
  exists a normalized causal box
  \[ \hat{\Phi} = \left\{\hat{\Phi}^\cC : \tcop{\cF^{\T}_X} \to
    \tcop{\cF^{\cC}_{RY}} \right\}_{\cC \in \bcut}\]
  such that for all ${\cC \in \bcut}$,
  \[ \Phi^\cC = P^{\Omega}_{R} \circ \hat{\Phi}^\cC\,,\]
  where
  $P^{\Omega}_{R} : \tcop{\cF^{\T}_{RY}} \to \tcop{\cF^{\T}_{Y}}$
  projects the $R$ sub-wire on the vacuum state, namely
\[ P^{\Omega}_{R}(\rho) \coloneqq \left( \bra{\Omega}_{R} \tensor
I_{Y} \right) \rho \left( \vacuum_{R} \tensor I_{Y} \right)\,. \]
\end{deff}

\section{Alternative representations of causal boxes}
\label{sec:representation}

In this section we provide some alternative characterizations of
causal boxes using the Stinespring\footnote{For a detailed treatment
  of the \strep of quantum operators on infinite dimensional spaces,
  we refer to the textbook~\cite{Pau03}. A more accessible
  introduction to the finite dimensional case \--- which shares many
  of the essential properties with the infinite dimensional setting
  \--- can be found in~\cite{Wat16}.} and \cj\footnote{Since the
  \cjrep of infinite dimensional operators differs significantly from
  the finite dimensional case, we have summarized the most important
  points in \appendixref{app:rep.cj} and refer the interested reader
  to \cite{Hol11}.} representations. These are then used as technical
tools in the proofs of the main theorems in the following sections.

\subsection{Stinespring and Choi-Jamio\l{}kowski}
\label{sec:representation.cj}

In \secref{sec:boxes.definition} causal boxes are defined as a set of
maps $\left\{\Phi^\cC \right\}_{\cC \in \bcut}$ subject to a
causality constraint. In this section we show how to represent
causal boxes using the Stinespring and \cjreps. The \strep of a
causal box can easily be drawn as a circuit, which we do in
\figref{fig:causality} \--- \corref{cor:representation} here below
states that \figref{fig:causality} is indeed equivalent to
\eqnref{eq:combined}.

\begin{figure}[tb]
\begin{centering}
\begin{tikzpicture}[
wire/.style={-,thick},
unitary2/.style={draw,thick,minimum width=1cm,minimum height=1.8cm}]

\node (l1) at (.5,0) {};
\node[above right] at (l1) {\small $\cF^{\chi(\cC)}_X$};
\node (l3) at (.5,-2) {};
\node[above right,xshift=-.2cm] at (l3) {\small $\cF^{\chi(\cD)\setminus\chi(\cC)}_X$};
\node (r1) at (7.5,0) {};
\node[above left] at (r1) {\small $\cF^\cC_Y$};
\node (r2) at (7.5,-1) {};
\node[above left] at (r2) {\small $\cF^{\cD\setminus\cC}_Y$};
\node (r3) at (7.5,-2) {};
\node[above left,xshift=-2] at (r3) {\small $\hilbert_R$};

\node[unitary2] (u1) at (3,-.5) {$U^{\cC}_\Phi$};
\node[unitary2] (u2) at (5,-1.5) {$V$};
\node[fit=(u1)(u2),thick,dashed,draw,inner sep=.3cm] (u) {};
\node[above right] at (u.north west) {$U^{\cD}_\Phi$};

\draw[wire] (l1) to (u1.west |- l1);
\draw[wire] (l3) to (u2.west |- l3);
\draw[wire] (u1.east |- r2) to node[auto,yshift=-2] {$\hilbert_Q$} (u2.west |- r2);
\draw[wire] (u1.east |- r1) to (r1);
\draw[wire] (u2.east |- r2) to (r2);
\draw[wire,-|] (u2.east |- r3) to (r3);
\end{tikzpicture}

\end{centering}
\caption[The causality constraint]{\label{fig:causality}For any
  $\cC, \cD \in \bcut$ with $\cC \subseteq \cD$, a map
  $\Phi^\cD : \tcop{\cF^{\chi(\cD)}_X} \to \tcop{\cF^{\cD}_Y}$ of a
  causal box \--- with \strep $U^\cD_\Phi$ \--- can be decomposed
  into a sequence of two isometries $U^\cC_\Phi$ and $V$, where
  $U^\cC_\Phi$ is a \strep of
  $\Phi^\cC : \tcop{\cF^{\chi(\cC)}_X} \to \tcop{\cF^{\cC}_Y}$.}
\end{figure}

\begin{lem} \label{lem:representation}
Let $\Phi : \tcop{\hilbert_{AB}} \to \tcop{\hilbert_{CD}}$ be a CPTP
map for which the output on $C$ does not depend on the input to $B$,
i.e., there exists a CPTP map $\Psi : \tcop{\hilbert_A} \to
\tcop{\hilbert_C}$ such that
\begin{equation} \label{eq:equiv} {\tr_D} \circ \Phi = \Psi \circ
  \tr_B\,. \end{equation} The following two conditions are equivalent to
\eqnref{eq:equiv}.
\begin{enumerate}
\item For any states $\psi_A,\varphi_A \in \hilbert_A$,
  $\psi_B,\varphi_B \in \hilbert_B$,
  $\psi_C,\varphi_C \in \hilbert_C$, and any basis $\{\ket{j}\}_j$ of
  $\hilbert_D$,
  \begin{multline} \label{eq:equiv.cj}\sum_j R_\Phi \left( \psi_C
      \tensor j_D \tensor \psi_A \tensor \psi_B ; \varphi_C \tensor
      j_D \tensor \varphi_A \tensor \varphi_B \right) \\ = R_\Psi
    \left( \psi_C \tensor \psi_A ; \varphi_C \tensor \varphi_A \right)
    \braket{\psi_B}{\varphi_B}\,, \end{multline} where the sesquilinear
  forms\footnote{$R(\cdot;\cdot)$ is a sesquilinear form if it is
    antilinear in the first argument and linear in the second,
    see \appendixref{app:rep.cj}.}
\begin{align*}
  &  R_\Phi : \left(\hilbert_{CD} \times \hilbert_{AB} \right) \times
    \left(\hilbert_{CD} \times \hilbert_{AB} \right) \to \C\\
  \text{and} \qquad & R_\Psi : \left(\hilbert_{C} \times \hilbert_{A} \right) \times
                      \left(\hilbert_{C} \times \hilbert_{A} \right) \to \C
\end{align*}
are the \cjreps of the maps $\Phi$ and $\Psi$
(see \appendixref{app:rep.cj}).
\item There exists an isometry $V : \hilbert_{QB} \to \hilbert_{DR}$
  such that \begin{equation} \label{eq:equiv.st} U_\Phi = \left( I_C
      \tensor V \right) \left( U_\Psi \tensor I_B
    \right)\,,\end{equation} where $U_{\Phi} : \hilbert_{AB} \to
  \hilbert_{CDR}$ and $U_{\Psi} : \hilbert_{A} \to \hilbert_{CQ}$ are
  minimal \streps of $\Phi$ and $\Psi$.
\end{enumerate}
\end{lem}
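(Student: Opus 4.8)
The plan is to prove the two equivalences $\eqref{eq:equiv}\Leftrightarrow\eqref{eq:equiv.cj}$ and $\eqref{eq:equiv}\Leftrightarrow\eqref{eq:equiv.st}$ separately, each time playing the abstract identity of maps in \eqref{eq:equiv} against a concrete representation. The guiding observation is that both sides of \eqref{eq:equiv} are channels $\tcop{\hilbert_{AB}} \to \tcop{\hilbert_C}$, namely $\tr_D \circ \Phi$ and $\Psi \circ \tr_B$, so each of the two conditions will amount to the statement that a faithful representation (\cjrep, resp.\ \strep) of these two channels coincides.

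For $\eqref{eq:equiv}\Leftrightarrow\eqref{eq:equiv.cj}$ I would argue through faithfulness of the \cjrep (see \appendixref{app:rep.cj}). The point is that the left\-/hand side of \eqref{eq:equiv.cj} is exactly the \cjrep of $\tr_D \circ \Phi$: tracing out the output system $D$ manifests on the sesquilinear form as the sum $\sum_j$ over a basis of $\hilbert_D$ placed simultaneously in both $D$\-/slots. Dually, the right\-/hand side is the \cjrep of $\Psi \circ \tr_B$: pre\-/composing with $\tr_B$ replaces the $B$\-/dependence by the scalar overlap $\braket{\psi_B}{\varphi_B}$ and leaves the $R_\Psi$ factor on the $A,C$ systems. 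Since the product vectors $\psi_A\tensor\psi_B$, $\varphi_A\tensor\varphi_B$ and the vectors $\psi_C,\varphi_C$ range over spanning, hence separating, families, equality of these forms for all such vectors is equivalent to equality of the two channels, which is \eqref{eq:equiv}.

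For $\eqref{eq:equiv}\Leftrightarrow\eqref{eq:equiv.st}$ I would use uniqueness of the minimal \strep. The direction $\eqref{eq:equiv.st}\Rightarrow\eqref{eq:equiv}$ is a direct computation: substituting $U_\Phi = (I_C \tensor V)(U_\Psi \tensor I_B)$ into $(\tr_D\circ\Phi)(\rho) = \tr_{DR}(U_\Phi \rho\, U_\Phi^\dagger)$ and using $V^\dagger V = I_{QB}$ (equivalently, the identity $\tr_{DR}[(I_C\tensor V)X(I_C\tensor V^\dagger)] = \tr_{QB}[X]$ for $X$ on $\hilbert_{CQB}$), the isometry $V$ disappears and one is left with $\tr_{QB}[(U_\Psi\tensor I_B)\rho(U_\Psi^\dagger\tensor I_B)] = \Psi(\tr_B \rho)$. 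For the converse, observe that under \eqref{eq:equiv} the isometry $U_\Phi \colon \hilbert_{AB} \to \hilbert_{CDR}$ and the isometry $U_\Psi \tensor I_B \colon \hilbert_{AB} \to \hilbert_{C}\tensor\hilbert_{QB}$ are both \streps of one and the same channel $\tr_D\circ\Phi = \Psi\circ\tr_B$, with environments $DR$ and $QB$ respectively. Minimality of $U_\Psi$ implies minimality of $U_\Psi\tensor I_B$, because the vectors $(\bra{c}\tensor I_Q)U_\Psi\ket{a}$ span $\hilbert_Q$ and tensoring with an arbitrary $\ket{b}$ then spans $\hilbert_{QB}$. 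The uniqueness theorem for minimal \streps~\cite{Pau03} therefore supplies an isometry $V \colon \hilbert_{QB} \to \hilbert_{DR}$ with $U_\Phi = (I_C \tensor V)(U_\Psi\tensor I_B)$, which is precisely \eqref{eq:equiv.st}.

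The main obstacle is the infinite\-/dimensionality of the wire spaces. This is why the \cjrep is phrased as a sesquilinear form rather than a Choi operator (which need not be trace class), and some care is needed that the sum $\sum_j$ and the spanning/separating arguments behave well; I would reduce everything to matrix elements against fixed vectors $\psi_C,\varphi_C$ so that only bounded quantities appear. On the Stinespring side the delicate points are that the uniqueness\-/up\-/to\-/isometry statement must be invoked in its infinite\-/dimensional form~\cite{Pau03}, and that minimality of $U_\Psi \tensor I_B$ is genuinely used \--- without it one would only obtain a partial isometry, not the clean factorization \eqref{eq:equiv.st}. Everything else is routine bookkeeping with partial traces.
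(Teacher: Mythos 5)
Your proposal is correct and follows essentially the same route as the paper's proof: the \cj equivalence is established by identifying both sides of \eqref{eq:equiv.cj} with matrix elements of the channels $\tr_D \circ \Phi$ and $\Psi \circ \tr_B$, the direction \eqref{eq:equiv.st}$\Rightarrow$\eqref{eq:equiv} is the same partial-trace computation, and the converse invokes, exactly as the paper does, that $U_\Phi$ (ancilla $DR$) and $U_\Psi \tensor I_B$ (ancilla $QB$) are both \streps of the same channel together with uniqueness of minimal representations up to an isometry on the ancilla~\cite{Pau03}. Your explicit spanning argument for the minimality of $U_\Psi \tensor I_B$ is a detail the paper merely asserts, so no gap remains.
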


Setting $\hilbert_A = \cF^{\chi(\cC)}_X$,
$\hilbert_B = \cF^{\widetilde{\chi(\cC)}}_X$,
$\hilbert_C = \cF^{\cC}_Y$ and $\hilbert_D = \cF^{\widetilde{\cC}}_Y$ in this
lemma, where
$\widetilde{\chi(\cC)} \coloneqq \chi(\cD)\setminus\chi(\cC)$ and
$\widetilde{\cC} \coloneqq \cD\setminus\cC$, we get the following
corollary, which gives an alternative characterization of
\eqnref{eq:combined} and is illustrated in \figref{fig:causality}.

\begin{cor} \label{cor:representation} Let
  $\{\Phi_\cC : \tcop{\cF^{\chi(\cC)}_X} \to \tcop{\cF^{\cC}_Y}\}_{\cC
    \in \bcut}$
  be a set of CPTP maps. For any $\cC,\cD \in \bcut$ with
  $\cC \subseteq \cD$, the following three conditions are equivalent:
\begin{enumerate}
\item These maps satisfy \eqnref{eq:combined},
  namely \begin{equation} \label{eq:rep.equiv} {\tr_{\widetilde{\cC}}} \circ \Phi^{\cD}
    = \Phi^{\cC} \circ \tr_{\widetilde{\chi(\cC)}}\,. \end{equation}
\item For any states $\psi^{\chi(\cC)}_X,\varphi^{\chi(\cC)}_X \in \cF^{\chi(\cC)}_X$,
  $\psi^{\widetilde{\chi(\cC)}}_X,\varphi^{\widetilde{\chi(\cC)}}_X \in \cF^{\widetilde{\chi(\cC)}}_X$,
  $\psi^{\cC}_Y,\varphi^{\cC}_Y \in \cF^{\cC}_Y$, and any
  basis $\{\ket{j}\}_j$ of $\cF^{\widetilde{\cC}}_Y$, where
  $\cF^{\chi(\cD)}_X \cong \cF^{\chi(\cC)}_X \tensor \cF^{\widetilde{\chi(\cC)}}_X$ and
  $\cF^{\cD}_Y \cong \cF^{\cC}_Y \tensor \cF^{\widetilde{\cC}}_Y$,
  \begin{multline} \label{eq:rep.equiv.cj} \sum_{j}
    R^{\cD}_{\Phi}\left(\psi^{\cC}_Y \tensor j \tensor \psi^{\chi(\cC)}_X
      \tensor \psi^{\widetilde{\chi(\cC)}}_X ; \varphi^{\cC}_Y \tensor j \tensor
      \varphi^{\chi(\cC)}_X \tensor \varphi^{\widetilde{\chi(\cC)}}_X \right) \\ =
    R^{\cC}_{\Phi}\left(\psi^{\cC}_Y \tensor \psi^{\chi(\cC)}_X ;
      \varphi^{\cC}_Y \tensor \varphi^{\chi(\cC)}_X\right)
    \braket{\psi^{\widetilde{\chi(\cC)}}_X}{\varphi^{\widetilde{\chi(\cC)}}_X}\,, \end{multline} where
  $R^{\cD}_\Phi$ and $R^{\cC}_\Phi$ are the \cjreps of $\Phi^{\cD}$ and
  $\Phi^{\cC}$.
\item There exists an isometry
  $V : \hilbert_{Q} \tensor \cF^{\widetilde{\chi(\cC)}}_X \to \cF^{\widetilde{\cC}}_Y
  \tensor \hilbert_{R}$
  such that \begin{equation} \label{eq:rep.equiv.st} U^{\cD}_\Phi
    = \left( I_Y^{\cC} \tensor V \right) \left(
      U^{\cC}_\Phi \tensor I_X^{\widetilde{\chi(\cC)}} \right) \,,\end{equation}
  where
  $U^{\cD}_{\Phi} : \cF^{\chi(\cD)}_X \to \cF^{\cD}_Y \tensor \hilbert_{R}$
  and
  $U^{\cC}_{\Phi} : \cF^{\chi(\cC)}_X \to \cF^{\cC}_Y \tensor \hilbert_{Q}$
  are minimal \streps of $\Phi^{\cD}$ and $\Phi^{\cC}$.
\end{enumerate}
\end{cor}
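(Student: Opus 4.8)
The plan is to derive the corollary as a direct instantiation of Lemma~\ref{lem:representation}, so the only real work is to check that the substitution announced just before the statement genuinely turns the lemma into the corollary. First I would fix $\cC \subseteq \cD$ in $\bcut$ and record the two tensor factorisations that make the substitution legal: by \eqnref{eq:causality.monotone} we have $\chi(\cC) \subseteq \chi(\cD)$, so, writing $\widetilde{\chi(\cC)} \coloneqq \chi(\cD) \setminus \chi(\cC)$ and $\widetilde{\cC} \coloneqq \cD \setminus \cC$, the Fock isomorphism \eqnref{eq:splittime} (a special case of \eqnref{eq:fockisomorphism}) gives $\cF^{\chi(\cD)}_X \cong \cF^{\chi(\cC)}_X \tensor \cF^{\widetilde{\chi(\cC)}}_X$ and $\cF^{\cD}_Y \cong \cF^{\cC}_Y \tensor \cF^{\widetilde{\cC}}_Y$. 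Setting $\hilbert_A = \cF^{\chi(\cC)}_X$, $\hilbert_B = \cF^{\widetilde{\chi(\cC)}}_X$, $\hilbert_C = \cF^{\cC}_Y$ and $\hilbert_D = \cF^{\widetilde{\cC}}_Y$ then identifies $\tcop{\hilbert_{AB}} \to \tcop{\hilbert_{CD}}$ with the domain and codomain of $\Phi^{\cD}$, and $\tcop{\hilbert_{A}} \to \tcop{\hilbert_{C}}$ with those of $\Phi^{\cC}$. Both maps are CPTP by hypothesis, so the lemma applies with $\Phi = \Phi^{\cD}$ and $\Psi = \Phi^{\cC}$.

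Next I would match the three conditions one by one. Under the identification, $\tr_D$ becomes $\tr_{\widetilde{\cC}}$ and $\tr_B$ becomes $\tr_{\widetilde{\chi(\cC)}}$, so the lemma's hypothesis \eqnref{eq:equiv} reads ${\tr_{\widetilde{\cC}}} \circ \Phi^{\cD} = \Phi^{\cC} \circ \tr_{\widetilde{\chi(\cC)}}$, which is exactly condition~1, i.e.\ \eqnref{eq:rep.equiv} (equivalently \eqnref{eq:combined}). The \cjrep condition \eqnref{eq:equiv.cj} becomes \eqnref{eq:rep.equiv.cj} after renaming the vectors $\psi_A,\psi_B,\psi_C$ to $\psi^{\chi(\cC)}_X, \psi^{\widetilde{\chi(\cC)}}_X, \psi^{\cC}_Y$ (and likewise the $\varphi$'s), keeping the same summation index $j$ of a basis of $\hilbert_D = \cF^{\widetilde{\cC}}_Y$, and replacing $R_\Phi, R_\Psi$ by $R^{\cD}_\Phi, R^{\cC}_\Phi$. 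Likewise the Stinespring condition \eqnref{eq:equiv.st}, with dilation spaces $\hilbert_R, \hilbert_Q$ and isometry $V : \hilbert_{QB} \to \hilbert_{DR}$, becomes \eqnref{eq:rep.equiv.st} with $V : \hilbert_Q \tensor \cF^{\widetilde{\chi(\cC)}}_X \to \cF^{\widetilde{\cC}}_Y \tensor \hilbert_R$ and minimal \streps $U^{\cD}_\Phi, U^{\cC}_\Phi$, since $I_C$ and $I_B$ translate to $I^{\cC}_Y$ and $I^{\widetilde{\chi(\cC)}}_X$. The three-way equivalence asserted by the corollary is therefore precisely the conclusion of the lemma.

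The corollary's proof is thus pure bookkeeping, and I take the mathematical substance, Lemma~\ref{lem:representation}, as given. The one place where care is genuinely required — and the only place a mistake could creep in — is keeping the embedding conventions of \eqnref{eq:notation.embedding} consistent across the two factorisations, so that the factor on which $\tr_{\T \setminus \chi(\cC)}$ acts in \eqnref{eq:combined} is correctly identified with the $B$-factor $\cF^{\widetilde{\chi(\cC)}}_X$ of the enlarged input space $\cF^{\chi(\cD)}_X$, and dually that $\widetilde{\cC}$ labels the $D$-factor on the output side. Once these identifications are pinned down, nothing further is needed.
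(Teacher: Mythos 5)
Your proposal is correct and coincides with the paper's own proof: the paper obtains the corollary precisely by instantiating \lemref{lem:representation} with $\hilbert_A = \cF^{\chi(\cC)}_X$, $\hilbert_B = \cF^{\widetilde{\chi(\cC)}}_X$, $\hilbert_C = \cF^{\cC}_Y$ and $\hilbert_D = \cF^{\widetilde{\cC}}_Y$, which is exactly the bookkeeping you carry out. Your extra observation that $\chi(\cC) \subseteq \chi(\cD)$ (via \eqnref{eq:causality.monotone}) is what licenses the factorisations $\cF^{\chi(\cD)}_X \cong \cF^{\chi(\cC)}_X \tensor \cF^{\widetilde{\chi(\cC)}}_X$ and $\cF^{\cD}_Y \cong \cF^{\cC}_Y \tensor \cF^{\widetilde{\cC}}_Y$ is a point the paper leaves implicit, and is a welcome addition rather than a deviation.
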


Since the three conditions are equivalent, this corollary not only
states that a causal box can be described by the \cjrep of the maps
$\Phi^\cC$, but also that any set of positive semi\-/definite
sesquilinear forms satisfying \eqnref{eq:rep.equiv.cj} is a valid
causal box.

We now provide a proof of \lemref{lem:representation}. As
visual aid, we draw the corresponding system in \figref{fig:equiv}, which
is a copy of \figref{fig:causality} with different labels.

\begin{figure}[tb]
\begin{centering}
\begin{tikzpicture}[
wire/.style={-,thick},
unitary2/.style={draw,thick,minimum width=1cm,minimum height=1.8cm}]

\node (l1) at (1,0) {};
\node[above right] at (l1) {$\hilbert_A$};
\node (l3) at (1,-2) {};
\node[above right] at (l3) {$\hilbert_B$};
\node (r1) at (7,0) {};
\node[above left] at (r1) {$\hilbert_C$};
\node (r2) at (7,-1) {};
\node[above left] at (r2) {$\hilbert_D$};
\node (r3) at (7,-2) {};
\node[above left,xshift=-2] at (r3) {$\hilbert_R$};

\node[unitary2] (u1) at (3,-.5) {$U_\Psi$};
\node[unitary2] (u2) at (5,-1.5) {$V$};
\node[fit=(u1)(u2),thick,dashed,draw,inner sep=.3cm] (u) {};
\node[above right] at (u.north west) {$U_\Phi$};

\draw[wire] (l1) to (u1.west |- l1);
\draw[wire] (l3) to (u2.west |- l3);
\draw[wire] (u1.east |- r2) to node[auto,yshift=-2] {$\hilbert_Q$} (u2.west |- r2);
\draw[wire] (u1.east |- r1) to (r1);
\draw[wire] (u2.east |- r2) to (r2);
\draw[wire,-|] (u2.east |- r3) to (r3);
\end{tikzpicture}

\end{centering}
\caption[An illustration of
\lemref{lem:representation}]{\label{fig:equiv}The \strep from
  \lemref{lem:representation} drawn as a circuit.}
\end{figure}

\begin{proof}[Proof of \lemref{lem:representation}]We first show that \eqref{eq:equiv} $\iff$
\eqref{eq:equiv.cj}. Note that \eqnref{eq:equiv} is equivalent to
requiring that for all states
$\psi_A,\psi_B,\psi_C,\varphi_A,\varphi_B,\varphi_C$,
\begin{equation} \label{eq:equiv.1} \bra{\psi_C} \trace[D]{\Phi \left(
      \ketbra{\bar{\psi}_A, \bar{\psi}_B} {\bar{\varphi}_A,
        \bar{\varphi}_B} \right)} \ket{\varphi_C} = \bra{\psi_C} \Psi
  \left( \trace[B]{ \ketbra{\bar{\psi}_A, \bar{\psi}_B}
      {\bar{\varphi}_A, \bar{\varphi}_B} } \right) \ket{\varphi_C}\,.
\end{equation}
The left-hand side of \eqnref{eq:equiv.1} is equal to
\[ \sum_j \bra{\psi_C,j_D} \Phi \left( \ketbra{\bar{\psi}_A,
    \bar{\psi}_B} {\bar{\varphi}_A, \bar{\varphi}_B} \right)
\ket{\varphi_C,j_D}\,, \]
which, when rewritten with the \cjrep, corresponds to the left-hand
side of \eqnref{eq:equiv.cj}. And since
\[\trace[B]{\ketbra{\bar{\psi}_A, \bar{\psi}_B} {\bar{\varphi}_A,
    \bar{\varphi}_B}} = \ketbra{\bar{\psi}_A} {\bar{\varphi}_A}
\braket{\psi_B}{\varphi_B}\,,\]
the right-hand side of \eqnref{eq:equiv.1} is equal to the right-hand
side of \eqnref{eq:equiv.cj}.

Next we show that \eqref{eq:equiv.st} $\implies$
\eqref{eq:equiv}. From the definition of the \strep and
\eqnref{eq:equiv.st} we have for any $\rho_{AB} \in \tcop{\hilbert_{AB}}$,
\begin{align*}
  \trace[D]{\Phi(\rho_{AB})} & = \trace[DR]{( I_C \tensor V ) ( U_\Psi
  \tensor I_B ) \rho_{AB} (\hconj{U}_\Psi \tensor I_B ) ( I_C \tensor
  \hconj{V} )} \\
  & = \trace[QB]{ ( U_\Psi \tensor I_B ) \rho_{AB} (\hconj{U}_\Psi \tensor I_B)} \\
  & = \trace[Q]{U_\Psi \trace[B]{\rho_{AB}} \hconj{U}_\Psi} \\
  & = \Psi \left( \trace[B]{\rho_{AB}} \right)\,,
\end{align*}
where to obtain the second line we used that since
$V : \hilbert_{QB} \to \hilbert_{DR}$ maps $QB$ to $DR$ and the registers $DR$ are
traced out, $V$ and $\hconj{V}$ have no effect on the outcome (see
\figref{fig:equiv} for an illustration of this case).

Finally, we show that \eqref{eq:equiv} $\implies$
\eqref{eq:equiv.st}. To prove this, we need to find the isometry $V$
which satisfies \eqnref{eq:equiv.st}. Consider the map
$\Phi' \coloneqq {\tr_D} \circ \Phi$. Since
$\Phi'(\rho_{AB}) = \trace[DR]{U_\Phi \rho_{AB} \hconj{U}_\Phi}$, the
operator $U_{\Phi}$ \--- which by construction is a \strep of $\Phi$
\--- is also a \strep of $\Phi'$ with ancilla registers $DR$. From
\eqnref{eq:equiv} we have $\Phi' = \Psi \circ \tr_B$, hence
$\Phi'(\rho_{AB}) = \trace[QB]{(U_\Psi \tensor I_B) \rho_{AB}
  (\hconj{U}_\Psi \tensor I_B)}$,
and $U_\Psi \tensor I_B$ is also a \strep of $\Phi'$, with ancilla
$QB$. Furthermore, since $U_\Psi$ is a minimal representation of
$\Psi$, $U_\Psi \tensor I_B$ is a minimal representation of
$\Phi'$. And because any \strep is related to a minimal one by an
isometry on the ancilla~\cite{Pau03}, there must exist
$V : \hilbert_{QB} \to \hilbert_{DR}$ that satisfies
\eqnref{eq:equiv.st}.
\end{proof}

\subsection{A sequence of operators}
\label{sec:representation.isometries}

If we apply \lemref{lem:representation} recursively, we can decompose
any map $\Phi^{\cC}$ into a (finite) sequence of isometries, e.g., 
\begin{equation*} 
  \left\{ V_i : \hilbert_{Q_{i+1}} \tensor
    \cF^{\chi(C_{i})\setminus\chi(\cC_{i+1})}_X \to
    \cF^{\cC_{i}\setminus\cC_{i+1}}_Y \tensor \hilbert_{Q_{i}}
  \right\}_{i = 1}^n\,,\end{equation*}
for any sequence of cuts $\emptyset = \cC_{n+1} \subseteq \cC_{n} \subseteq
\dotsb  \subseteq \cC_2 \subseteq \cC_1 = \cC$.
Here, $\hilbert_{Q_{i+1}}$ can be thought of as the Hilbert space of the
internal memory of the system before unitary $V_i$ has been applied,
and each $V_i$ processes this internal memory as well as the input on
the wire $X$ in positions $\chi(\cC_{i})\setminus\chi(\cC_{i+1})$
to produce an output on the wire $Y$ in positions
$\cC_{i}\setminus\cC_{i+1}$ and the new updated internal memory $\hilbert_{Q_{i}}$.

In this section we show that we can do this an infinite number of
times in such a way that the input and output sets of positions of the
unitaries $V_i$ are disjoint, i.e.,
$\left(\chi(C_{i})\setminus\chi(\cC_{i+1})\right) \cap \left(
  \cC_{i}\setminus\cC_{i+1} \right) = \emptyset$.
We call this a \emph{sequence representation} and illustrate it in
\figref{fig:isometries}.

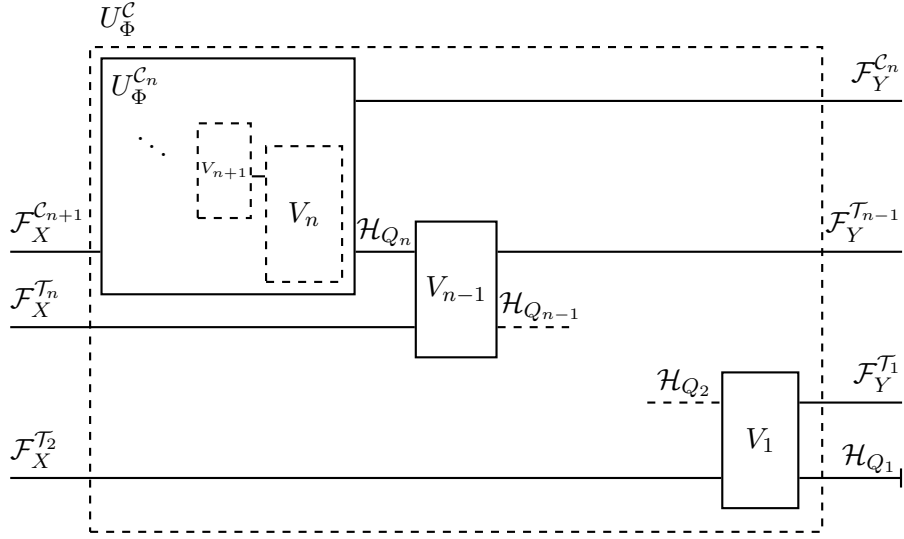
\begin{figure}[tb]
\begin{centering}
\begin{tikzpicture}[
wire/.style={-,thick},
unitary2/.style={draw,thick,minimum width=1cm,minimum height=1.8cm},
unitary2a/.style={draw,inner sep=0,thick,minimum width=\a*28.35,minimum height=1.8*\a*28.35},
invisible2/.style={thick,minimum width=1cm,minimum height=1.8cm}]

\def\a{.7}

\node (l1) at (1,0) {};
\node (l2) at (1,-.5) {};
\node (l3) at (1,-1) {};
\node (l4) at (1,-1.5) {};
\node (l5) at (1,-2) {};
\node[above right] at (l5) {$\cF^{\cC_{n+1}}_X$};
\node (l6) at (1,-2.5) {};
\node (l7) at (1,-3) {};
\node[above right] at (l7) {$\cF^{\T_{n}}_X$};
\node (l8) at (1,-3.5) {};
\node (l9) at (1,-4) {};
\node (l10) at (1,-4.5) {};
\node (l11) at (1,-5) {};
\node[above right] at (l11) {$\cF^{\T_{2}}_X$};

\node (r1) at (13,0) {};
\node[above left] at (r1) {$\cF^{\cC_{n}}_Y$};
\node (r2) at (13,-.5) {};
\node (r3) at (13,-1) {};
\node (r4) at (13,-1.5) {};
\node (r5) at (13,-2) {};
\node[above left] at (r5) {$\cF^{\T_{n-1}}_Y$};
\node (r6) at (13,-2.5) {};
\node (r7) at (13,-3) {};
\node (r8) at (13,-3.5) {};
\node (r9) at (13,-4) {};
\node[above left] at (r9) {$\cF^{\T_1}_Y$};
\node (r10) at (13,-4.5) {};
\node (r11) at (13,-5) {};
\node[yshift=-2,above left,xshift=-2] at (r11) {$\hilbert_{Q_1}$};

\node[invisible2] (u1) at (3,-.5) {$\ddots$};
\node[unitary2a,dashed] (u2a) at (5-.7-.5*\a,-1.5+.4+.25*\a) {\tiny $V_{n+1}$};
\node[unitary2,dashed] (u2) at (5,-1.5) {$V_{n}$};
\node[unitary2] (u3) at (7,-2.5) {$V_{n-1}$};
\node[invisible2] (u4) at (9,-3.5) {};
\node[unitary2] (u5) at (11,-4.5) {$V_1$};
\node[fit=(u1)(u2),thick,draw,inner sep=.15cm] (un) {};
\node[below right] at (un.north west) {$U^{\cC_n}_\Phi$};
\node[fit=(u1)(u5),thick,dashed,draw,inner sep=.3cm] (u) {};
\node[above right] at (u.north west) {$U_\Phi^{\cC}$};

\draw[wire] (l5) to (un.west |- l5);
\draw[wire] (l7) to (u3.west |- l7);
\draw[wire] (l11) to (u5.west |- l11);

\draw[wire] (u2a.east |- r3) to (u2.west |- r3);
\draw[wire] (un.east |- r5) to node[auto,yshift=-2] {$\hilbert_{Q_{n}}$} (u3.west |- r5);
\draw[wire,dashed] (u3.east |- r7) to node[auto,yshift=-2,pos=.6] {$\hilbert_{Q_{n-1}}$} (u4.west |- r7);
\draw[wire,dashed] (u4.east |- r9) to node[auto,yshift=-2] {$\hilbert_{Q_{2}}$} (u5.west |- r9);

\draw[wire] (un.east |- r1) to (r1);
\draw[wire] (u3.east |- r5) to (r5);
\draw[wire] (u5.east |- r9) to (r9);
\draw[wire,-|] (u5.east |- r11) to (r11);


\end{tikzpicture}

\end{centering}
\caption[Decomposing a causal box]{\label{fig:isometries}A map
  $\Phi^{\cC}$ is decomposed into a sequence of isometries $V_i$.}
\end{figure}

\begin{deff}[Sequence representation]
\label{def:sequencerepresentation}
Let $\dotsb \subseteq \cC_i \subseteq \dotsb \subseteq \cC_1 = \cC$ be
an infinite sequence of cuts such that
$\bigcap_{i=1}^\infty \cC_i = \emptyset$, and let
$\T_{i} \coloneqq \cC_i \setminus \cC_{i+1}$. A \emph{sequence
  representation} of a map
$\Phi^\cC : \tcop{\cF^{\chi(\cC)}_X} \to \tcop{\cF^{\cC}_Y}$ is given
by such a set of cuts $\{\cC_i\}_{i=1}^\infty$ along with a set of
operators
\begin{equation*}
 \left\{ V_i : \hilbert_{Q_{i+1}} \tensor \cF^{\T_{i+1}}_X \to
  \cF^{\T_i}_Y \tensor \hilbert_{Q_{i}} \right\}_{i = 1}^\infty \,, 
\end{equation*}
such that for all $n \geq 2$,
  \begin{equation} \label{eq:sequencerepresentation}
  U^{\cC_{1}}_\Phi = \left( \prod_{i=1}^{n-1}I^{\cC_{i+1}}_Y \tensor
    V_i \tensor I^{\cC_2 \setminus \cC_{i+1}}_X\right)
  \left( U^{\cC_{n}}_{\Phi} \tensor I^{\cC_2 \setminus \cC_{n+1}}_X \right) \,,
\end{equation} where $U^{\cC_i}_\Phi$ is a minimal
\strep of $\Phi^{\cC_i}$.
\end{deff}

A sequence representation for any map $\Phi^\cC$ can be obtained by
defining $\cC_i \coloneqq \chi^{i-1}(\cC)$. We then immediately have
that the set of input positions to the operator $V_i$ has an empty
intersection with the output positions, since
$\chi(C_i)\setminus\chi(\cC_{i+1}) = \T_{i+1}$ and
$\T_{i+1} \cap \T_i = \emptyset$.

\begin{prop} \label{prop:sequencerepresentation} For every causal box
  $\Phi = \left\{\Phi^\cC\right\}_{\cC \in \bcut}$ and every
  $\cC \in \bcut$, there exists a sequence representation of
  $\Phi^\cC$.
\end{prop}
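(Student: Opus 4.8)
The plan is to adopt the canonical choice of cuts suggested immediately after \defref{def:sequencerepresentation}, namely $\cC_i \coloneqq \chi^{i-1}(\cC)$, and to manufacture the isometries $V_i$ by applying \corref{cor:representation} to each consecutive pair $\cC_{i+1} \subseteq \cC_i$. First I would check that $\{\cC_i\}_{i \geq 1}$ is a legitimate family of cuts for a sequence representation. Each $\cC_i$ is a bounded cut, since boundedness is preserved by $\chi$: from $\cC \subseteq \T^{\leq t}$ together with \eqref{eq:causality.monotone} and \eqref{eq:causality.decreasing} one gets $\chi(\cC) \subseteq \chi(\T^{\leq t}) \subsetneq \T^{\leq t}$, and then one inducts on $i$. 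The sequence is decreasing, $\cC_{i+1} = \chi(\cC_i) \subseteq \cC_i$: for nonempty $\cC_i$ this is \eqref{eq:causality.decreasing}, and for the empty cut one first observes $\chi(\emptyset) = \emptyset$ --- indeed $\emptyset \subseteq \chi(\emptyset)$ forces the sequence $\chi^k(\emptyset)$ to be nondecreasing by \eqref{eq:causality.monotone}, so any $t \in \chi(\emptyset)$ would lie in every $\chi^k(\emptyset)$, contradicting \eqref{eq:causality.finite} applied to the bounded cut $\chi(\emptyset)$. Finally $\bigcap_{i} \cC_i = \emptyset$: for each $t \in \cC$, \eqref{eq:causality.finite} supplies an $n$ with $t \notin \chi^{n}(\cC) = \cC_{n+1}$, and the monotone decrease keeps $t$ out of all later cuts.

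Next I would produce the isometries. Fix once and for all a minimal \strep $U^{\cC_i}_\Phi$ of each $\Phi^{\cC_i}$, with ancilla $\hilbert_{Q_i}$; fixing these up front is what makes the ancillas shared consistently between consecutive steps, so that the output ancilla $\hilbert_{Q_i}$ of $V_i$ coincides with the input ancilla of $V_{i-1}$. Since $\Phi$ is a causal box its maps satisfy \eqref{eq:combined} for every pair $\cC_{i+1} \subseteq \cC_i$ of bounded cuts, so the first condition of \corref{cor:representation} holds and its third condition supplies an isometry $V_i$ with
\[ U^{\cC_i}_\Phi = \bigl( I^{\cC_{i+1}}_Y \tensor V_i \bigr)\bigl( U^{\cC_{i+1}}_\Phi \tensor I^{\T_{i+1}}_X \bigr)\,. \]
The crucial point is that $\cC_i = \chi^{i-1}(\cC)$ yields $\chi(\cC_i) = \cC_{i+1}$ and $\chi(\cC_{i+1}) = \cC_{i+2}$, so the corollary's spaces specialise to $\chi(\cC_i)\setminus\chi(\cC_{i+1}) = \cC_{i+1}\setminus\cC_{i+2} = \T_{i+1}$ and $\cC_i\setminus\cC_{i+1} = \T_i$; this is exactly why $V_i$ has the domain $\hilbert_{Q_{i+1}} \tensor \cF^{\T_{i+1}}_X$ and codomain $\cF^{\T_i}_Y \tensor \hilbert_{Q_i}$ demanded by \defref{def:sequencerepresentation}, and for an arbitrary decreasing chain of cuts this type-matching would fail.

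It then remains to prove \eqref{eq:sequencerepresentation} for all $n \geq 2$ by induction on $n$. The base case $n = 2$ is precisely the displayed single-step identity for $i = 1$ (the factor $I^{\cC_2\setminus\cC_2}_X$ being trivial). For the inductive step I would tensor the single-step identity for $i = n$ with $I^{\cC_2\setminus\cC_{n+1}}_X$ and regroup: writing $\cC_2\setminus\cC_{n+2} = (\cC_2\setminus\cC_{n+1}) \sqcup \T_{n+1}$ and peeling $\cF^{\T_{n+1}}_X$ off to feed $V_n$ while the remaining identity on $\cC_2\setminus\cC_{n+1}$ is carried along, one obtains
\[ U^{\cC_n}_\Phi \tensor I^{\cC_2\setminus\cC_{n+1}}_X = \bigl( I^{\cC_{n+1}}_Y \tensor V_n \tensor I^{\cC_2\setminus\cC_{n+1}}_X \bigr)\bigl( U^{\cC_{n+1}}_\Phi \tensor I^{\cC_2\setminus\cC_{n+2}}_X \bigr)\,. \]
Substituting this for the trailing factor $U^{\cC_n}_\Phi \tensor I^{\cC_2\setminus\cC_{n+1}}_X$ in the $n$-th instance of \eqref{eq:sequencerepresentation} appends the factor $i = n$ to the product and upgrades $U^{\cC_n}_\Phi$ to $U^{\cC_{n+1}}_\Phi$, giving the $(n{+}1)$-th instance.

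I expect the only genuine obstacle to be the Hilbert-space bookkeeping in this last step: one must verify that every tensor factor acts on the correct position subset of the wires --- in particular that the identity on $\cC_2\setminus\cC_{n+1}$ commutes past $V_n$ because $\cC_2\setminus\cC_{n+1}$ is disjoint from both $\T_{n+1}$ (the input positions of $V_n$) and $\T_n$ (its output positions) --- so that the regrouping above is legitimate. Everything else is a direct assembly of \corref{cor:representation} with the elementary combinatorics of the iterates $\chi^{i-1}(\cC)$.
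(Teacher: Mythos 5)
Your proposal is correct and takes essentially the same route as the paper's own proof: the same canonical cuts $\cC_i \coloneqq \chi^{i-1}(\cC)$ with $\T_i = \cC_i \setminus \cC_{i+1}$, the isometries $V_i$ extracted from \corref{cor:representation} (equivalently \lemref{lem:representation}) for each consecutive pair, and the single-step identities assembled recursively into \eqnref{eq:sequencerepresentation} — you merely make explicit the induction and bookkeeping that the paper compresses into ``repeating this recursively''. One tiny repair in an otherwise sound detail you add beyond the paper: to show $\chi(\emptyset) = \emptyset$ you should apply \eqnref{eq:causality.finite} to the bounded cut $\T^{\leq t}$ for a putative $t \in \chi(\emptyset)$ (using $\chi^n(\emptyset) \subseteq \chi^n(\T^{\leq t})$ by \eqnref{eq:causality.monotone}), since $\chi(\emptyset)$ itself is not known in advance to be bounded.
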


\begin{proof} We fix $\Phi$ and $\cC$, and define
  $\cC_i \coloneqq \chi^{i-1}(\cC)$ and
  $\T_i \coloneqq \cC_i \setminus \cC_{i+1}$. Note that $\cC_1 = \cC$,
  $\T_i \cap \T_j = \emptyset$ for $i \neq j$, and from
  \eqnref{eq:causality.finite} we have
  $\bigcap_{i=1}^\infty \cC_i = \emptyset$.  Since by construction of
  the cuts, $\chi(\cC_n) = \cC_{n+1}$, the output at all positions
  $t \in \cC_n$ can be computed from the input in positions
  $\cC_{n+1}$, i.e., $\Phi^{\cC_n}$ is a map
  \[ \Phi^{\cC_n} : \tcop{\cF^{\cC_{n+1}}_X} \to \tcop{\cF^{\cC_{n}}_Y}.\]

  Plugging $\Phi^{\cC_1}$ and $\Phi^{\cC_{2}}$ in
  \lemref{lem:representation}, i.e, $\hilbert_A = \cF^{\cC_{3}}_X$,
  $\hilbert_B = \cF^{\T_{2}}_X$, $\hilbert_C = \cF^{\cC_{2}}_Y$
  and $\hilbert_D = \cF^{\T_1}_Y$, we obtain
  \[ U^{\cC_{1}}_\Phi = \left( I^{\cC_{2}}_Y \tensor V_1 \right)
  \left( U^{\cC_{2}}_{\Phi} \tensor I^{\T_2}_X \right) \,.\]
  Repeating this recursively for $i$ going from $2$ to $n$ results in
  the decomposition on the right-hand side of
  \eqnref{eq:sequencerepresentation}.
\end{proof}

This proposition can easily be extended to subnormalized causal boxes
by appending to every isometry $V_i$ a projector on the vacuum state
$\vacuum$ of the additional wire $R$. This is illustrated in
\figref{fig:isometries.sub}.

\begin{figure}[tb]
\begin{centering}
\begin{tikzpicture}[
wire/.style={-,thick},
unitary2a/.style={draw,thick,inner sep=1pt,minimum width=2.16cm,minimum height=2.16cm},
unitary2/.style={draw,thick,inner sep=1pt,minimum width=1cm,minimum height=2.16cm},
unitary1/.style={draw,thick,inner sep=1pt,minimum width=1cm,minimum height=.72cm},
invisible2/.style={thick,inner sep=1pt,minimum width=1cm,minimum height=2.16cm}]

\def\t{.72}

\small

\node (l1) at (1,0) {};
\node (l2) at (1,-\t) {};
\node[above right,yshift=-2] at (l2) {$\cF^{\cC_{n+1}}_X$};
\node (l3) at (1,-2*\t) {};
\node (l4) at (1,-3*\t) {};
\node[above right,yshift=-2] at (l4) {$\cF^{\T_{n}}_X$};
\node (l5) at (1,-4*\t) {};
\node[above right,yshift=-2] at (l5) {};
\node (l6) at (1,-5*\t) {};
\node (l7) at (1,-6*\t) {};
\node (l8) at (1,-7*\t) {};
\node[above right,yshift=-2] at (l8) {$\cF^{\T_{2}}_X$};
\node (l9) at (1,-8*\t) {};

\node (r1) at (13.5,0) {};
\node[above left,yshift=-2] at (r1) {$\vacuum^{\cC_{n}}_R$};
\node (r2) at (13.5,-\t) {};
\node[above left,yshift=-2] at (r2) {$\cF^{\cC_{n}}_Y$};
\node (r3) at (13.5,-2*\t) {};
\node[above left,yshift=-2] at (r3) {$\vacuum^{\T_{n-1}}_R$};
\node (r4) at (13.5,-3*\t) {};
\node[above left,yshift=-2] at (r4) {$\cF^{\T_{n-1}}_Y$};
\node (r5) at (13.5,-4*\t) {};
\node (r6) at (13.5,-5*\t) {};
\node (r7) at (13.5,-6*\t) {};
\node[above left,yshift=-2] at (r7) {$\vacuum^{\T_1}_R$};
\node (r8) at (13.5,-7*\t) {};
\node[above left,yshift=-2] at (r8) {$\cF^{\T_1}_Y$};
\node (r9) at (13.5,-8*\t) {};
\node[yshift=-2,above left,xshift=-2] at (r9) {$\hilbert_{Q_1}$};

\node[unitary2a] (u1) at (3.4,-\t) {\normalsize$\hat{U}^{\cC_n}_{\hat{\Phi}}$};
\node[unitary2] (u2) at (5.8,-3*\t) {$\hat{V}_{n-1}$};
\node[invisible2] (u3) at (7.7,-5*\t) {};
\node[unitary2] (u4) at (9.6,-7*\t) {$\hat{V}_1$};
\node[invisible2] (u5) at (11.5,-7*\t) {};
\node[unitary1] (p1) at (u2 |- r1) {$P^\Omega_{R_{\geq n}}$};
\node[unitary1] (p2) at (u3 |- r3) {$P^\Omega_{R_{n-1}}$};
\node[unitary1] (p4) at (u5 |- r7) {$P^\Omega_{R_1}$};
\node[fit=(u1)(u5),thick,dashed,draw,inner sep=.2cm] (u) {};
\node[above right] at (u.north west) {$U_{\Phi}^{\cC}$};

\draw[wire] (l2) to (u1.west |- l2);
\draw[wire] (l4) to (u2.west |- l4);
\draw[wire] (l8) to (u4.west |- l8);

\draw[wire] (u1.east |- r3) to node[auto,yshift=-2] {$\hilbert_{Q_n}$} (u2.west |- r3);
\draw[wire,dashed] (u2.east |- r5) to node[auto,yshift=-2,pos=.6] {$\hilbert_{Q_{n-1}}$} (u3.west |- r5);
\draw[wire,dashed] (u3.east |- r7) to node[auto,yshift=-2] {$\hilbert_{Q_{2}}$} (u4.west |- r7);
\draw[wire] (u1.east |- r1) to node[auto,yshift=-2] {$\cF^{\cC_n}_{R}$} (p1.west |- r1);
\draw[wire] (u2.east |- r3) to node[auto,yshift=-2] {$\cF^{\T_{n-1}}_{R}$} (p2.west |- r3);
\draw[wire] (u4.east |- r7) to node[auto,yshift=-2] {$\cF^{\T_1}_{R}$} (p4.west |- r7);

\draw[wire] (p1.east |- r1) to (r1);
\draw[wire] (u1.east |- r2) to (r2);
\draw[wire] (p2.east |- r3) to (r3);
\draw[wire] (u2.east |- r4) to (r4);
\draw[wire] (p4.east |- r7) to (r7);
\draw[wire] (u4.east |- r8) to (r8);
\draw[wire,-|] (u4.east |- r9) to (r9);
\end{tikzpicture}

\end{centering}
\caption[Decomposing a subnormalized causal
box]{\label{fig:isometries.sub}A trace\-/decreasing map $\Phi^\cC$ is
  decomposed into a sequence of isometries $\hat{V}_i$ followed
  by a projection $P^\Omega_{R_i}$ on the vacuum state of the
  wire $R$ in the subset $\T_i$.}
\end{figure}
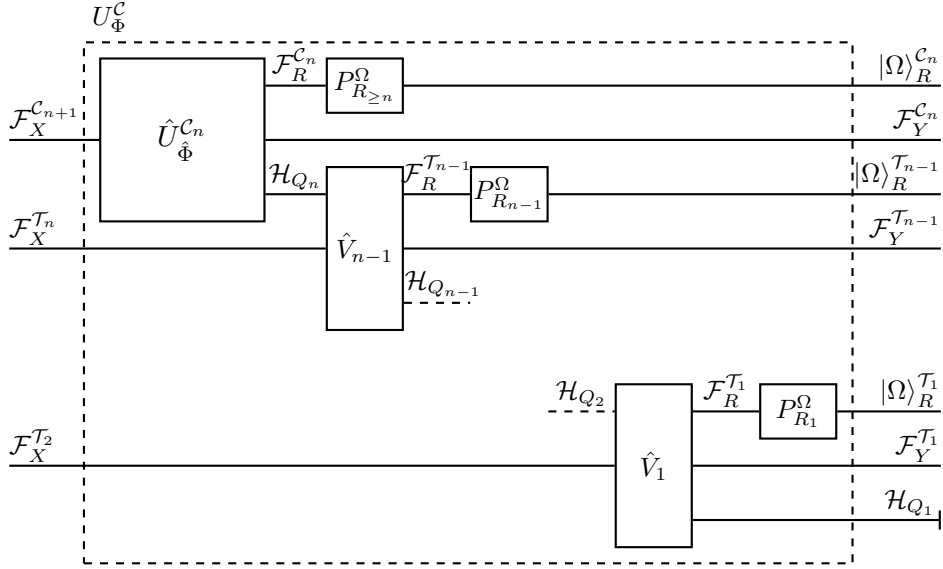

\begin{cor} \label{cor:sequencerepresentation} For every subnormalized
  causal box $\Phi = \left\{\Phi^\cC\right\}_{\cC \in \bcut}$ and
  every $\cC \in \bcut$, there exists a sequence representation of
  $\Phi^\cC$.
\end{cor}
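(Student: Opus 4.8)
The plan is to reduce the claim to the normalized case already settled in \propref{prop:sequencerepresentation}. By \defref{def:subbox} there is a normalized causal box $\hat\Phi = \{\hat\Phi^\cC\}_{\cC \in \bcut}$, with output wire $RY$, such that $\Phi^\cC = P^\Omega_R \circ \hat\Phi^\cC$ for every $\cC \in \bcut$. Fixing $\Phi$ and $\cC$, I would set $\cC_i \coloneqq \chi^{i-1}(\cC)$ and $\T_i \coloneqq \cC_i \setminus \cC_{i+1}$ exactly as in that proof, and apply \propref{prop:sequencerepresentation} to $\hat\Phi$. This yields minimal \streps $\hat U^{\cC_i}_{\hat\Phi}$ of $\hat\Phi^{\cC_i}$ together with isometries $\hat V_i : \hilbert_{Q_{i+1}} \tensor \cF^{\T_{i+1}}_X \to \cF^{\T_i}_{RY} \tensor \hilbert_{Q_i}$ satisfying \eqnref{eq:sequencerepresentation}, where now the output registers carry both $R$ and $Y$. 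It then remains to project the $R$ sub-wire onto the vacuum and to show that this projection can be pushed inside the product so that it acts one block at a time, producing the picture of \figref{fig:isometries.sub}.

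The key structural observation is that the vacuum projection factorizes over the disjoint position sets $\T_i$. Since $\bigcap_{i} \cC_i = \emptyset$ by \eqnref{eq:causality.finite} and the $\T_i$ are pairwise disjoint, we have $\cC_n = \bigcup_{i \geq n} \T_i$, so by repeated use of \eqnref{eq:splittime} the $R$-output space splits as $\cF^{\cC_n}_R \cong \bigotimes_{i \geq n} \cF^{\T_i}_R$, and its vacuum is the tensor product of the vacua of the factors. Consequently the global projection $P^\Omega_R$ defined on $\cF^{\cC_n}_R$ equals $\bigotimes_{i \geq n} P^\Omega_{R_i}$, where $P^\Omega_{R_i}$ projects $\cF^{\T_i}_R$ onto $\vacuum^{\T_i}_R$. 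Each $P^\Omega_{R_i}$ acts nontrivially only on $\cF^{\T_i}_R$, which is produced by the single isometry $\hat V_i$; every isometry applied after $\hat V_i$ in \eqnref{eq:sequencerepresentation}, namely $\hat V_j$ with $j < i$, carries this register through the identity factor $I^{\cC_{j+1}}_{RY}$ (because $\T_i \subseteq \cC_{j+1}$ for $i > j$). Hence $P^\Omega_{R_i}$ commutes with all of those later isometries and may be slid up to act immediately after $\hat V_i$.

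Carrying this out, I would define the sequence-representation operators of $\Phi^\cC$ by $V_i \coloneqq \bigl(\bra{\Omega}^{\T_i}_R \tensor I_{\cF^{\T_i}_Y} \tensor I_{\hilbert_{Q_i}}\bigr)\hat V_i$, which map $\hilbert_{Q_{i+1}} \tensor \cF^{\T_{i+1}}_X \to \cF^{\T_i}_Y \tensor \hilbert_{Q_i}$ as required by \defref{def:sequencerepresentation}, and take the base block $U^{\cC_n}_\Phi \coloneqq \bigl(\bra{\Omega}^{\cC_n}_R \tensor I\bigr)\hat U^{\cC_n}_{\hat\Phi}$, which is a \strep of $\Phi^{\cC_n} = P^\Omega_R \circ \hat\Phi^{\cC_n}$. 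Composing $(\bra{\Omega}_R \tensor I)$ with \eqnref{eq:sequencerepresentation} for $\hat\Phi$ and distributing the factored vacuum bra as above reproduces \eqnref{eq:sequencerepresentation} for $\Phi$, i.e.\ the decomposition drawn in \figref{fig:isometries.sub}. I expect the main obstacle to be purely bookkeeping: keeping track of which tensor factor each $\bra{\Omega}^{\T_i}_R$ must commute past, and restoring the minimality of the \streps demanded by \defref{def:sequencerepresentation} — the projected operators $U^{\cC_n}_\Phi$ need not be minimal, but, exactly as in the proofs of \lemref{lem:representation} and \propref{prop:sequencerepresentation}, one recovers minimal representations by absorbing the connecting ancilla isometries into the neighbouring $V_i$, which does not affect the block structure.
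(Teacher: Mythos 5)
Your proposal is correct and takes essentially the same route as the paper: the paper's proof likewise passes to the normalized box $\hat{\Phi}$ from \defref{def:subbox}, invokes \propref{prop:sequencerepresentation} to get its sequence representation $\{\hat{V}_i\}$, and verifies \eqnref{eq:sequencerepresentation} for $V_i = \left(\bra{\Omega}^{\T_i}_R \tensor I^{\T_i}_Y \tensor I_{Q_i}\right)\hat{V}_i$, exactly your construction. Your extra bookkeeping \--- the factorization $P^\Omega_R = \bigotimes_i P^\Omega_{R_i}$ over the disjoint sets $\T_i$ and sliding each projector past the later isometries \--- merely spells out what the paper leaves implicit in its one-line verification and in \figref{fig:isometries.sub}.
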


\begin{proof}
  Let
  $\hat{\Phi} = \{\hat{\Phi}^\cC : \tcop{\hilbert_X} \to
  \tcop{\hilbert_{RY}}\}_{\cC \in \bcut}$
  be the corresponding normalized box, and let
  $\{\hat{V}_i\}_{i =1}^\infty$ be the sequence representation for
  $\hat{\Phi}^\cC$. \eqnref{eq:sequencerepresentation} is satisfied
  for
\begin{equation*}
V_i = \left(\bra{\Omega}^{\T_i}_R \tensor I^{\T_{i}}_Y \tensor I_{Q_i}\right)
\hat{U}_i\,. \qedhere
\end{equation*}
\end{proof}

\section{Composing causal boxes}
\label{sec:network}

Two causal boxes can be combined by ``plugging'' output wires into
input wires, resulting in a new causal box. Unlike for ordered
networks captured by combs~\cite{GW07,Gut12,CDP09,Har11,Har12,Har15},
where cycles are forbidden, here two systems $\Phi$ and $\Psi$ can be
connected with wires going both from $\Phi$ to $\Psi$ and $\Psi$ to
$\Phi$, as illustrated in \figref{fig:box2}.

Connecting systems can be decomposed in two steps. First $\Phi$ and
$\Psi$ are composed in parallel, resulting in a new system $\Gamma =
\Phi \| \Psi$, whose input and output ports are the union of the ports
from $\Phi$ and $\Psi$. Then loops are applied to $\Gamma$, connecting
its own output and input ports. These two steps are defined in
\secsref{sec:network.parallel} and \ref{sec:network.loops},
respectively.

In \secref{sec:network.composition} we combine these two steps to
define the composition of two systems, $\Phi \connect{P} \Psi$,
introduced in \secref{sec:theory}. We then prove in
\thmref{thm:closure} and \thmref{thm:comp.order.indep} that it
satisfies closure and composition order independence.

\subsection{Parallel composition}
\label{sec:network.parallel}

We first define parallel composition in \defref{def:parallel}. Then we
prove in \propref{prop:parallel} that the resulting system is still a
valid causal box.

\begin{deff}[Parallel composition]
\label{def:parallel}
Let $\Phi = \left\{\Phi^\cC\right\}_{\cC \in \bcut}$ and
$\Psi = \left\{\Psi^\cC\right\}_{\cC \in \bcut}$ be (possibly subnormalized)
$(d_A,d_C)$- and $(d_B,d_D)$\-/causal boxes. The parallel composition
of the two is defined as the $(d_A+d_B,d_C+d_D)$\-/causal
box\footnote{Recall that the dimension of a wire $d_A$ is the
  dimension of the messages on the wire, not the dimension of the wire
  Hilbert space (which is infinite). By \eqnref{eq:fockisomorphism}
  the dimension of the tensor product of two wires is the \emph{sum}
  of their dimensions.}
\[\Gamma \coloneqq \left\{ \Phi^\cC \tensor \Psi^\cC\right\}_{\cC \in
  \bcut}\,,\] which we denote $\Gamma = \Phi \| \Psi$.
\end{deff}

\begin{prop}
  \label{prop:parallel} If
  $\Phi = \{ \Phi^\cC : \tcop{\cF^\T_A} \to \tcop{\cF^{\cC}_C} \}_{\cC
    \in \bcut}$
  and
  $\Psi = \{ \Psi^\cC : \tcop{\cF^\T_B} \to \tcop{\cF^{\cC}_D}\}_{\cC
    \in \bcut}$
  are two (sub)normalized causal boxes, then so is
  $\Gamma = \Phi \| \Psi$.
\end{prop}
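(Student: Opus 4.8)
The plan is to verify directly that $\Gamma = \{\Phi^\cC \tensor \Psi^\cC\}_{\cC\in\bcut}$ meets every requirement of \defref{def:quantumbox} (resp.\ \defref{def:subbox}). Using the wire isomorphisms \eqnref{eq:splitspace} and \eqnref{eq:splittime}, I would first identify the joint input space with $\cF^\T_{AB}\cong\cF^\T_A\tensor\cF^\T_B$ and the joint output space at a cut $\cC$ with $\cF^\cC_{CD}\cong\cF^\cC_C\tensor\cF^\cC_D$, so that each $\Gamma^\cC=\Phi^\cC\tensor\Psi^\cC$ is a map $\tcop{\cF^\T_{AB}}\to\tcop{\cF^\cC_{CD}}$ of the correct type for a $(d_A+d_B,d_C+d_D)$-box. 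That it is CPTP (resp.\ CP) is immediate, since a tensor product of two CPTP (resp.\ CP) maps is again CPTP (resp.\ CP).

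The consistency condition \eqnref{eq:consistency} and the causality condition \eqnref{eq:causality} both follow by exploiting that the relevant partial traces factor across the tensor product. For consistency, for $\cC\subseteq\cD$ the trace on the joint output wire factors as $\tr_{\cD\setminus\cC}=\tr_{\cD\setminus\cC}^{C}\tensor\tr_{\cD\setminus\cC}^{D}$, whence ${\tr_{\cD\setminus\cC}}\circ(\Phi^\cD\tensor\Psi^\cD)=({\tr_{\cD\setminus\cC}}\circ\Phi^\cD)\tensor({\tr_{\cD\setminus\cC}}\circ\Psi^\cD)=\Phi^\cC\tensor\Psi^\cC$ by the consistency of $\Phi$ and $\Psi$. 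For causality I would propose $\chi_\Gamma\coloneqq\chi_\Phi\cup\chi_\Psi$, i.e.\ $\chi_\Gamma(\cC)\coloneqq\chi_\Phi(\cC)\cup\chi_\Psi(\cC)$, which is a cut, being a union of two down-closed sets. Since $\chi_\Phi(\cC)\subseteq\chi_\Gamma(\cC)$, composing partial traces over the nested position sets gives ${\tr_{\T\setminus\chi_\Phi(\cC)}}\circ{\tr_{\T\setminus\chi_\Gamma(\cC)}}={\tr_{\T\setminus\chi_\Phi(\cC)}}$, so with $\Phi^\cC=\Phi^\cC\circ\tr_{\T\setminus\chi_\Phi(\cC)}$ we get $\Phi^\cC\circ\tr_{\T\setminus\chi_\Gamma(\cC)}=\Phi^\cC$ (and likewise for $\Psi$); factoring $\tr_{\T\setminus\chi_\Gamma(\cC)}$ across $AB$ then yields \eqnref{eq:causality} for $\Gamma$.

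The substantial part is checking that $\chi_\Gamma$ is itself a causality function in the sense of \defref{def:causality.function}. The homomorphism property \eqnref{eq:causality.homomorphism} and the monotonicity \eqnref{eq:causality.monotone} are inherited termwise, but the strict decrease \eqnref{eq:causality.decreasing} and the finiteness \eqnref{eq:causality.finite} are the main obstacle: a union of two cuts each strictly smaller than $\cC$ need not be strictly smaller, and $\chi_\Gamma^n$ does not factor through the iterates of $\chi_\Phi$ and $\chi_\Psi$. My plan here is to pass to the step relation $s\prec_\chi t\iff s\in\chi(\T^{\leq t})$, for which $s\prec_\chi t\implies s<t$ and, crucially, $\chi(\T^{\leq t})$ being a cut makes $\prec_\chi$ down-closed in its first argument. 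One checks that $t\in\chi^n(\cC)$ holds iff there is a $\prec_\chi$-chain $t\prec_\chi t_1\prec_\chi\dotsb\prec_\chi t_n$ inside $\cC$, so that \eqnref{eq:causality.finite} for $\chi_\Phi$ (resp.\ $\chi_\Psi$) is precisely the statement that $\prec_\Phi$-chains (resp.\ $\prec_\Psi$-chains) starting at a fixed $t\in\cC$ have some uniformly bounded length $L_\Phi$ (resp.\ $L_\Psi$).

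The key lemma I would then establish is that a mixed $\prec_\Gamma$-chain from $t$ has at most $L_\Phi+L_\Psi$ steps. Indeed, such a chain $t=t_0\prec_\Gamma\dotsb\prec_\Gamma t_m$ is totally ordered, and whenever the step $t_i\prec_\Gamma t_{i+1}$ is a $\Phi$-step, down-closedness of $\chi_\Phi(\T^{\leq t_{i+1}})$ forces $t_j\prec_\Phi t_{i+1}$ for every $j\leq i$; collecting the successors of the $\Phi$-steps therefore produces a pure $\prec_\Phi$-chain from $t$ with one step per $\Phi$-step of the mixed chain, and symmetrically for $\Psi$. Hence the number of $\Phi$-steps is at most $L_\Phi$ and of $\Psi$-steps at most $L_\Psi$, bounding $m$ and giving \eqnref{eq:causality.finite} for $\chi_\Gamma$; moreover a $\prec_\Gamma$-chain from any $t\in\cC$ of maximal (finite) length ends in a $\prec_\Gamma$-maximal element of $\cC$, which lies in $\cC\setminus\chi_\Gamma(\cC)$ and thus gives \eqnref{eq:causality.decreasing}. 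Finally, for the subnormalized case I would simply compose the defining projections: writing $\Phi^\cC=P^{\Omega}_{R_\Phi}\circ\hat\Phi^\cC$ and $\Psi^\cC=P^{\Omega}_{R_\Psi}\circ\hat\Psi^\cC$, one obtains $\Gamma^\cC=P^{\Omega}_{R_\Phi R_\Psi}\circ(\hat\Phi^\cC\tensor\hat\Psi^\cC)$, so $\Gamma$ is the projection of the normalized box $\hat\Phi\| \hat\Psi$ (already covered by the normalized case) and is therefore a subnormalized causal box.
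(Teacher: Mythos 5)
Your overall architecture is sound and largely coincides with the paper's proof of this proposition: the choice $\chi_\Gamma(\cC)\coloneqq\chi_\Phi(\cC)\cup\chi_\Psi(\cC)$, the verification of \eqnref{eq:causality} by factoring the partial traces across the tensor product (the consistency check, which the paper leaves implicit, is also fine), and the subnormalized case via $P^{\Omega}_{R_\Phi R_\Psi}\circ(\hat{\Phi}^\cC\tensor\hat{\Psi}^\cC)=\Phi^\cC\tensor\Psi^\cC$ are all exactly the paper's steps. The genuine gap is in your verification that $\chi_\Gamma$ satisfies \eqnref{eq:causality.finite} (and, derivatively, \eqnref{eq:causality.decreasing}). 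The forward direction of your claimed equivalence, ``$t\in\chi^n(\cC)$ iff there is a $\prec_\chi$-chain $t\prec_\chi t_1\prec_\chi\dotsb\prec_\chi t_n$ inside $\cC$,'' requires the point-generated property $\chi(\cD)=\bigcup_{u\in\cD}\chi(\T^{\leq u})$, which is strictly stronger than the binary-union homomorphism \eqnref{eq:causality.homomorphism} that \defref{def:causality.function} actually imposes; the point-generated form appears in the paper only in a footnote, as a \emph{sufficient} construction, not as an axiom. Concretely, take $\T=\{a_n\}_{n\in\N}\cup\{1\}$ with $a_n=1-1/n$; the cuts are $\emptyset\subsetneq A_1\subsetneq A_2\subsetneq\dotsb\subsetneq A_\infty\subsetneq\T$, where $A_n=\{a_1,\dotsc,a_n\}$ and $A_\infty=\{a_n\}_n$, all bounded. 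Define $\chi(\emptyset)=\emptyset$, $\chi(A_n)=\emptyset$ for finite $n$, $\chi(A_\infty)=A_1$, $\chi(\T)=A_\infty$. All four conditions of \defref{def:causality.function} hold (the cuts are totally ordered, so \eqnref{eq:causality.homomorphism} reduces to monotonicity), yet $a_1\in\chi(A_\infty)$ while $a_1\notin\chi(\T^{\leq u})=\emptyset$ for every $u\in A_\infty$: no chain witnesses $a_1\in\chi(A_\infty)$. Your finiteness proof for $\chi_\Gamma$ uses exactly this failing direction \--- boundedness of mixed $\prec_\Gamma$-chains must be converted back into $t\notin\chi_\Gamma^n(\cC)$ \--- and your derivation of \eqnref{eq:causality.decreasing} from a maximal-length chain uses it again (a $\prec_\Gamma$-maximal element of $\cC$ need not lie outside $\chi_\Gamma(\cC)$). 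This is so even though the combinatorial heart of your key lemma \--- extracting a pure $\prec_\Phi$-chain from the $\Phi$-steps of a mixed chain via down-closedness of $\chi_\Phi(\T^{\leq t_{i+1}})$ \--- is correct.

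The paper closes this step in \lemref{lem:causality.parallel} at the level of cuts rather than points, which needs no point-generated property. Binary homomorphism alone gives $\chi_\Gamma^n(\cC)=\bigcup_{s\in\{1,2\}^n}\chi_s(\cC)$ with $\chi_s=\chi_{s_1}\circ\dotsb\circ\chi_{s_n}$, $\chi_1=\chi_\Phi$, $\chi_2=\chi_\Psi$ (only finite unions ever arise, since each application of $\chi_\Gamma$ splits into two terms); monotonicity together with $\chi_i(\cD)\subseteq\cD$ then yields the absorption rule $\chi_2\circ\chi_1\circ\chi_2(\cC)\subseteq\chi_1\circ\chi_2(\cC)\subseteq\chi_1(\cC)$, so any word $s$ with at least $k$ letters equal to $i$ satisfies $\chi_s(\cC)\subseteq\chi_i^k(\cC)$. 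Choosing $n_1,n_2$ with $t\notin\chi_\Phi^{n_1}(\cC)$ and $t\notin\chi_\Psi^{n_2}(\cC)$, every word of length $n_1+n_2$ has at least $n_1$ ones or at least $n_2$ twos, giving $t\notin\chi_\Gamma^{n_1+n_2}(\cC)$; and \eqnref{eq:causality.decreasing} is deduced directly from \eqnref{eq:causality.finite} (if $\chi_\Gamma(\cC)=\cC$ then $t\in\chi_\Gamma^n(\cC)$ for all $n$), with no appeal to maximal elements. Your chain argument is the point-level shadow of this word-counting and becomes correct if one strengthens the definition by the footnoted construction $\chi(\cC)=\bigcup_{t\in\cC}\hat{\chi}(t)$; without that added assumption, you should replace it by the cut-level expansion above.
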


\begin{proof}
  We first consider normalized causal boxes. To prove that
  $\Phi \| \Psi$ is a valid causal box, we need to find a causality
  function $\chi_\Gamma : \cut \to \cut$ such that
\[ \Phi^{\cC} \tensor \Psi^{\cC} = \left( \Phi^{\cC} \tensor
  \Psi^{\cC} \right) \circ \tr_{\T \setminus \chi_\Gamma(\cC)}\,.\]
We prove in \lemref{lem:causality.parallel} that the function
$\chi_\Gamma(\cC) \coloneqq \chi_\Phi(\cC) \cup \chi_\Psi(\cC)$
satisfies the requirements of a causality function given in
\defref{def:causality.function}. And we have
\begin{align*}
  \left( \Phi^{\cC} \tensor \Psi^{\cC} \right) \circ
  \tr_{\T \setminus \chi_{\Gamma}(\cC)}
  & = \left( \Phi^{\cC} \circ
    \tr_{A^{\T \setminus \chi_{\Gamma}(\cC)}} \right) \tensor
    \left( \Psi^{\cC} \circ \tr_{B^{\T \setminus \chi_{\Gamma}(\cC)}} \right) \\
  & = \left( \Phi^{\cC} \circ \tr_{A^{\T \setminus \chi_{\Phi}(\cC)}} \right) \tensor
    \left( \Psi^{\cC} \circ \tr_{B^{\T \setminus \chi_{\Psi}(\cC)}} \right) \\
  & = \Phi^{\cC} \tensor \Psi^{\cC}\,,
\end{align*}
where $\tr_{A^{\T \setminus \chi_{\Gamma}(\cC)}}$ and
$\tr_{B^{\T \setminus \chi_{\Gamma}(\cC)}}$ trace out the inputs in positions
$\T \setminus \chi_\Gamma(\cC)$ on the $A$ and $B$ wires, respectively.

For subnormalized boxes $\Phi$ and $\Psi$, let
$\hat{\Phi} = \left\{ \hat{\Phi}^{\cC} : \tcop{\cF^\T_A} \to \tcop{\cF^{\cC}_{RC}}
\right\}$ and $\hat{\Psi} = \left\{ \hat{\Psi}^{\cC} : \tcop{\cF^\T_B} \to \tcop{\cF^{\cC}_{SD}}
\right\}$ be their normalized counterparts. Then
\begin{align*}
  P^{\Omega}_{RS} \circ \left(\hat{\Phi}^{\cC} \tensor \hat{\Psi}^{\cC}\right)
  & = \left(P^{\Omega}_{R} \circ \hat{\Phi}^{\cC}\right) \tensor
    \left(P^{\Omega}_{S} \circ \hat{\Psi}^{\cC}\right) \\
  & = \Phi^\cC \tensor \Psi^\cC\,,
\end{align*}
where $P^{\Omega}_R$ projects the wire $R$ on the vacuum state
$\vacuum$.
\end{proof}

\subsection{Loops}
\label{sec:network.loops}

In this section we first give an intuitive explanation of what it
means to put a loop from an output wire to an input wire of a causal
box and show how to capture this mathematically. We then provide the
formal definition in \defref{def:loop}.  In \propref{prop:loop} we
prove that the resulting system after the loop has been applied is a
new valid causal box.

Before considering the case of a loop on an arbitrary causal box, we
first look at the simpler case of a map
$\Phi : \tcop{\hilbert_{AB}} \to \tcop{\hilbert_{CD}}$ for which the
output on $C$ does not depend on the input on $B$. A \strep of such a
system is illustrated on the left in \figref{fig:connecting}. Here, it
is clear what happens when $C$ is looped back to $B$: first $U_\Psi$
is applied to the input on wire $A$ to obtain the output on $C$, then
$V$ is applied to both $C$ and the internal state of the system to
produce the output on $D$, as depicted on the right in
\figref{fig:connecting}.

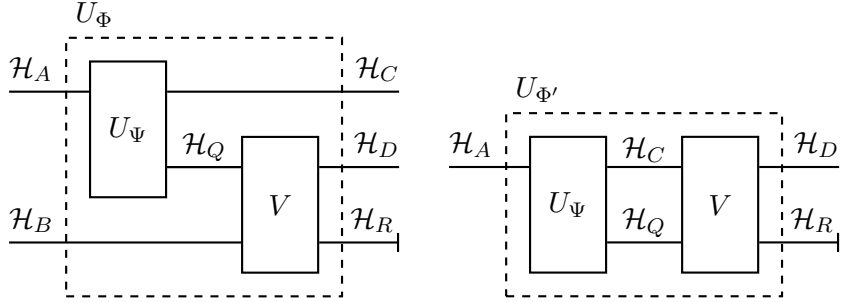
\begin{figure}[tb]
\begin{centering}
\begin{tikzpicture}[
wire/.style={-,thick},
unitary2/.style={draw,thick,minimum width=1cm,minimum height=1.8cm}]

\node (l1) at (1.3,0) {};
\node[above right] at (l1) {$\hilbert_A$};
\node (l3) at (1.3,-2) {};
\node[above right] at (l3) {$\hilbert_B$};
\node (r1) at (6.7,0) {};
\node[above left] at (r1) {$\hilbert_C$};
\node (r2) at (6.7,-1) {};
\node[above left] at (r2) {$\hilbert_D$};
\node (r3) at (6.7,-2) {};
\node[above left,xshift=-2] at (r3) {$\hilbert_R$};

\node[unitary2] (u1) at (3,-.5) {$U_\Psi$};
\node[unitary2] (u2) at (5,-1.5) {$V$};
\node[fit=(u1)(u2),thick,dashed,draw,inner sep=.3cm] (u) {};
\node[above right] at (u.north west) {$U_\Phi$};

\draw[wire] (l1) to (u1.west |- l1);
\draw[wire] (l3) to (u2.west |- l3);
\draw[wire] (u1.east |- r2) to node[auto,yshift=-2] {$\hilbert_Q$} (u2.west |- r2);
\draw[wire] (u1.east |- r2) to (u2.west |- r2);
\draw[wire] (u1.east |- r1) to (r1);
\draw[wire] (u2.east |- r2) to (r2);
\draw[wire,-|] (u2.east |- r3) to (r3);
\end{tikzpicture}
\begin{tikzpicture}[
wire/.style={-,thick},
unitary2/.style={draw,thick,minimum width=1cm,minimum height=1.8cm}]

\node (l1) at (1.3,0) {};
\node[above right] at (l1) {$\hilbert_A$};
\node (r1) at (6.7,0) {};
\node[above left] at (r1) {$\hilbert_D$};
\node (r2) at (6.7,-1) {};
\node[above left,xshift=-2] at (r2) {$\hilbert_R$};

\node[unitary2] (u1) at (3,-.5) {$U_\Psi$};
\node[unitary2] (u2) at (5,-.5) {$V$};
\node[fit=(u1)(u2),thick,dashed,draw,inner sep=.3cm] (u) {};
\node[above right] at (u.north west) {$U_{\Phi'}$};

\draw[wire] (l1) to (u1.west |- l1);
\draw[wire] (u1.east |- r1) to node[auto,yshift=-2] {$\hilbert_C$} (u2.west |- r1);
\draw[wire] (u1.east |- r2) to node[auto,yshift=-2] {$\hilbert_Q$} (u2.west |- r2);
\draw[wire] (u2.east |- r1) to (r1);
\draw[wire,-|] (u2.east |- r2) to (r2);
\end{tikzpicture}

\end{centering}
\caption[A loop on a causally ordered
system]{\label{fig:connecting}Connecting $C$ to $B$ in the system
  depicted on the left results in the system depicted on the right.}
\end{figure}

For the more general case, we make use of
\propref{prop:sequencerepresentation}, which shows that all causal
boxes admit a sequence representation, i.e., every map $\Phi^\cC$ can
be decomposed into an infinite sequence of isometries whose input and
output positions are disjoint. In \figref{fig:loop1} we reproduce
\figref{fig:isometries}, but with two input and output wires. Here,
the isometries $V_i$ for $i \geq n$ have been grouped together as one
operator $U^{\cC_n}_\Phi$ so that $U^{\cC}_\Phi$ consists of a finite
sequence of operators. We apply the same reasoning as above to this
system, i.e., we connect the output on wire $C$ in positions $\T_i$
for $i \leq n$ to the input on wire $B$ of $V_{i-1}$. This results in
a map $\Psi^{\cC}_n$, which corresponds to the system $\Phi^\cC$ with
all outputs on wire $C$ in positions $\bigcup_{i=2}^n \T_i$ looped back
to wire $B$, and has the \strep drawn in \figref{fig:loop2}.

\begin{figure}[tbp]
\begin{subfigure}{\textwidth}
\begin{centering}
\begin{tikzpicture}[
wire/.style={-,thick},
unitary2/.style={draw,thick,minimum width=1cm,minimum height=2.16cm},
invisible2/.style={thick,minimum width=1cm,minimum height=2.16cm},
unitary2a/.style={draw,thick,minimum width=2cm,minimum height=2.16cm}]

\def\t{.72}

\small

\node (l1) at (1,0) {};
\node (l2) at (1,-\t) {};

\def\t{.72}

\node (l1) at (.1,0) {};
\node (l2) at (.1,-\t) {};
\node[above right,yshift=-2] at (l2) {\small$\cF^{\cC_{n+1}}_A$};
\node (l3) at (.1,-2*\t) {};
\node[above right,yshift=-2] at (l3) {\small$\cF^{\cC_{n+1}}_B$};
\node (l4) at (.1,-3*\t) {};
\node[above right,yshift=-2] at (l4) {\small$\cF^{\T_n}_A$};
\node (l5) at (.1,-4*\t) {};
\node[above right,yshift=-2] at (l5) {\small$\cF^{\T_n}_B$};
\node (l6) at (.1,-5*\t) {};
\node (l7) at (.1,-6*\t) {};
\node (l8) at (.1,-7*\t) {};
\node[above right,yshift=-2] at (l8) {\small$\cF^{\T_{3}}_A$};
\node (l9) at (.1,-8*\t) {};
\node[above right,yshift=-2] at (l9) {\small$\cF^{\T_{3}}_B$};
\node (l10) at (.1,-9*\t) {};
\node[above right,yshift=-2] at (l10) {\small$\cF^{\T_{2}}_A$};
\node (l11) at (.1,-10*\t) {};
\node[above right,yshift=-2] at (l11) {\small$\cF^{\T_{2}}_B$};

\node (r1) at (12.3,0) {};
\node[above left,yshift=-2] at (r1) {\small$\cF^{\cC_n}_C$};
\node (r2) at (12.3,-\t) {};
\node[above left,yshift=-2] at (r2) {\small$\cF^{\cC_n}_D$};
\node (r3) at (12.3,-2*\t) {};
\node[above left,yshift=-2] at (r3) {\small$\cF^{\T_{n-1}}_C$};
\node (r4) at (12.3,-3*\t) {};
\node[above left,yshift=-2] at (r4) {\small$\cF^{\T_{n-1}}_D$};
\node (r5) at (12.3,-4*\t) {};
\node (r6) at (12.3,-5*\t) {};
\node (r7) at (12.3,-6*\t) {};
\node[above left,yshift=-2] at (r7) {\small$\cF^{\T_2}_C$};
\node (r8) at (12.3,-7*\t) {};
\node[above left,yshift=-2] at (r8) {\small$\cF^{\T_2}_D$};
\node (r9) at (12.3,-8*\t) {};
\node[yshift=-2,above left] at (r9) {\small$\cF^{\T_1}_C$};
\node (r10) at (12.3,-9*\t) {};
\node[yshift=-2,above left] at (r10) {\small$\cF^{\T_1}_D$};
\node (r11) at (12.3,-10*\t) {};
\node[yshift=-2,above left,xshift=-2] at (r11) {\small$\hilbert_{Q_1}$};

\node[unitary2a] (u1) at (2.5,-\t) {$U^{\cC_n}_\Phi$};
\node[unitary2] (u2) at (4.85,-3*\t) {$V_{n-1}$};
\node[invisible2] (u3) at (6.7,-5*\t) {};
\node[unitary2] (u4) at (8.55,-7*\t) {$V_2$};
\node[unitary2] (u5) at (10.4,-9*\t) {$V_1$};
\node[fit=(u1)(u5),thick,dashed,draw,inner sep=.2cm] (u) {};
\node[above right] at (u.north west) {$U_{\Phi}^{\cC}$};

\draw[wire] (l2) to (u1.west |- l2);
\draw[wire] (l3) to (u1.west |- l3);
\draw[wire] (l4) to (u2.west |- l4);
\draw[wire] (l5) to (u2.west |- l5);
\draw[wire] (l8) to (u4.west |- l8);
\draw[wire] (l9) to (u4.west |- l9);
\draw[wire] (l10) to (u5.west |- l10);
\draw[wire] (l11) to (u5.west |- l11);

\draw[wire] (u1.east |- r3) to node[auto,yshift=-2] {\small$\hilbert_{Q_n}$} (u2.west |- r3);
\draw[wire,dashed] (u2.east |- r5) to node[auto,yshift=-2,pos=.6] {\small$\hilbert_{Q_{n-1}}$} (u3.west |- r5);
\draw[wire,dashed] (u3.east |- r7) to node[auto,yshift=-2] {\small$\hilbert_{Q_{3}}$} (u4.west |- r7);
\draw[wire] (u4.east |- r9) to node[auto,yshift=-2] {\small$\hilbert_{Q_2}$} (u5.west |- r9);

\draw[wire] (u1.east |- r1) to (r1);
\draw[wire] (u1.east |- r2) to (r2);
\draw[wire] (u2.east |- r3) to (r3);
\draw[wire] (u2.east |- r4) to (r4);
\draw[wire] (u4.east |- r7) to (r7);
\draw[wire] (u4.east |- r8) to (r8);
\draw[wire] (u5.east |- r9) to (r9);
\draw[wire] (u5.east |- r10) to (r10);
\draw[wire,-|] (u5.east |- r11) to (r11);
\end{tikzpicture}

\end{centering}
\caption[Before]{\label{fig:loop1}By
  \propref{prop:sequencerepresentation} any map
  $\Phi^{\cC} : \tcop{\cF^\T_{AB}} \to \tcop{\cF^\cC_{CD}}$ can be
  decomposed into a sequence of isometries $V_i$ with disjoint input
  and output positions, as depicted here.}
\end{subfigure}

\vspace{9pt}

\begin{subfigure}{\textwidth}
\begin{centering}
\begin{tikzpicture}[
wire/.style={-,thick},
unitary2/.style={draw,thick,inner sep=2pt,minimum width=1cm,minimum height=2.16cm},
invisible2/.style={thick,minimum width=1cm,minimum height=2.16cm},
unitary2a/.style={draw,thick,minimum width=2cm,minimum height=2.88cm}]

\def\t{.72}

\node (l1) at (.15,0) {};
\node (l2) at (.15,-\t) {};
\node (l3) at (.15,-2*\t) {};
\node[above right,yshift=-2] at (l3) {\small$\cF^{\cC_{n+1}}_A$};
\node (l4) at (.15,-3*\t) {};
\node[above right,yshift=-2] at (l4) {\small$\cF^{\cC_{n+1}}_B$};
\node (l5) at (.15,-4*\t) {};
\node[above right,yshift=-2] at (l5) {\small$\cF^{\T_n}_A$};
\node (l6) at (.15,-5*\t) {};
\node (l7) at (.15,-6*\t) {};
\node[above right,yshift=-2] at (l7) {\small$\cF^{\T_{3}}_A$};
\node (l8) at (.15,-7*\t) {};
\node[above right,yshift=-2] at (l8) {\small$\cF^{\T_{2}}_A$};

\node (r1) at (12.35,0) {};
\node[above left,yshift=-2] at (r1) {\small$\cF^{\cC_{n+1}}_C$};
\node (r2) at (12.35,-\t) {};
\node[above left,yshift=-2] at (r2) {\small$\cF^{\cC_n}_D$};
\node (r3) at (12.35,-2*\t) {};
\node[above left,yshift=-2] at (r3) {\small$\cF^{\T_{n-1}}_D$};
\node (r4) at (12.35,-3*\t) {};
\node (r5) at (12.35,-4*\t) {};
\node[yshift=-2,above left] at (r5) {\small$\cF^{\T_{2}}_D$};
\node (r6) at (12.35,-5*\t) {};
\node[above left,yshift=-2] at (r6) {\small$\cF^{\T_{1}}_D$};
\node (r7) at (12.35,-6*\t) {};
\node[yshift=-2,above left,xshift=-2] at (r7) {\small$\hilbert_{Q_1}$};
\node (r8) at (12.35,-7*\t) {};
\node[yshift=-2,above left,xshift=-2] at (r8) {\small$\cF^{\T_1}_C$};

\node[unitary2a] (u1) at (2.5,-1.5*\t) {$U^{\cC_n}_\Phi$};
\node[unitary2] (u2) at (4.85,-3*\t) {$V_{n-1}$};
\node[invisible2] (u3) at (6.7,-4*\t) {};
\node[unitary2] (u4) at (8.55,-5*\t) {$V_2$};
\node[unitary2] (u5) at (10.4,-6*\t) {$V_1$};
\node[fit=(u1)(u5),thick,dashed,draw,inner sep=.2cm] (u) {};
\node[above right] at (u.north west) {$U_{\Psi_n}^{\cC}$};

\draw[wire] (l3) to (u1.west |- l3);
\draw[wire] (l4) to (u1.west |- l4);
\draw[wire] (l5) to (u2.west |- l5);
\draw[wire] (l7) to (u4.west |- l7);
\draw[wire] (l8) to (u5.west |- l8);

\draw[wire] (u1.east |- r3) to node[auto,yshift=-2] {\small$\hilbert_{Q_n}$} (u2.west |- r3);
\draw[wire] (u1.east |- r4) to node[auto,yshift=-2] {\small$\cF^{\T_n}_C$} (u2.west |- r4);
\draw[wire,dashed] (u2.east |- r4) to node[auto,yshift=-2,pos=.6] {\small$\hilbert_{Q_{n-1}}$} (u3.west |- r4);
\draw[wire,dashed] (u2.east |- r5) to node[auto,yshift=-2,pos=.6] {\small$\cF^{\T_{n-1}}_C$} (u3.west |- r5);
\draw[wire,dashed] (u3.east |- r5) to node[auto,yshift=-2] {\small$\hilbert_{Q_{3}}$} (u4.west |- r5);
\draw[wire,dashed] (u3.east |- r6) to node[auto,yshift=-2] {\small$\cF^{\T_{3}}_C$} (u4.west |- r6);
\draw[wire] (u4.east |- r6) to node[auto,yshift=-2] {\small$\hilbert_{Q_{2}}$} (u5.west |- r6);
\draw[wire] (u4.east |- r7) to node[auto,yshift=-2] {\small$\cF^{\T_{2}}_C$} (u5.west |- r7);

\draw[wire] (u1.east |- r1) to (r1);
\draw[wire] (u1.east |- r2) to (r2);
\draw[wire] (u2.east |- r3) to (r3);
\draw[wire] (u4.east |- r5) to (r5);
\draw[wire] (u5.east |- r6) to (r6);
\draw[wire,-|] (u5.east |- r7) to (r7);
\draw[wire,-|] (u5.east |- r8) to (r8);
\end{tikzpicture}

\end{centering}
\caption[After]{\label{fig:loop2}To (partially) connect $C$ to $B$,
  the output from $V_i$ in positions $\T_i$ on wire $C$ is input to
  $V_{i+1}$.}
\end{subfigure}

\caption[A partial loop on a decomposed causal
box]{\label{fig:loop}Putting a loop from the output wire $C$ to the
  input wire $B$ in positions $\bigcup_{i=1}^n \T_i$ in the system
  depicted in \figref{fig:loop1} results in the system drawn in
  \figref{fig:loop2}. Looping all of $C$ back to $B$ corresponds to
  taking the limit as $n \to \infty$.}
\end{figure}
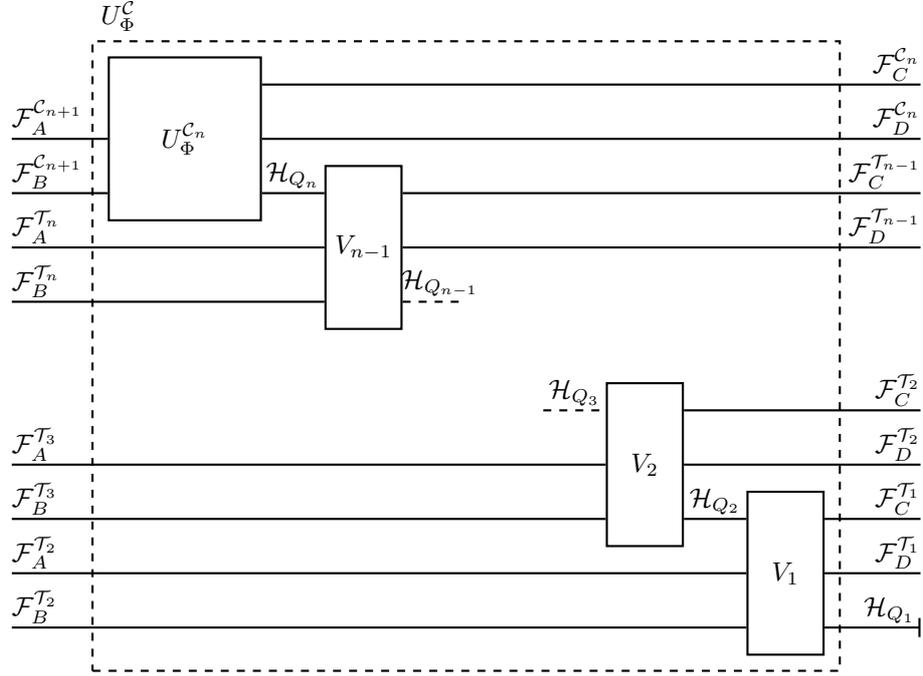
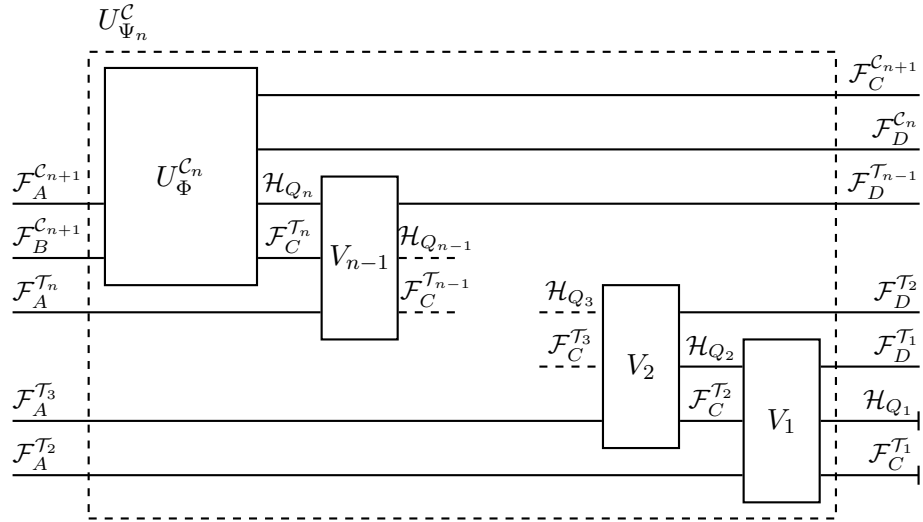

The causal box $\Psi = \Phi^{(C \hookrightarrow B)}$, resulting from
looping all of $C$ to $B$, can be defined as the limit of these maps
when $n \to \infty$, i.e.,
$\Psi \coloneqq \left\{\Psi^{\cC}\right\}_{\cC \in \bcut}$ where
\begin{equation}
\label{eq:loop.limit}
\Psi^{\cC} \coloneqq \lim_{n \to \infty} \Psi^{\cC}_n \,.
\end{equation} This is however a rather
inconvenient definition with which to work. Instead, we provide a
closed formula for the maps $\{\Psi^{\cC}\}_{\cC}$ in
\defref{def:loop}, \eqnref{eq:loop}. We prove
in \appendixref{app:loops} that this definition is equivalent
to \eqnref{eq:loop.limit}.

In the case of a classical causal box given
by a set of conditional probability distributions $\{P^{\cC}_{CD|AB}\}_{\cC}$, this closed
formula for a loop reduces to
\[ Q^\cC_{D|A}(d|a) = \sum_c P^{\cC}_{CD|AB}(c,d|a,c)\,,\]
i.e., the new system with conditional probability distributions
$\{Q^\cC_{D|A}\}_{\cC}$ is obtained from the old system by inputing on
wire $B$ the value $c$ that is output on wire $C$. \eqnref{eq:loop} is
a generalization of this to the quantum case, which uses the
\cjrep.\footnote{The terms in \eqnref{eq:loop} are absolutely
  convergent (as proven in \appendixref{app:loops}). The order of the
  summation is thus not relevant and hence not specified.} In
\remref{rem:nat.rep} we give an equivalent formulation of a loop as a
partial trace using the \natrep.\footnote{We define the \natrep of a
  map in \appendixref{app:rep.natural}.}

\begin{deff}[Loop]
\label{def:loop}
Let
$\Phi = \{\Phi^{\cC} : \tcop{\cF^{\T}_{AB}} \to
\tcop{\cF^{\cC}_{CD}}\}_{\cC \in \bcut}$
be a $(d_A+d_B,d_C+d_D)$\-/causal box with $d_B = d_C$. Let
$R^{\cC}_{\Phi}(\cdot;\cdot)$ be the \cjrep of $\Phi^{\cC}$. Let
$\{\ket{k_C}\}_h$ and $\{\ket{\ell_C}\}_h$ be any orthonormal bases of
$\cF^{\cC}_C$ and let $\{\ket{k_B}\}_k$ and $\{\ket{\ell_B}\}_\ell$
denote the corresponding bases of $\cF^{\cC}_B$, i.e., for all $k$ and
$\ell$, $\ket{k_C} \cong \ket{k_B}$ and
$\ket{\ell_C} \cong \ket{\ell_B}$. The new system resulting from putting
a loop from the output wire $C$ to the input wire $B$,
$\Psi = \Phi^{(C \hookrightarrow B)}$, is given by the set of maps
\[\left\{\Psi^{\cC} : \tcop{\cF^{\T}_{A}} \to \tcop{\cF^{\cC}_{D}}\right\}_{\cC \in \bcut} \ \] that have \cjrep 
\begin{multline} \label{eq:loop} R^{\cC}_{\Psi}(\psi_D \tensor \psi_A
  ; \varphi_D \tensor \varphi_A) \\ = \sum_{k,\ell} R^{\cC}_{\Phi}(k_C
  \tensor \psi_D \tensor \psi_A \tensor \bar{k}_B ; \ell_C \tensor
  \varphi_D \tensor \varphi_A \tensor \bar{\ell}_B )\,, \end{multline}
where
$\ket{\bar{k}_B} = \sum_{i = 1}^{\infty}\ket{i}
\overline{\braket{i}{k}}$
for the basis $\{\ket{i_B}\}_i$ of $\cF^{\T}_B$ used in the \cjrep of
$\Phi^\cC$.
\end{deff}

\begin{rem}[The \natrep of a loop]
  \label{rem:nat.rep}
  In \appendixref{app:rep.natural} we define the \natrep of a map
  $\Phi : \tcop{\hilbert_{AB}} \to \tcop{\hilbert_{DC}}$ as a linear
  operator $K_\Phi : \hilbert_{A\bar{A}B\bar{B}} \to \hilbert_{D\bar{D}C\bar{C}}$.
  Relabelling $B$ and $\bar{B}$ with $C$ and $\bar{C}$ according to the
  isomorphism between the spaces used to define the loop, we can
  write $K_\Phi$ as an operator $K_\Phi : \hilbert_{A\bar{A}C\bar{C}} \to
  \hilbert_{D\bar{D}C\bar{C}}$. We show in \lemref{lem:nat.rep} that a
  map with \cjrep given by  \eqnref{eq:loop} may equivalently be
  defined by its \natrep
  \[ K_\Psi = \trace[C\bar{C}]{K_\Phi}\,.\]
  Since the partial trace is basis independent, the choice of bases in
  \eqnref{eq:loop}, \defref{def:loop}, is not relevant.
\end{rem}

It follows from \propref{prop:loopfromsequence} that if $\Phi^\cC$ is
CPTP, then so is the map $\Psi^\cC$ with \cjrep given by
\eqnref{eq:loop}. We still need to show that the resulting set
$\{\Psi^\cC\}_{\cC \in \bcut}$ is a valid causal box.

\begin{prop}
\label{prop:loop}
If $\Phi = \{\Phi^\cC : \tcop{\cF^{\T}_{AB}} \to \tcop{\cF^{\cC}_{CD}}\}$
is a (sub)normalized $(d_A+d_B,d_C+d_D)$\-/causal box with
$d_B = d_C$, then $\Psi = \Phi^{(C \hookrightarrow B)}$ is a (sub)\-nor\-mal\-ized
$(d_A,d_D)$\-/causal box.
\end{prop}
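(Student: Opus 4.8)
The plan is to verify, for the family $\Psi = \{\Psi^\cC\}_{\cC \in \bcut}$ defined by \eqnref{eq:loop}, the three requirements of \defref{def:quantumbox}: that each $\Psi^\cC$ is a CPTP (resp.\ CP and trace\-/nonincreasing) map, that the family is mutually consistent, and that it respects causality for a suitable causality function. The complete positivity and trace behaviour of each individual map are already in hand: by \propref{prop:sequencerepresentation} the map $\Phi^\cC$ admits a sequence representation, and \propref{prop:loopfromsequence} shows that the loop of a CPTP (resp.\ CP trace\-/nonincreasing) map given by \eqnref{eq:loop} is again of the same type. So the two remaining tasks are causality and mutual consistency.

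For causality I would take $\chi_\Psi \coloneqq \chi_\Phi$, which automatically satisfies \defref{def:causality.function} because $\chi_\Phi$ does. It then remains to check \eqnref{eq:causality}, i.e.\ that the $D$\-/output of $\Psi^\cC$ depends only on the $A$\-/input at positions $\chi_\Phi(\cC)$. This can be read off from the sequence\-/representation picture of the loop in \figref{fig:loop2}: after connecting the $C$\-/output in positions $\T_i$ back to the $B$\-/input of the preceding isometry, the only external input wire left is $A$, and it carries exactly the positions $\cC_{n+1}\cup\T_n\cup\dots\cup\T_2 = \cC_2 = \chi_\Phi(\cC)$, independently of the truncation level $n$; this is preserved in the limit \eqnref{eq:loop.limit}. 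Equivalently, in \eqnref{eq:loop} the looped arguments $\bar{k}_B,\bar{\ell}_B$ live at positions in $\cC$, so causality of $\Phi$ confines the $A$\-/dependence of $\Psi^\cC$ to $\chi_\Phi(\cC)$.

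For mutual consistency I must show $\tr_{\cD\setminus\cC}\circ\Psi^\cD = \Psi^\cC$ whenever $\cC\subseteq\cD$. Since the loop acts position by position, I would split the loop on $\cD$ into the loop on the positions of $\cC$ and the loop on the positions of $\cD\setminus\cC$; these act on disjoint positions, so they commute with one another and with the partial trace on $\cD\setminus\cC$, and the loop on $\cC$ can be factored out of both sides. Using the consistency of $\Phi$ (namely $\Phi^\cC = \tr_{\cD\setminus\cC}\circ\Phi^\cD$, which traces the output on \emph{both} wires $C$ and $D$), the claim reduces to the single identity
\[
  \tr_{D,\,\cD\setminus\cC}\circ\bigl(\Phi^\cD\bigr)^{(C\hookrightarrow B\text{ on }\cD\setminus\cC)}
  = \tr_{CD,\,\cD\setminus\cC}\circ\Phi^\cD ,
\]
understood as maps on the retained output at $\cC$. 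This holds by causality: no output in $\cC$ depends on the $B$\-/input at positions in $\cD\setminus\cC$ (those positions are disjoint from $\chi_\Phi(\cC)\subseteq\cC$), so feeding the looped $C$\-/values into $B$ on $\cD\setminus\cC$ cannot influence the kept output, while trace preservation of the loop makes looping\-/then\-/tracing on $\cD\setminus\cC$ agree with simply tracing out both output wires there. I expect this identity to be the main obstacle, as it is exactly the step where the feedback of the loop must be shown harmless; the fully rigorous version, including control of the $n\to\infty$ limit in \eqnref{eq:loop.limit}, is the content of \appendixref{app:loops}. The natural\-/representation form of the loop from \remref{rem:nat.rep}, $K_\Psi = \tr_{C\bar{C}}[K_\Phi]$, provides convenient bookkeeping here, since partial traces on disjoint position blocks manifestly commute.

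Finally, the subnormalized case follows from the normalized one. Writing $\Psi = \Phi^{(C\hookrightarrow B)}$ through its normalizing dilation $\hat{\Phi}$ with environment wire $R$, the vacuum projector $P^{\Omega}_R$ acts on $R$, which is disjoint from the looped wires $B$ and $C$ and hence commutes past the loop; applying the normalized result to $\hat{\Phi}$ and then projecting shows that $\{\Psi^\cC\}$ is a subnormalized causal box with the same causality function $\chi_\Phi$.
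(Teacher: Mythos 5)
Your proposal is correct and follows essentially the same route as the paper: the paper likewise keeps $\chi_\Phi$ as the causality function for $\Psi$, relies on \propref{prop:loopfromsequence} for complete positivity and trace preservation, handles the subnormalized case by commuting the vacuum projection $P^{\Omega}_R$ past the loop, and the pivotal step of its \cjrep computation is exactly your key identity \--- causality of $\Phi$ makes the feedback at positions $\cD \setminus \cC$ inert, so one may transmit only the states at positions in $\cC$ and trace out the rest without changing the kept output. The only inaccuracy is one of attribution: the paper establishes this identity directly inside the proof of \propref{prop:loop}, as a single verification of the combined condition \eqnref{eq:rep.equiv.cj} (rather than your split into separate consistency and causality checks), whereas \appendixref{app:loops} supplies only the convergence of \eqnref{eq:loop} and the CP(TP) property, not the consistency step you defer to it.
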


\begin{proof}
  We need to show that the maps $\{\Psi^\cC\}_{\cC \in \bcut}$ satisfy the
  consistency (\eqnref{eq:consistency}) and causality
  (\eqnref{eq:causality}) conditions. We only need to prove this for
  normalized boxes, since putting a loop on a subnormalized box is
  equivalent to putting the loop on the underlying normalized box then
  projecting the ancilla wire $R$ on the vacuum state $\vacuum_R$,
  which results in a valid subnormalized box if putting the loop on
  the underlying normalized box is valid.

  To prove this, we use the \cjrep of these conditions, given in
  \eqnref{eq:rep.equiv.cj}. What we need to prove is that there exists
  a causality function $\chi$ such that for any $\cC,\cD \in \bcut$ with $\cC \subseteq \cD$,
  and any $\psi^{\cC}_{D}$, $\psi^{\chi(\cC)}_{A}$,
  $\psi^{\widetilde{\chi(\cC)}}_{A}$, $\varphi^{\cC}_{D}$,
  $\varphi^{\chi(\cC)}_{A}$, $\varphi^{\widetilde{\chi(\cC)}}_{A}$,
\begin{multline} \label{eq:loop.prop} \sum_{j}
    R^{\cD}_{\Psi}\left(\psi^{\cC}_{D} \tensor j^{\widetilde{\cC}}_{D} \tensor \psi^{\chi(\cC)}_{A}
      \tensor \psi^{\widetilde{\chi(\cC)}}_{A} ; \varphi^{\cC}_{D} \tensor j^{\widetilde{\cC}}_{D} \tensor
      \varphi^{\chi(\cC)}_{A} \tensor \varphi^{\widetilde{\chi(\cC)}}_{A} \right) \\ =
    R^{\cC}_{\Psi}\left(\psi^{\cC}_{D} \tensor \psi^{\chi(\cC)}_{A} ;
      \varphi^{\cC}_{D} \tensor \varphi^{\chi(\cC)}_{A}\right)
    \braket{\psi^{\widetilde{\chi(\cC)}}_{A}}{\varphi^{\widetilde{\chi(\cC)}}_{A}}\,, \end{multline} 
where $R^{\cD}_{\Psi}$ and $R^{\cC}_{\Psi}$ are the \cjreps of the maps $\Psi^\cD :
\tcop{\cF^{\chi(\cD)}_A} \to \tcop{\cF^{\cD}_D}$ and $\Psi^{\cC} :
\tcop{\cF^{\chi(\cC)}_A} \to \tcop{\cF^{\cC}_D}$,
respectively. We prove this for the same function $\chi$ for which $\Phi$
satisfies causality.

We denote $\hat{\Psi}^{\cC} \coloneqq {\tr_{\widetilde{\cC}}} \circ
\Psi^{\cD}$. 
The left-hand side of \eqnref{eq:loop.prop} corresponds to the \cjrep
of $\hat{\Psi}^{\cC}$. Using the loop formula from \eqnref{eq:loop}, we
get
\begin{align*} 
& R^{\cC}_{\hat{\Psi}} \left(\psi^{\cC}_{D} \tensor \psi^{\chi(\cD)}_{A} ; \varphi^{\cC}_{D} \tensor
      \varphi^{\chi(\cD)}_{A}\right) \\
& \quad = \sum_{j} R^{\cD}_{\Psi}\left(\psi^{\cC}_{D} \tensor j^{\widetilde{\cC}}_{D} \tensor \psi^{\chi(\cD)}_{A}
      ; \varphi^{\cC}_{D} \tensor j^{\widetilde{\cC}}_{D} \tensor
      \varphi^{\chi(\cD)}_{A} \right) \\ & \quad =
    \sum_{j,k,\ell} R^{\cD}_{\Phi}\left(k_C
      \tensor \psi^{\cC}_{D} \tensor j^{\widetilde{\cC}}_{D} \tensor \psi^{\chi(\cD)}_{A}
      \tensor \bar{k}_B ; \ell_C \tensor \varphi^{\cC}_{D} \tensor j^{\widetilde{\cC}}_{D} \tensor
      \varphi^{\chi(\cD)}_{A} \tensor \bar{\ell}_B\right)\,,
\end{align*} 
where $R^{\cD}_{\Phi}$ is the \cjrep of
$\Phi^{\cD} : \tcop{\cF^{\T}_{AB}} \to \tcop{\cF^{\cD}_{CD}}$. Since
$\Phi$ satisfies causality, the state of the wire $B$ in positions
$\cD \setminus \cC$ cannot modify the output on the wire $D$ in
positions $\cC$. Hence there is no need to loop all states on $C$ back
to $B$, one could transmit only the states in positions $\cC$ and
trace out the others without changing the output, i.e.,
\begin{multline*} 
  R^{\cC}_{\hat{\Psi}} \left(\psi^{\cC}_{D} \tensor \psi^{\chi(\cD)}_{A} ; \varphi^{\cC}_{D} \tensor
    \varphi^{\chi(\cD)}_{A}\right)  = \\
   \sum_{i,j,k,\ell} R^{\cD}_{\Phi}\left(k^{\cC}_C \tensor i^{\widetilde{\cC}}_C
    \tensor \psi^{\cC}_{D} \tensor j^{\widetilde{\cC}}_{D} \tensor \psi^{\chi(\cD)}_{A}
    \tensor \bar{k}^{\cC}_B; \right. \\
 \left. \ell^{\cC}_C
   \tensor i^{\widetilde{\cC}}_C\tensor \varphi^{\cC}_{D} \tensor j^{\widetilde{\cC}}_{D}
   \tensor \varphi^{\chi(\cD)}_{A} \tensor \bar{\ell}^{\cC}_B
   \right)\,.
\end{multline*} 
Applying \eqnref{eq:rep.equiv.cj} we get
\begin{multline*} 
  R^{\cC}_{\hat{\Psi}} \left(\psi^{\cC}_{D} \tensor \psi^{\chi(\cD)}_{A} ; \varphi^{\cC}_{D} \tensor
    \varphi^{\chi(\cD)}_{A}\right) = \\
    \sum_{i,k,\ell} R^{\cC}_{\Phi}\left(k^{\cC}_C
    \tensor \psi^{\cC}_{D} \tensor \psi^{\chi(\cC)}_{A}
    \tensor \bar{k}^{\chi(\cC)}_B; \ell^{\cC}_C
   \tensor \varphi^{\cC}_{D}
   \tensor \varphi^{\chi(\cC)}_{A} \tensor \bar{\ell}^{\chi(\cC)}_B \right) \\
\braket{\psi^{\widetilde{\chi(\cC)}}_A
  \tensor \bar{k}^{\cC\setminus\chi(\cC)}_B}{\varphi^{\widetilde{\chi(\cC)}}_A
  \tensor \bar{\ell}^{\cC\setminus\chi(\cC)}_B}\,,
\end{multline*} 
where $R^{\cC}_{\Phi}$ is the \cjrep of
$\Phi^{\cC} : \tcop{\cF^{\chi(\cC)}_{AB}} \to \tcop{\cF^{\cC}_{CD}}$.
Let
$\hat{\Phi}^{\cC} \coloneqq \Phi^{\cC} \tensor \tr$ for a trace
operator $\tr$ acting on $\cF^{\cC\setminus\chi(\cC)}_{B}$. Then the previous
equation may be written
\begin{multline*} 
  R^{\cC}_{\hat{\Psi}} \left(\psi^{\cC}_{D} \tensor \psi^{\chi(\cD)}_{A} ; \varphi^{\cC}_{D} \tensor
    \varphi^{\chi(\cD)}_{A}\right) = \\
    \sum_{k,\ell} R^{\cC}_{\hat{\Phi}}\left(k^{\cC}_C
    \tensor \psi^{\cC}_{D} \tensor \psi^{\chi(\cC)}_{A}
    \tensor \bar{k}^{\cC}_B; \ell^{\cC}_C
   \tensor \varphi^{\cC}_{D}
   \tensor \varphi^{\chi(\cC)}_{A} \tensor \bar{\ell}^{\cC}_B
 \right) \braket{\psi^{\widetilde{\chi(\cC)}}_A}{\varphi^{\widetilde{\chi(\cC)}}_A}\,.
\end{multline*} 
Finally, using \eqnref{eq:loop}, we get
\begin{multline*} 
  R^{\cC}_{\hat{\Psi}} \left(\psi^{\cC}_{D} \tensor \psi^{\chi(\cD)}_{A} ; \varphi^{\cC}_{D} \tensor
    \varphi^{\chi(\cD)}_{A}\right) = \\
R^{\cC}_{\Psi}\left(\psi^{\cC}_{D} \tensor \psi^{\chi(\cC)}_{A}
   ; \varphi^{\cC}_{D} \tensor \varphi^{\chi(\cC)}_{A}
 \right) \braket{\psi^{\widetilde{\chi(\cC)}}_A}{\varphi^{\widetilde{\chi(\cC)}}_A}\,. \qedhere
\end{multline*}
\end{proof}

\subsection{The composition operation}
\label{sec:network.composition}

Now that we have defined parallel composition and loops, we can
instantiate the operation for composing two causal boxes by their
wires, as introduced in \secref{sec:theory}.

\begin{deff}[Composition operation]
\label{def:composition}
Let $\Phi$ and $\Psi$ be two (sub)\-nor\-mal\-ized causal boxes, let
$\ports(\Phi)$ and $\ports(\Psi)$ consist of a partition of the input
and output wires into sub-wires (see \remref{rem:ports}), and let the set
$P = \{(A^\Phi_1,A^\Psi_1),\dotsc,(A^\Phi_n,A^\Psi_n)\}$ consist of pairs of
ports of $\Phi$ and $\Psi$, such that each pair consists of an output
and input sub-wire of the same dimension, and each sub-wire appears at
most once. Then
\[\Phi \connect{P} \Psi \coloneqq \left(\Phi \|
  \Psi\right)^{(A^{\Phi/\Psi}_1 \hookrightarrow A^{\Psi/\Phi}_1)
  \dotsb (A^{\Phi/\Psi}_n \hookrightarrow A^{\Psi/\Phi}_n)}\,, \]
where $A^{\Phi/\Psi}_i \hookrightarrow A^{\Psi/\Phi}_i$ denotes either
$A^{\Phi}_i \hookrightarrow A^{\Psi}_i$ or $A^{\Psi}_i \hookrightarrow
A^{\Phi}_i$ depending on which is an output and input wire.
\end{deff}

\begin{thm}
\label{thm:closure}
Let $\fS$ be the set of all (sub)normalized causal boxes. For any
$\Phi,\Psi \in \fS$ and any set of valid sub-wires $P$,
\[ \Phi \connect{P} \Psi \in \fS\,.\]
\end{thm}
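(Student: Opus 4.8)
The plan is to read off from \defref{def:composition} that $\Phi \connect{P} \Psi$ is nothing but the parallel composite $\Phi \| \Psi$ with a finite sequence of $n = |P|$ loops applied to it, and then to chain together the two closure results already proved: \propref{prop:parallel}, which shows that parallel composition maps $\fS \times \fS$ into $\fS$, and \propref{prop:loop}, which shows that a single loop maps $\fS$ into $\fS$. Closure is then a short induction on the number of loops.

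Concretely, I would set $\Gamma_0 \coloneqq \Phi \| \Psi$. By \propref{prop:parallel} this is a (sub)normalized causal box, and by \remref{rem:ports} together with \eqnref{eq:splitspace} its input and output wires carry the sub-wire partition $\ports(\Phi) \cup \ports(\Psi)$; in particular every sub-wire named in $P$ is a port of $\Gamma_0$. I would then define, for $i = 1, \dotsc, n$,
\[ \Gamma_i \coloneqq \Gamma_{i-1}^{(A^{\Phi/\Psi}_i \hookrightarrow A^{\Psi/\Phi}_i)}\,, \]
so that $\Gamma_n = \Phi \connect{P} \Psi$, and prove $\Gamma_i \in \fS$ by induction. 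The base case $\Gamma_0 \in \fS$ is the previous sentence. For the inductive step, assuming $\Gamma_{i-1} \in \fS$, I would use \eqnref{eq:splitspace} to regroup its wires into the form required by \propref{prop:loop}: the input sub-wire $A^{\Psi/\Phi}_i$ plays the role of $B$, the output sub-wire $A^{\Phi/\Psi}_i$ plays the role of $C$, and all remaining input (resp.\ output) sub-wires are gathered into a single wire $A$ (resp.\ $D$). \propref{prop:loop} then yields $\Gamma_i = \Gamma_{i-1}^{(C \hookrightarrow B)} \in \fS$.

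The only point requiring care \--- and the closest thing to an obstacle \--- is checking that each loop in the sequence is well defined on the box it acts on. Two conditions must hold at step $i$: the sub-wires $A^{\Phi/\Psi}_i$ and $A^{\Psi/\Phi}_i$ must still be present as ports of $\Gamma_{i-1}$, since a loop consumes the two sub-wires it connects; and they must have equal dimension, so that the hypothesis $d_B = d_C$ of \propref{prop:loop} is met. Both follow directly from \defref{def:composition}: each pair in $P$ consists of one input and one output sub-wire of the same dimension, and each sub-wire appears in $P$ at most once, so the sub-wires looped at step $i$ are distinct from those consumed at steps $1, \dotsc, i-1$ and are indeed still available. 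Since $n = |P|$ is finite, the induction terminates and gives $\Phi \connect{P} \Psi = \Gamma_n \in \fS$. Note that I would not need composition order independence here; that the result does not depend on the order in which the loops are applied is the separate content of \thmref{thm:comp.order.indep}.
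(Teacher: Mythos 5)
Your proposal is correct and takes essentially the same route as the paper, whose entire proof of \thmref{thm:closure} reads ``Immediate from \propref{prop:parallel} and \propref{prop:loop}'' \--- i.e., exactly your decomposition of $\Phi \connect{P} \Psi$ into one parallel composition followed by finitely many loops. Your induction on $n = |P|$, with the checks that the looped sub-wires remain available and have matching dimensions, simply spells out the details the paper leaves implicit.
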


\begin{proof}
Immediate from \propref{prop:parallel} and \propref{prop:loop}.
\end{proof}

The next step is to prove that \defref{def:composition} satisfies
composition order independence. To do this, we will prove some lemmas
on the commutativity and associativity of parallel composition and
loops.

\begin{lem}
\label{lem:parallel.associativity}
For any $\Phi,\Psi,\Gamma \in \fS$,
\[\left(\Phi \| \Psi \right) \| \Gamma = \Phi \| \left( \Psi \| \Gamma \right)\,.\]
\end{lem}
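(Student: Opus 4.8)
The plan is to reduce the statement to the associativity of the tensor product of linear maps, since by \defref{def:parallel} parallel composition acts cut-by-cut as a tensor product. For each $\cC \in \bcut$ the three boxes contribute maps $\Phi^\cC$, $\Psi^\cC$, $\Gamma^\cC$, and unfolding the definitions gives
\[ \bigl( \Phi \| \Psi \bigr) \| \Gamma = \bigl\{ (\Phi^\cC \tensor \Psi^\cC) \tensor \Gamma^\cC \bigr\}_{\cC \in \bcut}, \qquad \Phi \| \bigl( \Psi \| \Gamma \bigr) = \bigl\{ \Phi^\cC \tensor (\Psi^\cC \tensor \Gamma^\cC) \bigr\}_{\cC \in \bcut}. \]
So it suffices to show that, for every cut, the two maps on the right coincide once the underlying wire spaces are identified.

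The only point requiring care is the identification of these wire spaces, since combining wires uses the Fock-space isomorphism \eqnref{eq:fockisomorphism} rather than a bare tensor product. First I would observe that both triple combinations are canonically identified with a single Fock space: writing $A,B,G$ for the three input wires, applying \eqnref{eq:fockisomorphism} twice together with the associativity of the direct sum yields
\[ \bigl( \cF^\T_A \tensor \cF^\T_B \bigr) \tensor \cF^\T_G \cong \fock{ (\Ltwo{\T}{\C^{d_A}}) \oplus (\Ltwo{\T}{\C^{d_B}}) \oplus (\Ltwo{\T}{\C^{d_G}}) } \cong \cF^\T_A \tensor \bigl( \cF^\T_B \tensor \cF^\T_G \bigr), \]
and the analogous chain holds on the output wires. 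Because the isomorphism of \eqnref{eq:fockisomorphism} preserves the meaning of the Fock-space bases (vacuum to vacuum, a single message to the same message, etc.) and the direct sum is associative, the canonical associator of the tensor product on the left-hand space agrees with the identity on the common Fock space. Hence the two parenthesizations have the same input and output spaces, and the associativity isomorphism connecting them is exactly the one under which the two maps are to be compared.

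With the spaces identified, the equality of the maps is immediate from the associativity of the tensor product of linear maps on trace-class operators: $(\Phi^\cC \tensor \Psi^\cC) \tensor \Gamma^\cC = \Phi^\cC \tensor (\Psi^\cC \tensor \Gamma^\cC)$ for every $\cC \in \bcut$. For subnormalized boxes the claim reduces to the normalized case: if $\hat\Phi, \hat\Psi, \hat\Gamma$ are the normalized counterparts with ancilla wires $R, S, T$, then the vacuum projection factorizes as $P^\Omega_{RST} = P^\Omega_R \tensor P^\Omega_S \tensor P^\Omega_T$, so it commutes with both reassociations and the normalized equality transports directly to the subnormalized maps. I expect the main (though still routine) obstacle to be verifying that the associator of the Fock-space tensor products is compatible with \eqnref{eq:fockisomorphism}; once this bookkeeping is in place, the algebraic content is just the standard associativity of $\tensor$.
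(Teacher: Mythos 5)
Your proposal is correct and follows essentially the same route as the paper, whose proof is precisely the one\-/line observation that $(\Phi^\cC \tensor \Psi^\cC) \tensor \Gamma^\cC = \Phi^\cC \tensor (\Psi^\cC \tensor \Gamma^\cC)$ for every $\cC \in \bcut$. The additional bookkeeping you supply \--- checking via \eqnref{eq:fockisomorphism} and the associativity of the direct sum that the canonical associator acts as the identity on the common Fock space, and noting that the vacuum projections factorize in the subnormalized case \--- is sound and makes explicit what the paper leaves implicit, but it does not change the argument.
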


\begin{proof}
This follows from the associativity of the tensor product, namely,
\[(\Phi^\cC \tensor \Psi^\cC) \tensor \Gamma^\cC = \Phi^\cC \tensor (\Psi^\cC
  \tensor \Gamma^\cC)\,. \qedhere \]
\end{proof}

\begin{lem}
\label{lem:parallelloop.commute}
For any $\Phi,\Psi \in \fS$ and any pair
$(B,C) \in \ports(\Psi) \times \ports(\Psi)$ of in- and out\-/ports of
the same dimension,
  \begin{align*}
    \Psi^{(C \hookrightarrow B)} \| \Phi & = \left( \Psi \| \Phi \right)^{(C \hookrightarrow B)}\,, \\
    \Phi \| \Psi^{(C \hookrightarrow B)} & = \left( \Phi \| \Psi \right)^{(C \hookrightarrow B)}\,.
  \end{align*}
\end{lem}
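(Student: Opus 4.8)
The plan is to work entirely at the level of the \cjrep, since there both operations have transparent descriptions: parallel composition is a tensor product of maps (\defref{def:parallel}) and a loop is given by the closed formula \eqnref{eq:loop} of \defref{def:loop}. As a map is uniquely determined by its \cjrep, it suffices to check that both sides of the first identity induce the same \cjrep on every cut $\cC \in \bcut$; the second identity is then proved by the same argument with the roles of the two tensor factors exchanged. Throughout, let the looped box be $\Psi : \cF^\T_{AB} \to \cF^\cC_{CD}$ with $d_B = d_C$ and the loop on $C \hookrightarrow B$, and let $\Phi : \cF^\T_{A'} \to \cF^\cC_{D'}$ denote the non-looped box, whose wires $A',D'$ are disjoint from the wires $A,B,C,D$ of $\Psi$.

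The one nontrivial ingredient is that the \cjrep factorizes under parallel composition: on product arguments,
\[ R^\cC_{\Psi \tensor \Phi}\bigl( (k_C \tensor \psi_D \tensor \psi_A \tensor \bar k_B) \tensor (\psi_{D'} \tensor \psi_{A'}) ; \cdots \bigr) = R^\cC_\Psi( k_C \tensor \psi_D \tensor \psi_A \tensor \bar k_B ; \cdots )\, R^\cC_\Phi( \psi_{D'} \tensor \psi_{A'} ; \cdots )\,, \]
which follows directly from the definition of the \cjrep (see \appendixref{app:rep.cj}) together with $(\Psi^\cC \tensor \Phi^\cC)(\sigma \tensor \tau) = \Psi^\cC(\sigma) \tensor \Phi^\cC(\tau)$. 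Crucially, the $\Phi$-factor on the right does not involve the looped wires $B,C$.

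First I would expand the right-hand side $(\Psi \| \Phi)^{(C \hookrightarrow B)}$ using \eqnref{eq:loop}, which produces a sum over $k,\ell$ of $R^\cC_{\Psi \tensor \Phi}$ with $k_C,\bar k_B$ (resp.\ $\ell_C,\bar\ell_B$) inserted on the looped wires and the remaining arguments $\psi_D \tensor \psi_{D'} \tensor \psi_A \tensor \psi_{A'}$ left untouched. Applying the factorization above and observing that $R^\cC_\Phi$ is independent of the summation indices $k,\ell$, I would extract it from the sum:
\[ \sum_{k,\ell} R^\cC_\Psi( k_C \tensor \psi_D \tensor \psi_A \tensor \bar k_B ; \cdots )\, R^\cC_\Phi(\psi_{D'}\tensor\psi_{A'};\cdots) = R^\cC_{\Psi^{(C\hookrightarrow B)}}(\psi_D \tensor \psi_A ; \cdots)\, R^\cC_\Phi(\psi_{D'}\tensor\psi_{A'};\cdots)\,, \]
where the bracketed sum is recognised as the \cjrep of $\Psi^{(C\hookrightarrow B)}$ by a second application of \eqnref{eq:loop}. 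The resulting product is exactly the factorized \cjrep of $\Psi^{(C\hookrightarrow B)} \| \Phi$, proving the identity on the cut $\cC$; since $\cC \in \bcut$ was arbitrary, the two causal boxes coincide.

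The main point to justify carefully is the interchange of the infinite sum over $k,\ell$ with the factorization and the extraction of the $R^\cC_\Phi$ factor. This is legitimate because the terms in \eqnref{eq:loop} are absolutely convergent (as noted in the footnote to \defref{def:loop} and established in \appendixref{app:loops}), so the sum may be reorganized freely. Finally, because the argument is phrased entirely through the \cjrep, which is defined for arbitrary CP maps, it applies verbatim to subnormalized causal boxes; alternatively one may reduce to the normalized case as in \propref{prop:loop}, since both operations commute with the vacuum projection $P^\Omega_R$ used to define subnormalized boxes.
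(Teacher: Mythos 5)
Your proposal is correct and takes essentially the same route as the paper's proof: both arguments work entirely at the level of the \cjrep, combining the factorization of the \cjrep under parallel composition with the loop formula of \eqnref{eq:loop}, and concluding that both sides of each identity induce the same \cjrep, with the extraction of the $R^{\cC}_{\Phi}$ factor from the sum over $k,\ell$ resting on the absolute convergence established in \appendixref{app:loops}. Your explicit remarks on the convergence justification and on the subnormalized case merely spell out details the paper leaves implicit.
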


\begin{proof}
  The \cjrep of a product of maps is given by
\begin{multline*}
  R_{\Phi \tensor \Psi}(\psi_Y \tensor \psi_D \tensor \psi_X \tensor
  \psi_A ; \varphi_Y \tensor
  \varphi_D \tensor \varphi_X \tensor \varphi_A ) = \\
  R_{\Phi}(\psi_Y \tensor \psi_X ; \varphi_Y \tensor \varphi_X )
  R_{\Psi}( \psi_D \tensor \psi_A ; \varphi_D \tensor \varphi_A
  )\,. \end{multline*} Hence for
$\Phi^\cC : \tcop{\cF^{\T}_{X}} \to \tcop{\cF^{\cC}_{Y}}$ and
$\Psi^\cC : \tcop{\cF^{\T}_{AB}} \to \tcop{\cF^{\cC}_{CD}}$, both
$ \Phi \| \Psi^{(C \hookrightarrow B)}$ and
$\left( \Phi \| \Psi \right)^{(C \hookrightarrow B)}$ have \cjrep
\begin{multline*}
  R^{\cC}_{\Gamma}(\psi_{Y} \tensor \psi_D \tensor \psi_{X} \tensor
  \psi_{A}; \varphi_{Y} \tensor \varphi_D \tensor \varphi_{X} \tensor
  \varphi_A) = \\
  R^{\cC}_{\Phi}(\psi_Y \tensor \psi_X ; \varphi_Y \tensor \varphi_X) 
  \sum_{k,\ell} R^{\cC}_{\Psi}(k_C \tensor \psi_{D} \tensor \psi_{A}
  \tensor \bar{k}_B ; \ell_C \tensor \varphi_D \tensor \varphi_{A}
  \tensor \bar{\ell}_B)\,.
\end{multline*}
The proof for $\Psi^{(C \hookrightarrow B)} \| \Phi$ is identical.
\end{proof}

\begin{lem}
\label{lem:loop.commute}
For any $\Phi \in \fS$ and any pairs
$(A,D),(B,C) \in \ports(\Phi) \times \ports(\Phi)$ of in- and
out\-/ports of the same dimension,
\begin{equation} \label{eq:loop.commute} \left(\Phi^{(C \hookrightarrow B)}\right)^{(D
    \hookrightarrow A)} = \left(\Phi^{(D \hookrightarrow
      A)}\right)^{(C \hookrightarrow B)}. \end{equation}
\end{lem}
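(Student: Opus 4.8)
The plan is to show that both sides of \eqnref{eq:loop.commute} have the same \cjrep for every $\cC \in \bcut$; since the \cjrep determines the underlying map, this proves that the two causal boxes coincide. First I would reduce to normalized boxes: as observed in the proof of \propref{prop:loop}, a loop on a subnormalized box equals the loop on its normalizing dilation followed by the vacuum projection $P^\Omega_R$ on the ancilla wire $R$. Since $R$ is disjoint from the four looped wires $A,B,C,D$, this projection commutes with both loops, so the subnormalized case follows at once from the normalized one. It is also enough to work with a single fixed choice of orthonormal bases of $\cF^{\cC}_C$ and $\cF^{\cC}_D$ and the corresponding bases of $\cF^{\T}_B$ and $\cF^{\T}_A$, using them for every loop: by \remref{rem:nat.rep} the loop of \defref{def:loop} is basis-independent, so this entails no loss of generality and makes the two orderings literally comparable term by term.

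Writing $X$ and $Y$ for the input and output wires of $\Phi$ other than $A,B,C,D$, the map $\Phi^\cC$ has a \cjrep $R^\cC_\Phi$ with output arguments in $C,D,Y$ and input arguments in $A,B,X$. Applying the loop formula \eqnref{eq:loop} first for $C\hookrightarrow B$ and then for $D\hookrightarrow A$, I would obtain that the left-hand side of \eqnref{eq:loop.commute}, call it $\Gamma$, has \cjrep
\[
R^{\cC}_{\Gamma}(\psi_Y \tensor \psi_X; \varphi_Y \tensor \varphi_X) = \sum_{k,\ell}\sum_{m,n} R^{\cC}_{\Phi}\bigl(k_C \tensor m_D \tensor \psi_Y \tensor \bar{m}_A \tensor \bar{k}_B \tensor \psi_X;\, \ell_C \tensor n_D \tensor \varphi_Y \tensor \bar{n}_A \tensor \bar{\ell}_B \tensor \varphi_X\bigr),
\]
where the sum over $k,\ell$ stems from the $C\hookrightarrow B$ loop and the sum over $m,n$ from the $D\hookrightarrow A$ loop. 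Performing the loops in the opposite order produces exactly the same summand, with only the order of the two summations exchanged.

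Hence the one substantive point left is to justify interchanging the two sums, which I expect to be the main (and essentially the only) obstacle. This is supplied by the absolute convergence of the terms in \eqnref{eq:loop} proved in \appendixref{app:loops} (cf.\ the footnote to \eqnref{eq:loop}): the resulting double series is absolutely convergent, so by Fubini's theorem for series its value does not depend on the summation order, giving equality of the two \cjreps and therefore of the two boxes. Conceptually this is nothing but the fact that the two loops are partial traces over the disjoint subsystems $C\bar{C}$ and $D\bar{D}$ in the \natrep of \remref{rem:nat.rep}, so that $K_\Gamma = \tr_{C\bar{C}}[\tr_{D\bar{D}}[K_\Phi]] = \tr_{D\bar{D}}[\tr_{C\bar{C}}[K_\Phi]]$ because partial traces over disjoint tensor factors commute. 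Invoking \lemref{lem:nat.rep} to argue via the \natrep gives the cleanest route and entirely bypasses the bookkeeping of the explicit double sum.
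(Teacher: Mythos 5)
Your proof is correct and follows essentially the same route as the paper's: the paper likewise writes both orderings as the same double sum in the \cjrep and justifies exchanging the summations by noting that merging $A$ with $B$ and $C$ with $D$ and looping once gives a valid third expression with a single (absolutely convergent) combined sum \--- its \lemref{lem:limit.swap} playing the role of your Fubini argument \--- and it closes with the same partial\-/trace identity $K^{\cC}_{\Psi} = \trace[C\bar{C}D\bar{D}]{K^{\cC}_\Phi}$ via the \natrep of \remref{rem:nat.rep}. The one point worth making explicit is that the appendix's absolute\-/convergence result is stated for a single loop, so your claim that the double series converges absolutely is cleanest when routed through that merged loop $CD \hookrightarrow AB$, which is precisely how the paper phrases it.
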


\begin{proof}
Let $\Phi^\cC : \tcop{\cF^{\T}_{ABX}} \to \tcop{\cF^{\cC}_{CDY}}$. The \cjrep of the left-hand side of
\eqnref{eq:loop.commute} is
 \begin{multline*}
  R^{\cC}_{\Psi}(\psi_Y \tensor \psi_X ; \varphi_Y \tensor \varphi_X) = \\
  \sum_{e,h} \sum_{k,\ell} R^{\cC}_{\Phi}(k_C \tensor e_D \tensor \psi_Y
  \tensor \bar{e}_A \tensor \bar{k}_B \tensor \psi_X ; \ell_C \tensor
  h_D \tensor \varphi_Y \tensor \bar{h}_A \tensor \bar{\ell}_B \tensor
  \varphi_X )\,.
  \end{multline*}
Whereas for the right-hand side we get
 \begin{multline*}
  R^{\cC}_{\Psi}(\psi_Y \tensor \psi_X ; \varphi_Y \tensor \varphi_X) = \\
  \sum_{k,\ell} \sum_{e,h} R^{\cC}_{\Phi}(k_C \tensor e_D \tensor \psi_Y
  \tensor \bar{e}_A \tensor \bar{k}_B \tensor \psi_X ; \ell_C \tensor
  h_D \tensor \varphi_Y \tensor \bar{h}_A \tensor \bar{\ell}_B \tensor
  \varphi_X )\,.
  \end{multline*}
  The two are equal if the limits implicit in the sums commute. This
  is the case, because a third expression obtained by first merging
  the wires $A$ with $B$ and $C$ with $D$ then looping $CD$ to $AB$
  results in a valid map
\[\sum_{e,h,k,\ell} R^{\cC}_{\Phi}(k_C \tensor e_D \tensor \psi_Y
\tensor \bar{e}_A \tensor \bar{k}_B \tensor \psi_X ; \ell_C \tensor
h_D \tensor \varphi_Y \tensor \bar{h}_A \tensor \bar{\ell}_B \tensor
\varphi_X )\,,\]
and from \lemref{lem:limit.swap} we know that if all three expressions
converge, then they must converge to the same value. This can also be
seen from the partial trace representation of a loop, namely
  \[ K^{\cC}_{\Psi} = \tr_{D\bar{D}} \trace[C\bar{C}]{K^{\cC}_\Phi} =
  \trace[C\bar{C}D\bar{D}]{K^{\cC}_\Phi} = \tr_{C\bar{C}}
  \trace[D\bar{D}]{K^{\cC}_\Phi}\,. \qedhere\] 
\end{proof}

These three lemmas allow the order in which systems are put in
parallel and wires are connected to be changed, which results in
composition order independence.

\begin{thm}
\label{thm:comp.order.indep}
For any causal boxes
$\Phi_1,\Phi_2,\Phi_3 \in \fS$ and compatible pairs of sub-wires
$P_{ij} \subseteq \ports(\Phi_i) \times \ports(\Phi_j)$,
\begin{equation*} 
\left(\Phi_1
    \connect{P_{12}} \Phi_2\right) \connect{P_{13} \cup P_{23}} \Phi_3
  = \Phi_1 \connect{P_{12} \cup P_{13}} \left( \Phi_2 \connect{P_{23}}
    \Phi_3 \right)\,.\end{equation*}
\end{thm}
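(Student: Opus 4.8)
The plan is to unfold both sides with \defref{def:composition}, reduce each to the single canonical form ``put all three boxes in parallel, then apply all the loops,'' and then read off equality from the three preceding lemmas. Write $L_{ij}$ for the sequence of loop operations associated with the port pairs in $P_{ij}$, so that $\Phi_i \connect{P_{ij}} \Phi_j = (\Phi_i \| \Phi_j)^{L_{ij}}$. For either side to be defined at all, the ports occurring in $P_{12}$, $P_{13}$ and $P_{23}$ must be pairwise distinct; I would record this at the outset, since it guarantees that every intermediate loop below acts on a wire that is still present and hence that all the expressions are well defined.

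For the left-hand side, \defref{def:composition} gives
\[
\left(\Phi_1 \connect{P_{12}} \Phi_2\right) \connect{P_{13} \cup P_{23}} \Phi_3 = \left( (\Phi_1 \| \Phi_2)^{L_{12}} \| \Phi_3 \right)^{L_{13} L_{23}}.
\]
Applying the first identity of \lemref{lem:parallelloop.commute} once for each loop in $L_{12}$ --- a routine induction on the length of $L_{12}$ --- pulls these loops past the parallel factor $\Phi_3$, so the inner box becomes $\left((\Phi_1 \| \Phi_2) \| \Phi_3\right)^{L_{12}}$ and the whole expression becomes $\left((\Phi_1 \| \Phi_2) \| \Phi_3\right)^{L_{12} L_{13} L_{23}}$. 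The right-hand side is handled symmetrically: from \defref{def:composition} it equals $\left(\Phi_1 \| (\Phi_2 \| \Phi_3)^{L_{23}}\right)^{L_{12} L_{13}}$, and the second identity of \lemref{lem:parallelloop.commute} (again iterated over $L_{23}$) extracts these loops to give $\left(\Phi_1 \| (\Phi_2 \| \Phi_3)\right)^{L_{23} L_{12} L_{13}}$.

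It then remains to match the two canonical forms. By \lemref{lem:parallel.associativity} the two parallel compositions coincide; denote the common box by $\Gamma$. The left-hand side is then $\Gamma^{L_{12} L_{13} L_{23}}$ and the right-hand side is $\Gamma^{L_{23} L_{12} L_{13}}$, so the two differ only in the order in which the loops are applied. Because the ports are pairwise distinct, any two of these loops act on disjoint wires, so \lemref{lem:loop.commute} lets me swap adjacent loops freely; iterating these transpositions rearranges one loop sequence into the other, and the two sides are equal.

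I expect the substance to be carried entirely by the three lemmas, with the only real work being bookkeeping. The commutation lemmas are each stated for a single loop, so both the ``pull a whole $L_{ij}$ past a parallel factor'' step and the ``permute the pooled loop sequence'' step require iterating them (an induction and a sequence of adjacent swaps, respectively). The one point I would take care to check is that the pairwise-distinctness of the ports across $P_{12}$, $P_{13}$, $P_{23}$ is exactly what makes each individual application of \lemref{lem:parallelloop.commute} and \lemref{lem:loop.commute} legitimate, so that no loop is ever asked to connect a wire that an earlier loop has already consumed.
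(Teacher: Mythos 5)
Your proof is correct and follows exactly the route the paper intends: its own proof of \thmref{thm:comp.order.indep} is just ``immediate by combining'' \lemref{lem:parallel.associativity}, \lemref{lem:parallelloop.commute} and \lemref{lem:loop.commute}, and you have spelled out precisely the bookkeeping (unfolding \defref{def:composition}, iterating the commutation lemmas, and noting the pairwise-distinctness of ports) that makes this combination work.
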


\begin{proof}
Immediate by combining Lemmas~\ref{lem:parallel.associativity},
\ref{lem:parallelloop.commute} and \ref{lem:loop.commute}.
\end{proof}

\section{Distance}
\label{sec:metric}

In practical applications it is often useful to define a notion of
distance between systems, e.g., to measure how close a real system
constructed from a noisy channel and an error correcting code is from
an ideal channel that has no noise. In this section we introduce a
pseudo\-/metric on the set of causal boxes, the \emph{distinguishing
  advantage}. This is defined with the help of a distinguisher:
another system $\aD$ that, when connected to one of two causal boxes,
$\Phi$ or $\Psi$, outputs a bit corresponding to its best guess of the
system to which it is connected. A distance between these causal boxes
is then defined as the statistical distance between the
distinguisher's outputs.

Although one could define a distinguisher to be another causal box, we
consider a slightly more general definition in this work: a
distinguisher may also put loops on the system being tested, i.e.,
from an output of $\Phi$ back to an input of $\Phi$ \--- this is more
powerful, because a loop ``takes no time'', but plugging in a causal
box would incur a delay since the identity is not a causal
operation. Such a distinguisher is illustrated in
\figref{fig:distinguisher}. Since it only makes sense to compare two
systems $\Phi$ and $\Psi$ that have inputs and outputs of the same
dimensions, we also parametrize the distinguishers by these input and
output dimensions.

\begin{figure}[tb]
\begin{centering}
\begin{tikzpicture}[
wArrow/.style={->,>=stealth,thick}]

\node[inner sep=0] (A) at (0,-1.5) {};
\node[inner sep=0] (T) at (0,.5) {};
\node[inner sep=0] (B) at (0,-.5) {};
\node[draw,thick,minimum width=2cm,minimum height=2cm] (phi) at (0,0) {$\Phi$};

\draw[wArrow] (-2,0 |- T) to (-1,0 |- T);
\draw[wArrow] (1,0 |- T) to (2,0 |- T);

\draw[thick] (1,0 |- B) .. controls (2,0 |- B) and (2,0 |- A) .. (A.center);
\draw[wArrow] (A.center) .. controls (-2,0 |- A) and (-2,0 |- B) .. (-1,0 |- B);

\draw[thick] (-3,1) -- ++(1,0) -- ++(0,-3) -- ++(4,0) -- ++(0,3)
-- ++(1,0) -- ++(0,-4) -- ++(-6,0) -- cycle;

\node (D) at (0,-2.5) {$\hat{\aD}$};
\draw[wArrow] (3,-1) to node[auto] {$0,1$} (4,-1);

\end{tikzpicture}

\end{centering}
\caption[Distinguisher]{\label{fig:distinguisher}A distinguisher
  plugged into a causal box $\Phi$ can loop some of the outputs back
  to the inputs as well as process other outputs before providing them
  as inputs to $\Phi$. It then outputs a bit corresponding to its
  decision.}
\end{figure}
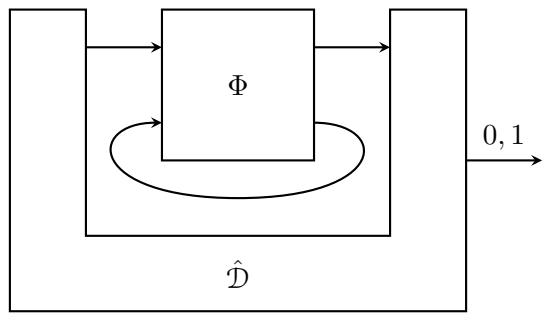

Although one traditionally depicts distinguishers as outputting a bit
$0$ or $1$, we formally model a distinguisher as having a
$1$\=/dimensional output wire, i.e., it outputs a state
$\rho \in \tcop{\cF^\T}$. We then define the distinguisher's decision
to be $0$ if it outputs a vacuum state, and $1$ otherwise. According
to our definition of causal boxes, an output only needs to be defined
on all bounded cuts $\cC \subseteq \T$, it is not necessarily defined
on all of $\T$. If the set $\T$ were totally ordered, one could define
the output value of the distinguisher to be the limit of the output
value on $\T^{\leq t}$ as $t \to +\infty$.  However, for a partially
ordered set, it is no clear how to take such a limit. Instead, we
define the output of a distinguisher $\aD$ to correspond to the value
produced on a bounded cut $\T^{\leq t_\aD}$, for some fixed
$t_\aD \in \T$. By considering a set of distinguishers
$\fD = \{\aD_i\}_i$ that have different bounds $t_{\aD_i} \in \T$ and
taking the supremum of the distances for each $\aD_i \in \fD$, we get
a notion of distinguishability that covers all of $\T$.

In \appendixref{app:alternative} we discuss some alternative
distinguisher definitions. We rewrite the definitions from this
section using subnormalized boxes, which simplifies them
considerably. Then we consider distinguishers that are not constrained
to producing an output within a fixed cut $\T^{\leq t_\aD}$, and show
that this results in an equivalent notion of distance.

\begin{deff}[Distinguisher] \label{def:distinguisher} A
  \emph{$(m,n)$\-/distinguisher}
  $\aD = \{{\id^\cC} \tensor \hat{\aD}^\cC\}_{\cC \in \bcut}$ consists
  of a $(\hat{n},\hat{m})$\-/causal box
  $\hat{\aD} = \{\hat{\aD}^\cC\}_{\cC \in \bcut}$ with
  $m+1-\hat{m} = n - \hat{n}$, a termination time $t_\aD \in \T$, and
  a specification of how the distinguisher is connected to an
  $(m,n)$\-/dimensional system \--- i.e., which input and output
  sub-wires are connected to $\hat{\aD}$ and which are directly
  connected by a loop. We refer to the additional output sub-wire of
  dimension $1$ of the distinguisher that is never connected to any
  other system as the distinguisher's output wire. For an
  $(m,n)$\-/distinguisher $\aD$ and an $(m,n)$\-/causal box $\Phi$,
  let $\aD\Phi$ denote the causal box with no input wire and a
  $1$\=/dimensional output wire resulting from connecting the systems
  as specified. We define $\aD[\Phi]$ to be the binary random variable
  on $\{0,1\}$ obtained by projecting the output of $\aD\Phi$ within
  $\T^{\leq t_\aD}$ on $P^{\leq t_\aD}_0 = \proj{\Omega}^{\leq t_\aD}$
  and
  $P^{\leq t_\aD}_1 = I^{\leq t_\aD} - \proj{\Omega}^{\leq t_\aD}$.
\end{deff}

\begin{deff}[Distance] \label{def:metric} Given a set of
  $(m,n)$\-/distinguishers $\fD$, the distance between two
  $(m,n)$\-/causal boxes $\Phi$ and $\Psi$ is defined as
\[d^{\fD}(\Phi,\Psi) \coloneqq \sup_{\aD \in \fD}
\delta(\aD[\Phi],\aD[\Psi])\,,\] where $\delta(\cdot,\cdot)$ is the
statistical or total variation distance, i.e., 
\[ 
\delta(X_1,X_2) = \frac{1}{2} \sum_{x \in \{0,1\}} \left| \Pr[X_1 = x] - \Pr[X_2 = x]
    \right|\,.
\]
\end{deff}

We now show that the the function $d^{\fD} : \fS \times \fS \to [0,1]$
defined above is a pseudo\-/metric for arbitrary $\fD$ and a metric if
$\fD$ is the set of all distinguishers.

\begin{thm}
\label{thm:metric}
The distance measure given in \defref{def:metric} is a
pseudo\-/metric. Furthermore, if $\fD$ is the set of all
distinguishers, then this distance is a metric.
\end{thm}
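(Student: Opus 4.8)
The plan is to deduce the three pseudo-metric axioms from the fact that the statistical distance $\delta$ is itself a metric on the set of binary probability distributions, and then to establish the definiteness property separately by an explicit distinguisher construction. For the pseudo-metric part, fix any $\aD \in \fD$ and note that $\Phi \mapsto \aD[\Phi]$ is a well-defined map into binary random variables on $\{0,1\}$ (\defref{def:distinguisher}). Reflexivity is immediate since $\delta(\aD[\Phi],\aD[\Phi]) = 0$ for every $\aD$, so the supremum over $\fD$ vanishes; symmetry follows from $\delta(\aD[\Phi],\aD[\Psi]) = \delta(\aD[\Psi],\aD[\Phi])$. For the triangle inequality I would combine the pointwise bound $\delta(\aD[\Phi],\aD[\Psi]) \leq \delta(\aD[\Phi],\aD[\Gamma]) + \delta(\aD[\Gamma],\aD[\Psi])$ with the elementary fact that the supremum of a sum is at most the sum of the suprema, giving $d^{\fD}(\Phi,\Psi) \leq d^{\fD}(\Phi,\Gamma) + d^{\fD}(\Gamma,\Psi)$. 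This settles that $d^{\fD}$ is a pseudo-metric for arbitrary $\fD$.

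It remains to prove definiteness when $\fD$ is the set of all distinguishers, i.e.\ $d^{\fD}(\Phi,\Psi) = 0 \implies \Phi = \Psi$. I would argue the contrapositive. Suppose $\Phi \neq \Psi$. Then there is a bounded cut $\cC \in \bcut$ with $\Phi^\cC \neq \Psi^\cC$ as CPTP maps $\tcop{\cF^{\T}_X} \to \tcop{\cF^{\cC}_Y}$. Since density operators span the trace-class operators, the maps must already disagree on some density operator $\rho$, so $\tau \coloneqq \Phi^\cC(\rho) - \Psi^\cC(\rho)$ is a nonzero, self-adjoint, trace-class operator with $\tr \tau = 0$. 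By the Helstrom bound (valid on separable Hilbert spaces via the spectral decomposition of $\tau$), taking $M$ to be the spectral projector onto the positive part of $\tau$ yields a two-outcome measurement $\{M, I - M\}$ with $\tr[M\tau] = \tfrac12 \trnorm{\tau} > 0$.

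The remaining task is to package $\rho$ and $M$ into an actual distinguisher. I would connect all input and output wires of $\Phi$ to $\hat{\aD}$ (using no direct loops) and choose a termination time $t_\aD$ with $\cC \subseteq \T^{\leq t_\aD}$. The distinguisher $\hat{\aD}$ prepares $\rho$ on the input wire $X$ at suitably early positions (realized, via its Stinespring dilation, as an isometry preparing a purification whose purifying part stays in the internal memory and is never measured), feeds $X$ to $\Phi$, collects the output on $Y$ within $\cC$, performs the measurement $\{M, I-M\}$, and emits the vacuum on its $1$-dimensional output wire for outcome $M$ and a non-vacuum state for outcome $I-M$. Placing the preparation before the positions in $\chi(\cC)$ and the measurement after $\cC$ respects causality, so $\aD\Phi$ and $\aD\Psi$ are well-defined causal boxes with no input. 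By construction $\Pr[\aD[\Phi] = 1] = \tr[M \Phi^\cC(\rho)]$ and likewise for $\Psi$, whence $\delta(\aD[\Phi],\aD[\Psi]) = \lvert \tr[M\tau] \rvert = \tfrac12 \trnorm{\tau} > 0$, and therefore $d^{\fD}(\Phi,\Psi) > 0$.

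The main obstacle I anticipate is this last step: verifying that the preparation-then-measurement procedure is genuinely a distinguisher in the sense of \defref{def:distinguisher} and that composing it with $\Phi$ yields a valid causal box. Concretely, one must check the dimension bookkeeping (all of $\Phi$'s wires route to $\hat{\aD}$, so $\hat n = n$ and $\hat m = m+1$, giving $m+1-\hat m = n-\hat n = 0$) and confirm that the source-like preparation and the deferred measurement can each be assigned positions compatible with a causality function, invoking that loops take no time. The infinite-dimensionality of the Fock spaces means I would rely on the general trace-class/bounded-operator duality underlying the Helstrom bound rather than on any finite-dimensional shortcut; the Stinespring machinery of \secref{sec:representation} guarantees that the required preparation and measurement admit isometric dilations on these spaces.
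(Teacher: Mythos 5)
Your proposal is correct and takes essentially the same route as the paper's proof: the pseudo-metric axioms are inherited pointwise from the statistical distance and combined via the supremum, and definiteness is shown by picking a bounded cut $\cC$ and a state $\rho$ with $\Phi^\cC(\rho) \neq \Psi^\cC(\rho)$, then packaging preparation of $\rho$ and a measurement of the output after $\cC$ into a distinguisher with termination time $t_\aD$. Your only addition is to make the paper's appeal to an ``optimal measurement'' explicit via the Helstrom projector onto the positive part of $\Phi^\cC(\rho) - \Psi^\cC(\rho)$, together with the dimension bookkeeping \--- a harmless fleshing-out of the same argument.
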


\begin{proof}
We need to show that the three conditions given in
\secref{sec:theory} hold, namely, for any $\Phi,\Psi,\Gamma
\in \fS$,
\begin{align*}
d^{\fD}(\Phi,\Phi) & = 0\,, \\
d^{\fD}(\Phi,\Psi) & = d(\Psi,\Phi)\,, \\
d^{\fD}(\Phi,\Psi) & \leq d(\Phi,\Gamma) + d(\Gamma,\Psi)\,.
\end{align*}

The first two conditions are immediate because the statistical
distance is a metric, hence $\delta(\aD[\Phi],\aD[\Phi]) = 0$ and
$\delta(\aD[\Phi],\aD[\Psi]) = \delta(\aD[\Psi],\aD[\Phi])$. The last
one follows from
\begin{align*}
d^{\fD}(\Phi,\Psi) & = \sup_{\aD \in \fD} \delta(\aD[\Phi],\aD[\Psi]) \\
& \leq \sup_{\aD \in \fD} \left(\delta(\aD[\Phi],\aD[\Gamma])  +
  \delta(\aD[\Gamma],\aD[\Psi])\right) \\
& \leq \sup_{\aD \in \fD} \delta(\aD[\Phi],\aD[\Gamma])  +
  \sup_{\aD' \in \fD} \delta(\aD'[\Gamma],\aD'[\Psi]) \\
& = d^{\fD}(\Phi,\Gamma) + d^{\fD}(\Gamma,\Psi)\,.
\end{align*}

Finally to show that $d^{\fD}$ is a metric if $\fD$ is the set of all
distinguishers, we need to prove that
\[\Phi \neq \Psi \implies \exists \aD \in \fD \text{ s.t. }
\delta(\aD[\Phi],\aD[\Psi]) > 0\,. \]
Since $\Phi \neq \Psi$ there must exist a $\cC \in \bcut$ for which
$\Phi^{\cC} \neq \Psi^{\cC}$, i.e., there exists a
$\rho \in \tcop{\cF^{\cC_\inrm}_X}$ such that
$\Phi^\cC(\rho) \neq \Psi^\cC(\rho)$, where
$\cC_\inrm = \chi_\Phi(\cC) \cup \chi_\Psi(\cC)$. A distinguisher
$\aD$ that distinguishes $\Phi$ from $\Psi$ can be constructed as the
parallel composition of two systems. The first prepares the state
$\rho$ and sends it to the system ($\Phi$ or $\Psi$). The second
gathers the output of the system until some position $t$ such that
$\cC \subseteq \T^{\leq t}$. Then it performs an optimal measurement to distinguish
$\Phi^\cC(\rho)$ from $\Psi^\cC(\rho)$. The result of this measurement
is output in a position $t_\aD \geq t$.
\end{proof}

\section{Example: the quantum switch}
\label{sec:qswitch}

The quantum switch \--- which was introduced in
\secref{sec:intro.superposition} \--- is a system that, given
black\-/box access to two unitaries $U$ and $V$, applies them in a
controlled superposition of different orders. We reproduce in
\figref{fig:qswitch.labeled} a figure illustrating the quantum switch
with some extra labels for the different wires.

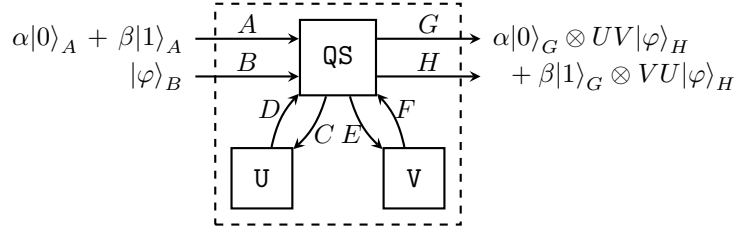
\begin{figure}[tb]
\begin{centering}
\begin{tikzpicture}[
wire/.style={-,thick},
wArrow/.style={->,>=stealth,thick},
unitary2/.style={draw,thick,minimum width=1cm,minimum height=1cm},
unitary1/.style={draw,thick,minimum width=.8cm,minimum height=.8cm}]

\node[inner sep=0] (l1) at (-1.9,.25) {};
\node[left,text width=3.2cm,align=right] at (l1) {\small $\alpha \zero_A + \beta \one_A$};
\node[inner sep=0] (l2) at (-1.9,-.25) {};
\node[left] at (l2) {\small $\ket{\varphi}_B$};
\node[inner sep=0] (r1) at (1.9,.25) {};
\node[right,text width=3.2cm,align=left] at (r1)
    {\small $\alpha \zero_G \tensor UV \ket{\varphi}_H$};
\node[inner sep=0] (r2) at (1.9,-.25) {};
\node[right,text width=3.2cm,align=right] at (r2)
    {\small ${}+\beta \one_G \tensor VU \ket{\varphi}_H$};

\node[unitary2] (qs) at (0,0) {$\QS$};
\node[unitary1] (u) at (-1,-1.6) {$\QU$};
\node[unitary1] (v) at (1,-1.6) {$\QV$};

\node[fit=(qs)(u)(v),thick,dashed,draw,inner sep=.2cm] (box) {};

\draw[wArrow] (l1) to node[auto,yshift=-2] {\small $A$} (qs.west |- l1);
\draw[wArrow] (l2) to node[auto,yshift=-2] {\small $B$} (qs.west |- l2);
\draw[wArrow,bend left=15] (qs) to node[auto,xshift=-4,yshift=4] {\small $C$} (u);
\draw[wArrow,bend right=15] (qs) to node[auto,swap,xshift=4,yshift=4] {\small $E$} (v);
\draw[wArrow,bend left=15] (u) to node[auto,xshift=4,yshift=-4] {\small $D$} (qs);
\draw[wArrow,bend right=15] (v) to node[auto,swap,xshift=-4,yshift=-4] {\small $F$} (qs);
\draw[wArrow] (qs.east |- r1) to node[auto,yshift=-2] {\small $G$} (r1);
\draw[wArrow] (qs.east |- r2) to node[auto,yshift=-2] {\small $H$} (r2);
\end{tikzpicture}

\end{centering}
\caption[Quantum switch with labels]{\label{fig:qswitch.labeled}The
  quantum switch, $\QS$, queries first $\QU$ then $\QV$ or first $\QV$
  then $\QU$ depending on the control qubit.}
\end{figure}

As shown in \cite{CDFP12,ACB14,Chi12,FAB15} the quantum switch can
decrease the complexity of computational tasks and increase the
distinguishability of quantum states and channels.  Although it is
rather straightforward to describe the effect of plugging the quantum
switch into the systems $\QU$ and $\QV$ \--- this corresponds to the
dashed box from \figref{fig:qswitch.labeled} \--- the quantum switch
itself, i.e., the box $\QS$ in \figref{fig:qswitch.labeled}, cannot be
described using combs or circuits~\cite{CDPV13}. In this section we use our
framework to provide a description of the system $\QS$ as a
causal box.

The total system consists of $3$ boxes, $\QU$, $\QV$ and $\QS$. We
first model these $3$ sub-systems, then prove that the composition of
the $3$ results in a causal box which applies either $UV$ or $VU$ to
an input state according to a control qubit. The entire system can be
executed in $6$ steps, so it is sufficient to choose
$\T = \{1,\dotsc,6\}$. The system can easily be extended by adding
more points to $\T$ and defining the behavior of the boxes on these
points.

The sub\-/system $\QU$ applies the corresponding unitary to the input
it receives, and outputs the result a step later. For simplicity we
assume that if multiple messages are received simultaneously, $U$ is
applied to every one of them. We also provide the box with an
(internal) counter, that keeps track of the number of times that $U$
is applied. Let $\ket{\psi^n_t} \in \vee^n (\C^d \tensor \ket{t})$ be
an element of the symmetric subspace of $n$ qudits all arriving in
position $t$, for $n \geq 1$. Then
\begin{align*}
\QU \vacuum_C \tensor \ket{i}_{\QU} & = \vacuum_D \tensor \ket{i}_{\QU}\,, \\
\QU \ket{\psi^n_t}_C \tensor \ket{i}_{\QU} & = \ket{\left( U^{\tensor
      n} \psi^n\right)_{t+1}}_D \tensor \ket{i+n}_{\QU}\,, \end{align*}
where the register $C$ contains the input to $\QU$, $D$ contains the
output, and the register denoted $\QU$ is the internal counter of
the system. The box $\QV$ is defined similarly.

As can be seen in \figref{fig:qswitch.labeled}, the box $\QS$ has $8$
wires. To provide a complete description of $\QS$ we need to define
how inputs on the entire Fock space of each input wire are mapped to
the output space for all points in $\T$. However, many of these inputs
are ``invalid'', e.g., the initial input is expected to be received in
step $1$, and $\QS$ is plugged into systems $\QU$ and $\QV$ that
behave as described above. Hence, we describe the behavior of $\QS$ on
valid inputs only, and assume that any state orthogonal to these are
simply ignored by $\QS$.

In the first step, upon receiving the input \[\left(\alpha \zero_A +
  \beta \one_A\right) \tensor \ket{\psi}_B \,,\] $\QS$ moves the control
qubit to its internal register and queries either $\QU$ or $\QV$
conditioned on this state. Let $\QS_1$ denote this operation; it
is defined as
\begin{align*}
\QS_1 \zero_A \tensor \ket{\psi}_B & = \zero_\QS \tensor \vacuum_C
                                     \tensor \ket{\psi}_E \,, \\
\QS_1 \one_A \tensor \ket{\psi}_B & = \one_\QS \tensor \ket{\psi}_C
                                    \tensor \vacuum_E \,.
\end{align*}
When $t = 3$, $\QS$ forwards what it received from $\QU$ to $\QV$ and
from $\QV$ to $\QU$, i.e.
\begin{equation*}
\QS_3 \ket{\psi}_D \tensor \ket{\varphi}_F = \ket{\varphi}_C \tensor \ket{\psi}_E \,.
\end{equation*}
And finally, in the last step, conditioned on the value of the control
qubit, $\QS$ outputs either the message from $\QU$ or from $\QV$ along
with the control qubit.
\begin{align*}
\QS_5 \zero_\QS \tensor \ket{\psi}_D \tensor \vacuum_F & = \zero_G \tensor \ket{\psi}_H \,, \\
\QS_5 \one_\QS \tensor \vacuum_D \tensor \ket{\psi}_F & = \one_G \tensor \ket{\psi}_H \,.
\end{align*}

\begin{lem}
\label{lem:qswitch}
The composition of $\QS$, $\QU$ and $\QV$ results in a system which
performs a controlled switch between the orders of $U$ and
$V$. Furthermore, the boxes $\QU$ and $\QV$ are queried only once
each.
\end{lem}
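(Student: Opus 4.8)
The plan is to evaluate the composed causal box --- obtained by placing $\QS$, $\QU$ and $\QV$ in parallel and then looping the wires $C,D,E,F$ as drawn in \figref{fig:qswitch.labeled} --- directly on the two computational basis states $\zero_A$ and $\one_A$ of the control wire, and to extend to an arbitrary input $\alpha\zero_A+\beta\one_A$ by linearity. The point that makes this tractable is that, although the wiring is cyclic, all of $\QS_1,\QS_3,\QS_5$ act at the odd positions $1,3,5$ of $\T=\{1,\dots,6\}$ while the (delayed) applications of $\QU$ and $\QV$ produce their outputs at the even positions, so every input required by a box at a given position has been generated strictly earlier. Causality thus resolves the loops into a finite forward propagation: formally, I would invoke \propref{prop:sequencerepresentation} to unroll the loops of the composed box into an ordered finite sequence of isometries, after which the composite map is computed simply by pushing a basis input through the wires position by position.

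Carrying this out for $\one_A\tensor\ket{\varphi}_B$: $\QS_1$ stores the control bit and routes the state to $\QU$, giving $\one_\QS\tensor\ket{\varphi}_C\tensor\vacuum_E$; then $\QU$ consumes $\ket{\varphi}_C$ and outputs $\ket{U\varphi}_D$ (counter $0\to1$) while $\QV$ consumes $\vacuum_E$ and outputs $\vacuum_F$ (counter unchanged); $\QS_3$ forwards $D\to E$ and $F\to C$, yielding $\vacuum_C\tensor\ket{U\varphi}_E$; then $\QU$ passes on the vacuum and $\QV$ consumes $\ket{U\varphi}_E$, outputting $\ket{VU\varphi}_F$ (counter $0\to1$); finally $\QS_5$ reads the $\one_\QS$ branch and outputs $\one_G\tensor\ket{VU\varphi}_H$. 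The mirror computation for $\zero_A$ routes the state first to $\QV$ and, after $\QS_3$, back to $\QU$, giving $\zero_G\tensor\ket{UV\varphi}_H$. In both branches each counter terminates at the value $1$, which proves that $\QU$ and $\QV$ each process a non-vacuum message exactly once.

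It remains to combine the two branches. The composed map is linear --- parallel composition is a tensor product and, by \remref{rem:nat.rep}, a loop is a partial trace in the natural representation --- so the output on $(\alpha\zero_A+\beta\one_A)\tensor\ket{\varphi}_B$ is the superposition $\alpha\,\zero_G\tensor\ket{UV\varphi}_H+\beta\,\one_G\tensor\ket{VU\varphi}_H$, which is exactly \eqnref{eq:qswitch.out}. Here it is important that the internal counters of $\QU$ and $\QV$ end in the \emph{same} state $(1,1)$ in both branches: this is what disentangles the internal registers from the control and guarantees that the output is a coherent superposition rather than a mixture. In this way the ``queried once'' half of the lemma is precisely what makes the order-superposition half come out clean.

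I expect the main obstacle to be the rigorous justification that the cyclic loops may be treated as a plain forward sweep. This is where causality does the work, and the care lies in checking \eqnref{eq:causality.finite}-style termination for the specific schedule above --- i.e. that the map $\chi$ of the composed box strictly decreases so that \propref{prop:sequencerepresentation} applies and the composite isometry is well defined on the finite set $\T$. A secondary point needing attention is the vacuum bookkeeping: one must confirm that on each branch the idle box receives $\vacuum$ exactly at the positions where it should do nothing, so that the rule $\QU\vacuum_C\tensor\ket{i}_{\QU}=\vacuum_D\tensor\ket{i}_{\QU}$ (and its $\QV$ analogue) makes the unused branch act trivially without incrementing a counter.
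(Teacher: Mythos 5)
Your proposal is correct and takes essentially the same route as the paper: the paper's proof is exactly this forward propagation through the six positions of $\T$, listing the state at each step $t=1,\dotsc,6$, with the counters merging to $\one_\QU \one_\QV$ at $t=5$ so that they factor out of the final superposition \--- the ``queried once'' claim and the coherence of the output are read off simultaneously, just as you argue. The only differences are presentational \--- you propagate the two control branches separately and recombine via the isometric (sequence) representation, while the paper carries the superposition through directly and leaves the unrolling of the loops implicitly to the general composition theorems \--- plus a harmless off-by-one in your scheduling remark: $\QU$ and $\QV$ receive at the even positions $2,4$ and emit their outputs at the odd positions $3,5$, as your own step-by-step computation in fact respects.
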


\begin{proof}
To prove this we put together all the steps described above. The wires
that do not appear in the equations contain a vacuum state in the
corresponding step.
\begin{align*}
  & t = 1 & & \left(\alpha \zero_A + \beta \one_A\right)
              \ket{\psi}_B \zero_\QU \zero_\QV \,, \\
  & t = 2 & & \left( \alpha \zero_\QS  \vacuum_C 
              \ket{\psi}_E + \beta \one_\QS 
              \ket{\psi}_C  \vacuum_E \right) 
              \zero_\QU  \zero_\QV \,, \\ 
  & t = 3 & & \alpha \zero_\QS  \vacuum_D 
              \ket{V\psi}_F 
              \zero_\QU  \one_\QV + \beta \one_\QS 
              \ket{U\psi}_D  \vacuum_F 
              \one_\QU  \zero_\QV \,, \\
  & t = 4 & & \alpha \zero_\QS   \ket{V\psi}_C \vacuum_E 
              \zero_\QU  \one_\QV + \beta \one_\QS 
              \vacuum_C \ket{U\psi}_E
              \one_\QU  \zero_\QV \,, \\
  & t = 5 & & \left( \alpha \zero_\QS   \ket{UV\psi}_D \vacuum_F 
              + \beta \one_\QS 
              \vacuum_D \ket{VU\psi}_F \right)
              \one_\QU  \one_\QV \,, \\
  & t = 6 & & \left( \alpha \zero_G   \ket{UV\psi}_H 
              + \beta \one_G 
              \ket{VU\psi}_H \right)
              \one_\QU  \one_\QV \,.
\end{align*}
In the final step, the wires $G$ and $H$ contain the desired output,
and the counters of $\QU$ and $\QV$ are set to $1$.
\end{proof}

\begin{rem}[Alternative quantum switch]
  \label{rem:qswitch}
  The quantum switch defined in this section stores the control qubit in
  its internal memory. The number of inputs it may treat
  simultaneously is thus limited by this memory. An alternative
  construction was proposed independently in \cite{FDDB14} and
  \cite{ACB14} that does not involve any memory: the quantum switch
  sends the control qubit along with the target qubit to the systems
  $\QU$ and $\QV$. This allows an unbounded number of switches to be
  performed simultaneously (and even do them in superposition), but it
  only works if one can assume that the systems $\QU$ and $\QV$
  perform their operation on a subspace of the input received and
  identity on the remaining input.
\end{rem}

\section{Concluding remarks}
\label{sec:conclusion}

Many fundamental notions used to define quantum computation are
derived by generalizing classical notions, e.g., quantum Turing
machines~\cite{Ben80,Deu85,Yao93,BV97}, quantum
circuits~\cite{Deu89,Yao93}, or the machine model used in quantum
composable security~\cite{Unr04,Unr10}. This can result in some
classical concepts being hard\-/coded in the quantum model, e.g.,
quantum composable security~\cite{Unr04,Unr10} can only model a
classical scheduling of messages and quantum
circuits~\cite{Deu89,Yao93} do not consider states in a superposition
of being on one wire or another.\footnoteremember{fn:complexity}{If
  one interprets the circuit model as forbidding the presence of
  vacuum states on a wire, then one can prove impossibility results
  for certain systems which are physically
  valid~\cite{CDPV13,TGMV13,AFCB14}. However, if one adopts the view
  introduced in this work that a vacuum state should be modeled
  explicitly, then one can argue that a message in a superposition
  between two wires is just an entangled state between two wires,
  $\alpha\ket{\psi}_A \tensor \vacuum_B + \beta \vacuum_A \tensor
  \ket{\psi}_B$,
  and that this is captured by the circuit model. And in this sense,
  the circuit model is (still) a universal model of quantum
  computation. Nonetheless, the distinction between vacuum and
  non\-/vacuum state is vital for computation and query complexity,
  since if one sends nothing to an oracle, this oracle has not been
  queried.}

In this work we propose a model of quantum systems that is not derived
from a classical model, but uses a genuine quantum approach in which
the space of quantum messages is a Fock space. The general approach
follows the top\-/down paradigm~\cite{MR11}, which consists in
starting at the highest possible level of abstraction, and proceeds
downwards, introducing in each new lower level only the minimal
necessary specializations. The theory of systems introduced in
\secref{sec:theory} does not make any assumptions about the nature of
the underlying systems \--- it applies equally to a classical or
quantum model of systems. We instantiate these with causal boxes,
which are defined as black boxes. Only the input\-/output behavior of
the system is relevant, the internals \--- e.g., the memory or model
of computation \--- are not needed and thus not specified. This work
leaves open a multitude of questions on modeling discrete quantum
systems, some of which we discuss in the following paragraphs.

\paragraph{Quantum complexity.} One of the most fundamental open
questions is how to evaluate the complexity of a quantum circuit. A
straightforward adaptation of the classical concept of counting gates
in a circuit does not seem to be meaningful if two circuits are run in
superposition, or if a circuit consists of many gates used in
superposition. If one does not distinguish between a gate being used
or not (see \footnoteref{fn:complexity}), then the complexity of the
circuit would correspond to counting all the gates needed to draw the
entire circuit. In the case of the quantum switch from
\secref{sec:qswitch}, one would find that either $\QU$ or $\QV$ has to
be queried at least twice~\cite{CDPV13}, even though one of the
queries is the vacuum state. The gap between the number of gates
needed to draw a circuit and the number of gates actually used can be
quite large: Ara\'ujo et al.~\cite{ACB14} found a problem that cannot
be solved with a circuit containing less than $\Omega(n^2)$ gates, but
only superpositions of $O(n)$ subsets of these gates are ever
used. Note that to implement the algorithm of Ara\'ujo et al., it is
sufficient to construct $O(n)$ gates, since the circuit contains only
$O(n)$ \emph{different} gates and the quantum switch can be used to
swap the order of these gates~\cite{ACB14}. However, such a coherent
counting method cannot be applied in general, since this could result
in a measurement of whether the gate is used or not, which could
change the outcome of the computation. We would thus evaluate the
complexity of a different circuit altogether.\footnote{The example of
  Ara\'ujo et al.~\cite{ACB14} is constructed so that coherently
  counting which gates are used does not change the computation. This
  is possible when the counter factors out at the end of the
  computation. This is also the case in the example of the quantum
  switch from \secref{sec:qswitch}, where the internal counters of
  $\QU$ and $\QV$ are in a superposition of $\zero$ and $\one$ in the
  middle of the protocol, but factor out at the end.} A more physical
approach to measuring complexity may be needed, e.g., by evaluating
the time or energy needed to perform a
computation~\cite{GHRdRS16,Fai16}.

\paragraph{Indefinite causal structures.} Our framework models
messages in superpositions of different orders by explicitly defining
a set of positions $\T$ and assigning different positions (or
superpositions of positions) to different messages. This allows
superpositions of different causal structures to be captured by the
framework. A different line of research, inspired by Hardy's work on
probability theories with indefinite causal
structures~\cite{Har05,Har07,Har09,Har10}, has developed a process
matrix formalism to model indefinite causal
structures~\cite{OCB12,Bru14a,OG16,ABCFGB15,Bra16,BW14,Bru14b,BAFCB16,BW16b,BFW14,BW16a}. As
already mentioned in \secref{sec:intro.related}, the process matrix
framework does not only capture superpositions of causal structures,
but also unphysical causal structures. In fact, it is currently still
unclear what systems modeled by the process matrix framework are
physically implementable. Additionally, the various players connected
to the process matrix are restricted to processing inputs and
producing outputs exactly once, preventing any dynamical behavior of
the parties, e.g., they cannot change at runtime the number of
messages they read or send. This framework is thus not suited for
modeling the possible behaviors of players in a multi\-/player game or
cryptographic protocol.

It is nonetheless interesting to compare what systems can be modeled
by either framework. A system that allows a causal inequality to be
violated cannot be captured by causal boxes, since a global notion of
order is hard\-/coded in our framework. Hence causal boxes can only
model process matrices that are \emph{extensibly causal} \--- those
that cannot be used to violate a causal
inequality~\cite{OG16}. However, no violation of a causal inequality
has ever been physically realized. In fact, to date, realizable
physical experiments have a global notion of order, and thus causal
boxes might well capture the subset of process matrices that can be
implemented. It remains open whether this corresponds to extensibly
causal processes, or whether there are examples of such processes that
cannot be physically realized~\cite{FAB16}.

\paragraph{Relativistic cryptography.} An immediate application of our
causal boxes framework is in modeling cryptographic protocols that
involve time, e.g., relativistic
protocols~\cite{Ken99,Ken12,KTHW13,LKBHTWZ14,AK15,BCFGGOS11,Unr14}. Here,
(honest) parties are positioned in precise locations, and the time
taken by messages to travel form one location to another is used to
ensure that a dishonest party cannot cheat. To model such a setting
one can take $\T$ to be a countable subset of space\-/time. Two points
$t$ and $t'$ are ordered, $t \leq t'$, if $t'$ is in the future light
cone of $t$. All messages are then assigned some point $t \in \T$
corresponding to their location.

\paragraph{Non\-/deterministic systems.} Our framework may also be
used to model protocols that do not involve time explicitly. In many
settings, a player executes a set of instructions, but the time it
takes to do so is not determined by the player. One can think of such
a system as being \emph{non\-/deterministic}. It does not correspond
to one system, but to a set of systems, each one capturing one
possible behavior. For example, a simple identity channel could be a
set of systems which forward the message received with a delay
$\delta$, where the set is taken over all possible $\delta$, e.g.,
$\id = \{\id_\delta\}_{\delta \in \Q^+}$. The composition of two such
non\-/deterministic systems, $\Pi = \{\pi_x\}_x$ and
$\Psi = \{\psi_y\}_y$, is defined as the set resulting from composing
all pairs of elements from each system,
$\Pi \connect{P} \Psi = \{\pi_x \connect{P} \psi_y\}_{x,y}$. Since any
possible element $\pi \in \Pi$ could occur if a protocol $\Pi$ is
executed, we require security to hold for all elements $\pi \in \Pi$.
A main challenge in modeling systems this way is to determine what set
of behaviors correspond to a simple and intuitive, but incomplete,
description of a system as, e.g., a protocol given by pseudo\-/code or
a random system~\cite{Mau02,MPR07}. Furthermore, for such
non\-/deterministic systems to be usable, one would need to develop
formalism that allows them to be described compactly and which
satisfies the axioms from \secref{sec:theory}.

\section*{Acknowledgments}
\addcontentsline{toc}{section}{Acknowledgments}

CP is grateful to Gus Gutoski for discussions on quantum combs and how
to model composable security with them, to Vedran Dunjko for the
example of the controlled black\-/box unitary used in the introduction
as well as proofreading a preliminary version, and to Mateus Ara\'ujo,
\v{C}aslav Brukner, Lucien Hardy and Ognyan Oreshkov for providing
valuable comments. He is particularly indebted to Volkher Scholz for
having suggested the Hilbert space used to capture orders of messages,
provided invaluable advice on modeling physical systems, walked him
through the mathematics of infinite\-/dimensional Hilbert spaces, and
patiently answered his flow of questions and problems.

CP and RR are supported by the European Commission FP7 Project RAQUEL
(grant No.~323970), US Air Force Office of Scientific Research (AFOSR)
via grant~FA9550-16-1-0245, the Swiss National Science Foundation (via
the National Centre of Competence in Research `Quantum Science and
Technology') and the European Research Council -- ERC (grant
No.~258932). UM was supported by the Swiss National Science Foundation
(SNF), project No.~200020-132794. BT was supported by the Swiss
National Science Foundation (SNF) via Fellowship No.~P2EZP2_155566 and
in part by the NSF grants CNS-1228890 and CNS-1116800.

\appendix
\appendixpage
\phantomsection
\label{app}
\addcontentsline{toc}{section}{Appendices}

In \appendixref{app:fock} we define a Fock space in more
detail. In \appendixref{app:rep} we give a brief overview of
representations of quantum operators on infinite dimensional
systems. We discuss the \cjrep in \appendixref{app:rep.cj} and the
\natrep in
\appendixref{app:rep.natural}. 
In \appendixref{app:finite} we define the subset of causal boxes that
can be represented as a single map and which is closed under
composition.  In \appendixref{app:total} we discuss the special case
of the causality function for systems with a totally ordered set
$\T$. We prove that in this case, for the composition of two systems
to be well\-/defined, it is sufficient to require that the systems are
not arbitrarily fast around any point. In \appendixref{app:loops} we
prove some additional results about the loop operation. We show that
the alternative definition of a loop from \secref{sec:network.loops}
is equivalent to \defref{def:loop}. This simultaneously proves that
\defref{def:loop} always yields normalized maps.
In \appendixref{app:alternative} we give some alternative
distinguisher definitions, and show that they result in the same
pseudo\-/metric as in \secref{sec:metric}. In particular, we show how
this distance can be defined with subnormalized
distinguishers. Finally, \appendixref{app:lemmas} contains technical
lemmas.

\section{Fock space}
\label{app:fock}

As introduced in \secref{sec:space.wires}, for a Hilbert
space $\hilbert$, the corresponding bosonic Fock space is given by
\begin{equation} \label{eq:generalfockspace} \fock{\hilbert} \coloneqq
  \bigoplus_{n = 0}^\infty \vee^n \hilbert\,, \end{equation} where
$\vee^n \hilbert$ denotes the symmetric subspace of
$\hilbert^{\tensor n}$, and $\hilbert^{\tensor 0}$ is the one
dimensional space containing the vacuum state $\vacuum$. Let
$\{\ket{i} : i \in \cB\}$ denote a basis of $\hilbert$, where $\cB$ is
a strict totally ordered set. For
\begin{equation} \label{eq:symspacebasis1} \cB^{\vee n} \coloneqq
  \left\{ \left(i_1,\dotsc,i_n\right) : i_j \in \cB, j < k \implies
    i_j \leq i_k\right\}\,,\end{equation} a basis of $\vee^n \hilbert$
is then given by $\{\ket{(i_1,\dotsc,i_n)} : (i_1,\dotsc,i_n) \in \cB^{\vee n}\}$ where
$\ket{(i_1,\dotsc,i_n)}$ denotes the superposition over all
different states obtained by permuting the positions in the vector:
for $(i_1,\dotsc,i_n) \in \cB^{\vee n}$, let
\[\cB^{\times n}_{\left(i_1,\dotsc,i_n\right)} \coloneqq \bigcup_{\pi
  \in S_n} \{(i_{\pi(1)},\dotsc,i_{\pi(n)})\}\]
denote the subset of $\cB^{\times n}$ obtained by permuting the
elements of $\left(i_1,\dotsc,i_n\right)$, then
\begin{equation} \label{eq:symspacebasis2} \ket{(i_1,\dotsc,i_n)} \coloneqq
  \frac{1}{\sqrt{\left|\cB^{\times n}_{(i_1,\dotsc,i_n)}\right|}} \sum_{(j_1,\dotsc,j_n) \in
    \cB^{\times n}_{(i_1,\dotsc,i_n)}} \ket{j_1,\dotsc,j_n}\,.\end{equation} And a basis of
$\fock{\hilbert}$ is given by
\[\left\{\ket{x} : x \in \bigcup_{n = 0}^\infty \cB^{\vee n}\right\}\,.\]

For a wire $A$ with Hilbert space $\cF^{\T}_A = \fock{\Ltwo{\T}{\C^{d_A}}}$, a
state $\ket{\Psi} \in \cF^{\T}_A$ can thus be in a superposition
consisting of any number of qudits, e.g.,
\[\ket{\Psi} = \sum_{n = 0}^\infty \alpha_n \ket{\Psi_n}\,,\] where
$\ket{\Psi_n} \in \vee^n \left(\Ltwo{\T}{\C^{d_A}}\right)$ is an
$n$-qudit (symmetric) state with position information (where
$\ket{\Psi_0} = \vacuum$). An orthonormal basis of $\cF^\T_A$ is given
by the union of a basis for each
$\vee^n \left(\Ltwo{\T}{\C^{d_A}}\right)$, and an orthonormal basis
for $\vee^n \left(\Ltwo{\T}{\C^{d_A}}\right)$ is given by all
multisets\footnote{Due to the restriction to a symmetric subspace,
  only sets with repetition (multisets) not sequences of basis
  elements of $\Ltwo{\T}{\C^{d_A}}$ are relevant.} of cardinality $n$
of elements from $\{ \ket{v_t} : v \in \cV, t \in \T\}$.

\begin{rem}\label{rem:fockisomorphism}
Throughout this paper we use \eqnref{eq:fockisomorphism} to split a
wire into the tensor product of two wires, or to merge the tensor
product of two wires into one. Here we explicitly provide the
isomorphism used in \eqnref{eq:fockisomorphism}.

Let $\{\ket{x} : x \in \cA^{\vee n}\}$ and
$\{\ket{x} : x \in \cB^{\vee n}\}$ denote bases of $\vee^n\hilbert_A$
and $\vee^n\hilbert_B$, respectively, as defined in
\eqnsref{eq:symspacebasis1} and \eqref{eq:symspacebasis2}. A basis of
$\vee^n\left( \hilbert_A \oplus \hilbert_B \right)$ is given by
$\{\ket{x} : x \in \left(\cA \oplus \cB \right)^{\vee n} \}$
for \begin{multline*}\left(\cA \oplus \cB \right)^{\vee n} = \left\{
    \left(i_1,\dotsc,i_{n_1},j_{1},\dotsc,j_{n_2}\right) : n_1 +n_2 =
    n, i_k \in \cA, j_k \in \cB, \right. \\ \left. k < \ell \implies
    i_k \leq i_\ell \text{ and } j_k \leq j_\ell
  \right\}\,. \end{multline*} We define the isomorphism between
$\fock{\hilbert_A} \tensor \fock{\hilbert_B}$ and
$ \fock{\hilbert_A \oplus \hilbert_B}$ by setting
  \[\ket{\left(i_1,\dotsc,i_{n_1}\right)} \tensor
  \ket{\left(j_1,\dotsc,j_{n_2}\right)} \cong
  \ket{\left(i_1,\dotsc,i_{n_1},j_1,\dotsc,j_{n_2}\right)}\,.\]
\end{rem}

\section{Infinite dimensional representations of quantum operators}
\label{app:rep}

\subsection{The Choi-Jamio\l{}kowski representation}
\label{app:rep.cj}

Let $\hilbert_A$ and $\hilbert_B$ be finite dimensional Hilbert
spaces, and let $\Phi : \lo{A} \to \lo{B}$ be a CPTP map, where
$\lo{}$ is the set of linear operators on $\hilbert$. The \cjrep of
$\Phi$ is given by the operator $R_\Phi \in \lo{BA}$ \--- the
\emph{Choi operator} \--- defined as
\begin{equation} \label{eq:finitecj} R_\Phi = \sum_{i,j}
  \Phi(\ketbra{i}{j}) \tensor \ketbra{i}{j}\,.\end{equation} One can
think of $R_\Phi$ as capturing the image of a basis of $\lo{A}$,
namely $\{\ketbra{i}{j}\}_{i,j}$.  $R_\Phi$ is positive
semi\-/definite and satisfies \begin{equation} \label{eq:cjtp} \tr_B
  R_\Phi = I_A\,.\end{equation} In fact, any positive semi\-/definite
operator satisfying \eqnref{eq:cjtp} is the \cjrep of some CPTP
map~\cite{Wat16}.

For infinite dimensional spaces $\hilbert_A$ and $\hilbert_B$, the
Choi operator of a map
$\Phi : \tcop{\hilbert_A} \to \tcop{\hilbert_B}$ can be
unbounded. Instead, the \cjrep is defined as the sesquilinear positive
semi\-/definite form\footnote{$R(\cdot;\cdot)$ is a sesquilinear form
  if it is antilinear in the first argument and linear in the second.}
$R_\Phi$ on
\[ \hilbert_B \times \hilbert_A = \operatorname{span}\left\{\psi_B
  \tensor \psi_A : \psi_B \in \hilbert_B, \psi_A \in \hilbert_A
\right\}\,,\] satisfying
  \begin{equation} \label{eq:cjform} R_\Phi(\psi_B \tensor \psi_A ;
    \varphi_B \tensor \varphi_A) \coloneqq \bra{\psi_B} \Phi \left(
      \ketbra{\bar{\psi}_A}{\bar{\varphi}_A} \right)
    \ket{\varphi_B}\,, \end{equation} where $\ket{\bar{\psi}} = \sum_{i
    = 1}^{\infty}\ket{i} \overline{\braket{i}{\psi}}$ for some fixed
  basis $\{\ket{i}\}_i$ of $\hilbert_A$~\cite{Hol11}.

  If the domain of the sesquilinear form $R_\Phi$ is the whole of
  $\hilbert_B \otimes \hilbert_A$, then the corresponding operator is
  bounded\footnote{We denote the set of bounded operators on
    $\hilbert$ by $\bdop{\hilbert}$. $U \in \bdop{\hilbert}$ if there
    exists some $c > 0$ such that for all $\psi \in \hilbert$,
    $\norm{U\psi}/\norm{\psi} < c$.} (as in the case of finite
  spaces), and can be recovered as the operator
  $\hat{R}_\Phi\in \bdop{\hilbert_{BA}}$
  satisfying \begin{equation} \label{eq:boundedcj} \bra{\psi_B}
    \tensor \bra{\psi_A} \hat{R}_\Phi \ket{\varphi_B} \tensor
    \ket{\varphi_A} = R_\Phi(\psi_B \tensor \psi_A ; \varphi_B \tensor
    \varphi_A)\,.\end{equation}

\eqnref{eq:cjtp} can be rewritten
as \begin{equation} \label{eq:infcjtp} \sum_j R_\Phi(j_B \tensor
  \psi_A ; j_B \tensor \varphi_A) =
  \braket{\psi_A}{\varphi_A} \end{equation} for any basis
$\{\ket{j}\}_j$ of $\hilbert_B$, and holds in the infinite dimensional
case as well. The converse also holds in the infinite dimensional
case: any positive semi\-/definite sesquilinear form satisfying
\eqnref{eq:infcjtp} uniquely defines a CPTP map $\Phi :
\tcop{\hilbert_A} \to \tcop{\hilbert_B}$~\cite{Hol11}.

If the map $\Phi : \tcop{\hilbert_A} \to \tcop{\hilbert_B}$ is
completely positive, but not trace\-/preserving, then the \cjrep of
$\Phi$ is still defined as the sesquilinear positive semi\-/definite
from $R_\Phi$ given by \eqnref{eq:cjform}. But this form does not
satisfy \eqnref{eq:infcjtp}. Similarly, in the finite case the Choi
operator of a non\-/trace\-/preserving map is still given by the
positive operator from \eqnref{eq:finitecj}, but it does not satisfy
\eqnref{eq:cjtp}.

\subsection{The \natrep}
\label{app:rep.natural}

Let $\hilbert_A$ and $\hilbert_B$ be finite dimensional Hilbert
spaces. A linear map $\Phi : \lo{A} \to \lo{B}$ can be represented as
a linear map
$K_\Phi : \hilbert_A \tensor \hilbert_{\bar{A}} \to \hilbert_B \tensor
\hilbert_{\bar{B}}$ defined by
\begin{equation}
  \label{eq:natural.rep}
  K_\Phi \ket{i}\ket{j} = \sum_{k,\ell}\alpha_{k\ell}
  \ket{k}\ket{\ell} \coloniff  \Phi\left(\ketbra{i}{j}\right) =
  \sum_{k,\ell}\alpha_{k\ell} \ketbra{k}{\ell}
\end{equation} for some bases $\{\ket{i_A}\}_i$ and $\{\ket{k_B}\}_k$ of
$\hilbert_A$ and $\hilbert_B$. The transformation $\Phi \mapsto K_\Phi$ is a
bijection, and is referred to as the \emph{\natrep}~\cite{Wat16}. 

In the infinite dimensional case, one may define the \natrep of a map
$\Phi : \tcop{\hilbert_A} \to \tcop{\hilbert_B}$ in the same way. Note
however that \eqnref{eq:natural.rep} is not well\-/defined on the
entire space of linear maps
$\Phi : \bdop{\hilbert_A} \to \bdop{\hilbert_B}$. The inner product on
$\hilbert_A \tensor \hilbert_{\bar{A}}$ is isomorphic to the
Hilbert\-/Schmidt inner product\footnote{The Hilbert\-/Schmidt inner
  product is defined as
  $\langle A,B \rangle \coloneqq \trace{\hconj{A}B}$.}  of operators
on $\hilbert_A$. The transformation $\Phi \mapsto K_\Phi$ given by
\eqnref{eq:natural.rep} is thus a bijection between the set of maps
$\fS(\hilbert_A) \to \fS(\hilbert_B)$ and the set of maps
$\hilbert_A \tensor \hilbert_{\bar{A}} \to \hilbert_B \tensor
\hilbert_{\bar{B}}$,
where $\fS(\hilbert)$ is the space of all operators on $\hilbert$ with
a bounded Hilbert\-/Schmidt norm. Since
$\tcop{\hilbert} \subseteq \fS(\hilbert)$, this representation is in
particular well\-/defined for linear transformations between trace
class operators, $\Phi : \tcop{\hilbert_A} \to \tcop{\hilbert_B}$.

The natural and \cjreps are related as follows.
\begin{align}
\bra{\psi_B}\bra{\varphi_{\bar{B}}} K_\Phi
  \ket{\psi_A}\ket{\varphi_{\bar{A}}} & = \bra{\psi_{B}}
  \Phi\left(\ketbra{\psi_A}{\bar{\varphi}_A}\right) \ket{\bar{\varphi}_B} \notag\\
 & = R_\Phi(\psi_B \tensor \bar{\psi}_A ; \bar{\varphi}_B \tensor \varphi_A ) \,,
\label{eq:rep.cj.nat}
\end{align}
where, as previously,
$\ket{\bar{\psi}} = \sum_{i = 1}^{\infty}\ket{i}
\overline{\braket{i}{\psi}}$ for some fixed basis $\{\ket{i}\}_i$.

We now prove that the formula for a loop from \eqnref{eq:loop} may be
equivalently written using the \natrep, as noted in
\remref{rem:nat.rep}.

\begin{lem}
\label{lem:nat.rep}
Let $\Phi : \tcop{\hilbert_{AC}} \to \tcop{\hilbert_{BC}}$ be a linear
map. Let $\Psi : \tcop{\hilbert_{A}} \to \tcop{\hilbert_{B}}$ be
another linear map such that for some bases $\{k_C\}_k$ and
$\{\ell_C\}_\ell$ of $\hilbert_C$,
\begin{multline*}
  R_\Psi (\psi_B \tensor \psi_A; \varphi_B \tensor \varphi_A) = \\
  \sum_{k,\ell} R_{\Phi}(\psi_B \tensor k_C \tensor
  \psi_A  \tensor \bar{k}_C ; \varphi_B \tensor \ell_C \tensor
  \varphi_A \tensor \bar{\ell}_C  ) \,.
\end{multline*}
Then \[K_\Psi = \trace[C\bar{C}]{K_\Phi}\,.\]
\end{lem}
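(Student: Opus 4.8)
The plan is to verify the operator identity $K_\Psi = \trace[C\bar{C}]{K_\Phi}$ by comparing matrix elements, using the dictionary \eqref{eq:rep.cj.nat} that translates between the \natrep and the \cjrep. Since both sides are linear maps from $\hilbert_{A\bar{A}}$ to $\hilbert_{B\bar{B}}$, it suffices to show that $\bra{\psi_B}\bra{\varphi_{\bar{B}}} K_\Psi \ket{\psi_A}\ket{\varphi_{\bar{A}}}$ and $\bra{\psi_B}\bra{\varphi_{\bar{B}}} \trace[C\bar{C}]{K_\Phi} \ket{\psi_A}\ket{\varphi_{\bar{A}}}$ agree for all product vectors, and then invoke the fact that matrix elements on product states determine an operator between these (Hilbert--Schmidt) spaces.

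First I would rewrite the left-hand side. Applying \eqref{eq:rep.cj.nat} to $\Psi$ gives
\[ \bra{\psi_B}\bra{\varphi_{\bar{B}}} K_\Psi \ket{\psi_A}\ket{\varphi_{\bar{A}}} = R_\Psi(\psi_B \tensor \bar{\psi}_A ; \bar{\varphi}_B \tensor \varphi_A)\,, \]
and then substituting the hypothesis of the lemma (the loop formula relating $R_\Psi$ to $R_\Phi$) turns this into the double sum
\[ \sum_{k,\ell} R_\Phi(\psi_B \tensor k_C \tensor \bar{\psi}_A \tensor \bar{k}_C ; \bar{\varphi}_B \tensor \ell_C \tensor \varphi_A \tensor \bar{\ell}_C)\,. \]

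Next I would expand the right-hand side. Writing the partial trace as $\trace[C\bar{C}]{K_\Phi} = \sum_{m,n} \left( \bra{m_C} \tensor \bra{n_{\bar{C}}} \right) K_\Phi \left( \ket{m_C} \tensor \ket{n_{\bar{C}}} \right)$ over orthonormal bases of $\hilbert_C$ and $\hilbert_{\bar{C}}$ --- which is meaningful because the $C$ and $\bar{C}$ factors occur on both the domain and the codomain of $K_\Phi$, as $\Phi$ maps $\hilbert_{AC}$ to $\hilbert_{BC}$ --- and applying \eqref{eq:rep.cj.nat} to $\Phi$ with the combined output label $BC$ and input label $AC$, each summand becomes a value of $R_\Phi$. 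Choosing the bases $\{m_C\}$ and $\{n_{\bar{C}}\}$ to be the fixed basis with respect to which the conjugation in \eqref{eq:rep.cj.nat} and \eqref{eq:loop} is defined (so the bar acts trivially on these basis kets, $\bar{k}_C = k_C$ and $\bar{\ell}_C = \ell_C$), the inserted output- and input-$C$ vectors get the same label in each term, and the resulting double sum is exactly the one computed above for the left-hand side, after renaming $m \mapsto k$ and $n \mapsto \ell$. This yields the equality of matrix elements and hence the claim.

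The main obstacle I anticipate is the careful bookkeeping of the complex conjugations and barred spaces rather than any conceptual difficulty: the dictionary \eqref{eq:rep.cj.nat} inserts a conjugation on the input slots of $R$, and the loop formula \eqref{eq:loop} likewise pairs the summed output vector $k_C$ with the barred input vector $\bar{k}_C$. The decisive consistency check is that the orthonormal basis summed over in the partial trace is chosen to coincide with the fixed conjugation basis appearing in both \eqref{eq:rep.cj.nat} and \eqref{eq:loop}, so that the loop's $\bar{k}_C$, $\bar{\ell}_C$ align with the $\ket{m_C}$, $\ket{n_{\bar{C}}}$ inserted by the trace; since the partial trace is basis independent (\remref{rem:nat.rep}), this choice is harmless. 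Beyond this I would only note that in the infinite-dimensional setting the interchange of the partial-trace sum with the identification of terms is justified by the absolute convergence of the loop sum remarked after \defref{def:loop}.
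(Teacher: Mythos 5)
Your proposal is correct and takes essentially the same approach as the paper: both compare matrix elements through the dictionary \eqref{eq:rep.cj.nat} and recognize the double sum from the loop formula as the partial trace over $C\bar{C}$. One small bookkeeping remark: rather than forcing the conjugation basis in the trace expansion (which literally matches the hypothesis only if the given bases $\{k_C\},\{\ell_C\}$ coincide with that basis), the paper expands $\trace[C\bar{C}]{K_\Phi}$ directly in the orthonormal bases $\{\ket{k_C}\}$ and $\{\ket{\bar{\ell}_{\bar{C}}}\}$, so the bars align and the terms match the hypothesis sum exactly without any relabeling.
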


\begin{proof}
From \eqnref{eq:rep.cj.nat} we have
\[ R_\Psi(\psi_B \tensor \psi_A ; \varphi_B \tensor \varphi_A ) =
\bra{\psi_B}\bra{\bar{\varphi}_{\bar{B}}} K_\Psi
\ket{\bar{\psi}_A}\ket{\varphi_{\bar{A}}}\,. \]
And 
\begin{align*}
  & \sum_{k,\ell} R_{\Phi}(\psi_B \tensor k_C \tensor
    \psi_A  \tensor \bar{k}_C ; \varphi_B \tensor \ell_C \tensor
    \varphi_A \tensor \bar{\ell}_C  ) \\
  & \qquad \qquad \qquad \qquad \qquad = \sum_{k,\ell} \bra{\psi_B}\bra{\bar{\varphi}_{\bar{B}}} \bra{k_C}
    \bra{\bar{\ell}_{\bar{C}}} K_\Phi \ket{\bar{\psi}_A}\ket{\varphi_{\bar{A}}} \ket{k_C}
    \ket{\bar{\ell}_{\bar{C}}} \\
  & \qquad \qquad \qquad \qquad \qquad = \bra{\psi_B}\bra{\bar{\varphi}_{\bar{B}}}
    \trace[C\bar{C}]{K_\Phi}
    \ket{\bar{\psi}_A}\ket{\varphi_{\bar{A}}}\,. \qedhere
\end{align*}
\end{proof}

\section{Finite causal boxes}
\label{app:finite}

In this work causal boxes are defined by a set of maps. As explained
in \secref{sec:boxes.set}, this allows systems to be included which
produce an unbounded number of messages and are thus not
well\-/defined as a single map on the entire set $\T$, but only on all
subsets $\T^{\leq t}$ for any $t \in \T$. In this section we define
the subset of causal boxes that are defined on the entire set $\T$ and
which are closed under composition. We call these \emph{finite} causal
boxes.

It is not sufficient to define finite causal boxes as those systems
that are captured by a map
$\Phi : \tcop{\cF^\T_X} \to \tcop{\cF^\T_Y}$ since this set is not
closed under composition. For example, consider a system with
$\T = \N_0$, one input wire and two output wires, which, for every
qubit received in position $t \in \N_0$, outputs a qubit on each wire
in position $t+1$. Furthermore, this system outputs a qubit on each
wire at position $t=0$. The system can be described by a map
$\Phi : \tcop{\cF^\T_X} \to \tcop{\cF^\T_Y}$ which outputs two
sequences of $n+1$ qubits for every sequence of $n$ qubits it receives
at positions shifted by $1$. If we now put a loop from one of the
output wires to the input wire, we get a system with one output wire
that produces a qubit at every position $t \in \N_0$. This new system
outputs an infinite sequence of qubits, which is well\-/defined for
every $t \in \T$, but not on the entire set $\T$.

The causality function (\defref{def:causality.function}) associated
with each causal box guarantees that every point $t' \in \T$ may be
reached from any other point $t \leq t'$ in a finite number of causal
steps. To obtain closure under composition for finite causal boxes, it
is necessary to limit causal boxes to a finite behavior, i.e., that
all of $\T$ may be reached in a finite number of causal steps from any
$t \in \T$. The set of points $\T \setminus \chi(\T)$ are not needed
to generate outputs, so one can think of causal boxes as terminating
\--- at least, not reading further inputs \--- when a point in
$\T \setminus \chi(\T)$ has been reached.

\begin{deff}[Finite causality function]
\label{def:finite.causality.function}
A causality function $\chi : \cut \to \cut$ is a \emph{finite
  causality function} if for every $t \in \T$ there exists an $n \in \N$ such that
\begin{equation} \label{eq:finite.causality.finite} t \notin
  \chi^n\left(\T\right) \,.\end{equation}
\end{deff}

We can now define a finite causal box.

\begin{deff}[Finite causal box]
  \label{def:finitequantumbox} 
  A \emph{$(d_X,d_Y)$\-/finite causal box} $\Phi$ is a system with input
  wire $X$ and output wire $Y$ of dimension $d_X$ and $d_Y$, defined
  by a completely positive,
  trace\-/preserving (CPTP) map
\[\Phi : \tcop{\cF^\T_X} \to \tcop{\cF^\T_Y}\] satisfying the following causality constraint: there must
exist a function $\chi : \cut \to \cut$ satisfying
\defref{def:finite.causality.function} such that for all
$\cC \in \cut$,
  \begin{equation}
    \label{eq:finitecausality} \Phi^\cC = \Phi^\cC \circ \tr_{\T \setminus \chi(\cC)}\,,
  \end{equation}
  where $\Phi^\cC \coloneqq {\tr_{\T \setminus \cC}} \circ \Phi$.
\end{deff}

\eqnref{eq:finitecausality} may be rewritten as 
\begin{equation}
  \label{eq:finitecombined} {\tr_{T \setminus \cC}} \circ \Phi = \Phi^\cC \circ \tr_{\T \setminus
    \chi(\cC)}\,,\end{equation} which is an exact replica of \eqnref{eq:combined} with
$\cD = \T$ and $\cC \in \cut$ instead of $\cC \in \bcut$.
The Stinespring and \cjreps developed in \secref{sec:representation}
for \eqnref{eq:combined} also apply to
\eqnref{eq:finitecombined}. In particular, the map $\Phi$ of a finite
causal box has a sequence representation
(\defref{def:sequencerepresentation}). It then follows that the proofs of
closure for causal boxes from \secref{sec:network} are also proofs of closure for
finite causal boxes.

The proof of composition order indepence from \secref{sec:network} and
the pseudo\-/metric definition from \secref{sec:metric} are valid for
all causal boxes and thus in particular for finite causal boxes. Note
that the causal boxes used to construct the distinguishers from
\defref{def:distinguisher} are actually finite causal boxes.

\section{Causality for total orders}
\label{app:total}

In \defref{def:causality.function} a causality function is defined to
guarantee that every point $t' \in \T$ can be reached from any point
$t \leq t'$ in a finite number of causal steps. We show in this
section that in the special case where $\T$ is totally ordered, this
condition can be reduced to requiring that the system is not
arbitrarily fast around any point.

More precisely, for a system $\Phi$, let
$\chi : \T \to \T \cup \{\bot\}$ be a monotone function such that the
output up to position $t$ can be computed from the input up to
position $\chi(t) < t$, where $\chi(t) = \bot$ means that the output
up to position $t$ can be computed without any inputs.\footnote{By
  defining $\hat{\chi} : \cut \to \cut$ as
  $\hat{\chi}(\cC) \coloneqq \bigcup_{t \in \cC} \T^{\leq \chi(t)}$
  one gets a function that trivially satisfies all requirements of
  \defref{def:causality.function} except
  \eqnref{eq:causality.finite}.} As illustrated in
\secref{sec:boxes.causality}, the condition $\chi(t) < t$ is not
sufficient to guarantee that systems are closed under composition: an
example was given in which an infinite number of causal steps were
performed before reaching the point $t = 1$. We exclude such systems
by additionally requiring that $\chi(t) < t$ must also hold in the
limit as $t \to t_0$, i.e.,
$\inf_{t > t_{0}} \chi(t) < \inf_{t > t_{0}} t$ and
$\sup_{t < t_{0}} \chi(t) < \sup_{t < t_{0}} t $.\footnote{In general
  there might not be any distance measure on $\T$, so these criteria
  are defined using infimum and supremum instead of a limit.}

In the example discussed in \secref{sec:boxes.causality}, the sequence
of output positions $\{0,1/2,3/4,7/8,15/16,\dotsc\}$ converges to $1$,
which is a point in $\T = \Q^+$. One could however give an example in
which the sequence converges to a point which is not in $\T$, e.g.,
$\sqrt{2}$. Hence we require that the systems are not arbitrarily fast
around all points $t_0$ in the completion of $\T$ (see
\defref{def:total.causality.function} here below and \cite{Sch03} for
the definition of a completion of an ordered set). The condition with
the supremum is actually redundant, so we omit it from the following
definition.


\begin{deff}[Causality function for totally ordered sets\footnote{In
    the case where $\T \subseteq \R$,
    \defref{def:total.causality.function} is equivalent to requiring
    that for every $u \in \T$ there exists a $\delta_u > 0$ such that
    for all $t \leq u$, $t-\chi(t) > \delta_u$.}]
\label{def:total.causality.function}
Let $\T$ be a totally ordered countable set. Let
$\fZ(\T) \supseteq \T$ be its smallest completion such that for every
subset $\cP \subseteq \T$, $\inf \cP \in \fZ(\T)$ and
$\sup \cP \in \fZ(\T)$.\footnote{This is called the Dedekind-MacNeille
  completion~\cite{Sch03}. For example,
  $\fZ(\N) = \N \cup \{+\infty\}$ and
  $\fZ(\Q) = \R \cup \{-\infty,+\infty\}$.} If $\T$ does not have a
minimum (maximum), we remove the infimum (supremum) from $\fZ(\T)$,
i.e., for $p \in \{\inf \T, \sup \T\}$ we define
\[\completion \coloneqq \begin{cases} \fZ(\T) \setminus
  \left\{p\right\} & \text{if $p \notin \T$}, \\ \fZ(\T) & \text{if
    $p \in \T$}.\end{cases} \]
Finally, let $\bot$ be a point defined such that $\bot < t_0$ for all
$t_0 \in \completion$. A montone function
$\chi : \T \to \T \cup \{\bot\}$ is a \emph{causality function for
  totally ordered sets} if
\begin{align*} 
  \forall t \in \T, & \quad \chi(t) < t\,, \\
  \forall t_0 \in \completion, & \quad \inf_{t > t_{0}} \chi(t) < \inf_{t > t_{0}} t\,.
\end{align*} 
\end{deff}

With the causality function defined as
$\chi : \T \to \T \cup \{\bot\}$ instead of $\chi : \cut \to \cut$,
\eqnref{eq:causality.finite} from \defref{def:causality.function} can
be rewritten as \begin{equation} \label{eq:total.causality.finite}
  \forall t,t' \in \T, \exists n \in \N, \quad \chi^n(t') <
  t\,.\end{equation}
We prove in \lemref{lem:decomposing} that
\defref{def:total.causality.function} implies that
\eqnref{eq:total.causality.finite} is satisfied.

\begin{lem}
\label{lem:decomposing}
Let $\chi : \T \to \T \cup \{\bot\}$ be a function satisfying
\defref{def:total.causality.function}. Then for all $t,t' \in \T$ there exists
an $n \in \N$ such that $\chi^n(t') \leq t$. \end{lem}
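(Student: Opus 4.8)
The plan is to argue by contradiction on the orbit of $t'$ under $\chi$, pinning its infimum to a point of the completion at which Definition~\ref{def:total.causality.function} can be invoked. Concretely, I would fix $t,t'\in\T$ and suppose that $\chi^n(t')>t$ for every $n$. Since $\bot<t$, no iterate can equal $\bot$, so all iterates lie in $\T$; and because $\chi(s)<s$ for every $s\in\T$, the sequence $(\chi^n(t'))_{n}$ is strictly decreasing in $\T$ and bounded below by $t$. I then set $t_0:=\inf_n\chi^n(t')$. The defining property of $\fZ(\T)$ guarantees $t_0\in\fZ(\T)$, and since $t\le t_0\le t'$ with $t,t'\in\T$, the point $t_0$ is neither the deleted $\inf\T$ nor the deleted $\sup\T$, so in fact $t_0\in\completion$. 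As a strictly decreasing sequence never attains its infimum, we moreover have $\chi^n(t')>t_0$ for all $n$.

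The heart of the argument is then to show that the hypothesis at $t_0$ is violated. First I would observe that $\inf_{s>t_0}s=t_0$: the terms $\chi^n(t')$ all satisfy $s>t_0$ and have infimum $t_0$, giving $\inf_{s>t_0}s\le t_0$, while $t_0$ is trivially a lower bound of $\{s:s>t_0\}$. The causality condition $\inf_{s>t_0}\chi(s)<\inf_{s>t_0}s$ then forces $\inf_{s>t_0}\chi(s)<t_0$, so $t_0$ is not a lower bound of $\{\chi(s):s>t_0\}$; this produces a witness $s_*\in\T$ with $s_*>t_0$ and $\chi(s_*)<t_0$. Because $s_*>t_0=\inf_n\chi^n(t')$, the element $s_*$ is not a lower bound of the orbit, so some iterate satisfies $\chi^N(t')<s_*$. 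Monotonicity of $\chi$ now yields $\chi^{N+1}(t')=\chi(\chi^N(t'))\le\chi(s_*)<t_0$, contradicting $\chi^{N+1}(t')>t_0$. Hence the supposition fails, and some $n$ satisfies $\chi^n(t')\le t$.

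The main obstacle I anticipate is bookkeeping in this order-theoretic, metric-free setting rather than any deep idea: one must convert each strict inequality between infima into the existence of a concrete witness ($s_*$, and then $N$) using only that an infimum is a greatest lower bound, and one must verify carefully that $t_0$ genuinely lies in $\completion$ so that the hypothesis applies there. A minor point to record is the degenerate case $t\ge t'$, where $\chi(t')<t'\le t$ already settles the claim; the contradiction argument above then handles the remaining case $t<t'$ uniformly.
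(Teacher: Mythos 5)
Your proof is correct and follows essentially the same route as the paper's: assume $\chi^n(t') > t$ for all $n$, take $t_0 = \inf_n \chi^n(t')$ in the completion, use the condition $\inf_{s>t_0}\chi(s) < \inf_{s>t_0}s = t_0$ to extract a witness $s_* > t_0$ with $\chi(s_*) < t_0$, and derive a contradiction via monotonicity applied to an orbit element below $s_*$. Your write-up is in fact somewhat more careful than the paper's (which compresses the monotonicity step into the claim that $\chi(p)>t_0$ for all $p>t_0$): you explicitly verify that no iterate hits $\bot$, that the strictly decreasing orbit never attains its infimum, and that $t_0$ avoids the deleted endpoints so that $t_0 \in \completion$ \--- all points the paper leaves implicit.
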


\begin{proof}
  By contradiction, assume that this is not the case for some
  $t,t' \in \T$ and consider the set $\cP = \{\chi^n(t')\}_{n \in \N}$.
  By the construction of $\completion$, the completion of $\T$,
  $ \inf \cP \in \completion$. Let $t_0 \coloneqq \inf \cP$. Because
  $\chi$ is monotone and $\chi(p) < p$, we must have that for all
  $p > t_0$, $\chi(p) > t_0$. On the other hand, the condition
  $\inf_{p > t_0} \chi(p) < \inf_{p > t_0} t$ implies that there
  exists $p > t_0$ such that $\chi(p) \leq t_0$.
\end{proof}

We now strengthen the causality definition from
\defref{def:total.causality.function} so that it captures sufficient
conditions to define finite causal boxes on totally ordered sets. As
in \appendixref{app:finite} we have to exclude unbounded behavior as
$t$ gets larger. We achieve this by requiring that the systems may not
be arbitrarily fast around the supremum of $\T$.

\begin{deff}[Finite causality function for totally ordered sets]
\label{def:finite.total.causality.function}
Let $\chi : \T \to \T \cup \{\bot\}$ be a function satisfying
\defref{def:total.causality.function}. We say that it is a
\emph{finite causality function for totally ordered sets} if
additionally,
\begin{equation*}
\sup_{t < t_{0}} \chi(t) < \sup_{t < t_{0}} t
\end{equation*} 
for $t_0 \coloneqq \sup \T$.
\end{deff}

With causality functions defined as $\chi : \T \to \T \cup \{\bot\}$,
the condition for finite causal boxes from
\eqnref{eq:finite.causality.finite} in
\defref{def:finite.causality.function} becomes
\begin{equation} \label{eq:total.finite.causality.finite} \forall t
  \in \T, \exists n
  \in \N, \forall t' \in \T, \quad \chi^n(t') < t\,.
\end{equation}
We now prove that this is satisfied by
\defref{def:finite.total.causality.function}.

\begin{lem}
\label{lem:finitedecomposing}
Let $\chi : \T \to \T \cup \{\bot\}$ be a function satisfying
\defref{def:finite.total.causality.function}. Then for every $t \in \T$
there exists an $n \in \N$ such that for all $t' \in \T$,
$\chi^n(t') < t$. \end{lem}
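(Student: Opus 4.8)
The plan is to reduce \eqnref{eq:total.finite.causality.finite} to the non-uniform statement already established in \lemref{lem:decomposing}, using the extra hypothesis of \defref{def:finite.total.causality.function} to gain uniformity in $t'$. The only new content compared with \lemref{lem:decomposing} is that the number of iterations $n$ must be chosen \emph{independently} of $t'$; I would obtain this from a single upper bound on the whole range of $\chi$, together with the monotonicity of $\chi$. Throughout I would extend $\chi$ to $\T \cup \{\bot\}$ by setting $\chi(\bot) \coloneqq \bot$, so that $\chi^{n}$ is well defined and monotone on $\T \cup \{\bot\}$, with $\bot < t$ for every $t \in \T$.

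First I would establish a uniform upper bound on the image of $\chi$. Set $s^{\ast} \coloneqq \sup_{t \in \T}\chi(t) \in \completion$, and claim $s^{\ast} < \sup\T$. Writing $t_0 \coloneqq \sup\T$, the supremum defining $s^{\ast}$ equals $\max\{\sup_{t < t_0}\chi(t),\, \chi(t_0)\}$ when $t_0 \in \T$, and equals $\sup_{t < t_0}\chi(t)$ otherwise. In the first case the inequality $\sup_{t<t_0}\chi(t) < \sup_{t<t_0} t \le t_0$ from \defref{def:finite.total.causality.function} together with $\chi(t_0) < t_0$ (the first condition of \defref{def:total.causality.function}) gives $s^{\ast} < t_0$; in the second case $\sup_{t<t_0}\chi(t) < \sup_{t<t_0} t = t_0$ gives the same. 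Since $s^{\ast} < \sup\T$, the point $s^{\ast}$ is not an upper bound of $\T$, so there exists $u \in \T$ with $u > s^{\ast}$, whence $\chi(t') \le s^{\ast} < u$ for every $t' \in \T$.

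Next I would collapse the uniform iteration to a single orbit. By \lemref{lem:decomposing} (applicable since $\chi$ satisfies \defref{def:total.causality.function}) there is an $m \in \N$ with $\chi^{m}(u) \le t$; applying $\chi$ once more gives $\chi^{m+1}(u) < t$, because $\chi(x) < x$ on $\T$ while $\chi^{m}(u) \le t$, and if $\chi^{m}(u) = \bot$ then already $\chi^{m+1}(u) = \bot < t$. Setting $n \coloneqq m + 2$, for an arbitrary $t' \in \T$ the bound $\chi(t') \le u$ together with the monotonicity of $\chi^{\,n-1}$ yields
\[ \chi^{n}(t') = \chi^{\,n-1}\bigl(\chi(t')\bigr) \le \chi^{\,n-1}(u) = \chi^{m+1}(u) < t, \]
which is exactly \eqnref{eq:total.finite.causality.finite} for this $t$, with $n$ independent of $t'$.

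I expect the only genuine obstacle to be the first step: verifying $s^{\ast} < \sup\T$ requires some care with the completion $\completion$ and a small case distinction according to whether $\T$ attains its maximum, since the hypothesis of \defref{def:finite.total.causality.function} only controls $\sup_{t<t_0}\chi(t)$, and one must separately invoke $\chi(t_0) < t_0$ to cover the maximal point. Once the single bound $u$ is in hand, the remaining steps — the monotonicity reduction and the appeal to \lemref{lem:decomposing} — are routine, and the bookkeeping around $\bot$ is handled entirely by the convention $\chi(\bot) = \bot$.
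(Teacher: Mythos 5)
Your proof is correct and takes essentially the same route as the paper's: both extract from \defref{def:finite.total.causality.function} a single point $u \in \T$ (the paper's $t_0$) that dominates the entire image of $\chi$, apply \lemref{lem:decomposing} to that one point, and absorb the uniformity over $t'$ by one extra monotone application of $\chi$. The only differences are cosmetic: the paper dispatches the case where $\T$ has a maximum directly via \lemref{lem:decomposing} at $t_{\max}$ without invoking the supremum hypothesis, whereas you fold that case into the bound $s^{\ast} = \max\{\sup_{t<t_0}\chi(t),\, \chi(t_0)\}$, and you are more explicit than the paper about the $\leq$-versus-$<$ bookkeeping and the convention $\chi(\bot) = \bot$.
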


\begin{proof}
  If $\T$ has a maximum $t_{\max}$, the lemma follows trivially,
  because by \lemref{lem:decomposing} for all $t$ there exists an $n$
  such that $\chi^n(t_{\max}) < t$, and for all $t'$,
  $\chi^n(t') \leq \chi^n(t_{\max})$. In the case where $\T$ does not
  have a maximum, let $t_{\sup} = \sup \T$ and let
  $p = \sup_{t < t_{\sup}} \chi(t)$. From
  \defref{def:finite.total.causality.function} we have $p < t_{\sup}$,
  so there must exist $t_0 \in \T$ such that $t_0 \geq p$.
  Furthermore, for all $t \geq t_0$, $\chi(t) \leq p \leq t_0$.
  Finally, by \lemref{lem:decomposing} for all $t$ there exists an $n$
  such that $\chi^n(t_0) < t$, hence for all $t' \in \T$,
  $\chi^{n+1}(t') < t$.
\end{proof}

\section{Loops}
\label{app:loops}

In \secref{sec:network.loops} we proposed an alternative, constructive
defintion for a loop (\eqnref{eq:loop.limit}) as the limit of a
sequence of maps. Each element in this sequence loops a bit more of
the output back to the input, and the limit defines the complete
loop. For a sequence of operators, the limit is defined as follows.

\begin{deff}[Operator convergence]
  Let $\{U_i : \hilbert_X \to \hilbert_Y\}_{i=1}^\infty$ be a sequence
  of linear operators. We say that $U_i$ converges to $U$ (in the
  strong operator topology) 
  if and only if
  \[ \forall \psi \in \hilbert_X, \forall \eps > 0, \exists i_0,
  \forall i \geq i_0, \quad \norm{U_i \psi - U \psi} < \eps\,,\]
  where $\norm{\cdot}$ is the $2$\=/norm.
\end{deff}

Let $\Phi^{\cC_1} :
\tcop{\cF^{\cC_2}_{AB}} \to \tcop{\cF^{\cC_1}_{CD}}$ be a map with sequence
representation
\begin{equation} \label{eq:lemmas.loops.before}
 U^{\cC_1}_\Phi = \left( \prod_{i=1}^{n-1}I^{\cC_{i+1}}_{CD} \tensor V_i
  \tensor I^{\cC_2 \setminus \cC_{i+1}}_{AB}\right) \left(
  U^{\cC_{n}}_{\Phi} \tensor I^{\cC_2 \setminus \cC_{n+1}}_{AB} \right)
\,, \end{equation} which is illustrated in \figref{fig:loop1}.
We define the linear operator
\begin{multline} \label{eq:lemmas.loops.after} U^{\cC_1}_{\Psi_n}
  \coloneqq \left( \bra{\Omega}^{\cC_{n+1}}_C \tensor I^{\cC_1}_D
    \tensor I_{Q_1} \tensor I^{\T_1}_C \right) \left(
    \prod_{i=1}^{n-1}I^{\cC_{i+1}}_D \tensor V_i \tensor I^{\cC_2
      \setminus \cC_{i+1}}_A\right) \\ \left( U^{\cC_{n}}_{\Phi}
    \tensor I^{\cC_2 \setminus \cC_{n+1}}_A \right) \left(
    \vacuum^{\cC_{n+1}}_B \tensor I^{\cC_2}_A
  \right)\,, \end{multline} which is obtained from
\eqnref{eq:lemmas.loops.before} by looping the output on
$\cF^{\T_i}_C$ to the input on $\cF^{\T_i}_B$ for $i \leq n$,
inputting the vacuum state $\vacuum$ on the sub-wire
$\cF^{\cC_{n+1}}_B$ and projecting the sub-wire $\cF^{\cC_{n+1}}_C$ on
the vacuum state \--- the input $\vacuum^{\cC_{n+1}}_B$ and projection
on $\vacuum^{\cC_{n+1}}_C$ ensure that all operators
$\{U^{\cC_1}_{\Psi_n}\}_{n=1}^\infty$ have the same input and output
Hilbert spaces. This is depicted in \figref{fig:loop.app}, which is a
reproduction of \figref{fig:loop2} along with the extra projections on
$\vacuum^{\cC_{n+1}}_C$ and vacuum inputs $\vacuum^{\cC_{n+1}}_B$,
that had been omitted for simplicity. Let $U^{\cC_1}_{\Psi}$ be the
limit operator as more of the wire $C$ is looped back to the input
$B$, i.e.,
\[U^{\cC_1}_{\Psi} \coloneqq \lim_{n \to \infty} U^{\cC_1}_{\Psi_n} \,.\]
And define the map $\Psi^{\cC_1} : \tcop{\cF^{\cC_2}_{A}} \to \tcop{\cF^{\cC_1}_{D}}$ by
\begin{equation}\label{eq:loop.limit.app}
\Psi^{\cC_1}(\rho) \coloneqq \trace[Q_1C_1]{U^{\cC_1}_{\Psi} \rho
\hconj{\left(U^{\cC_1}_{\Psi}\right)}}\,,
\end{equation} where $\tr_{Q_1C_1}$ traces out the ancilla register
$Q_1$ as well as the sub-wire with Hilbert space $\cF^{\T_1}_C$. The main proposition that we prove in this section is that the map
$\Psi^{\cC_1}$ defined above is the map one obtains by applying the
definition of a loop (\defref{def:loop}) to $\Phi^{\cC_1}$.

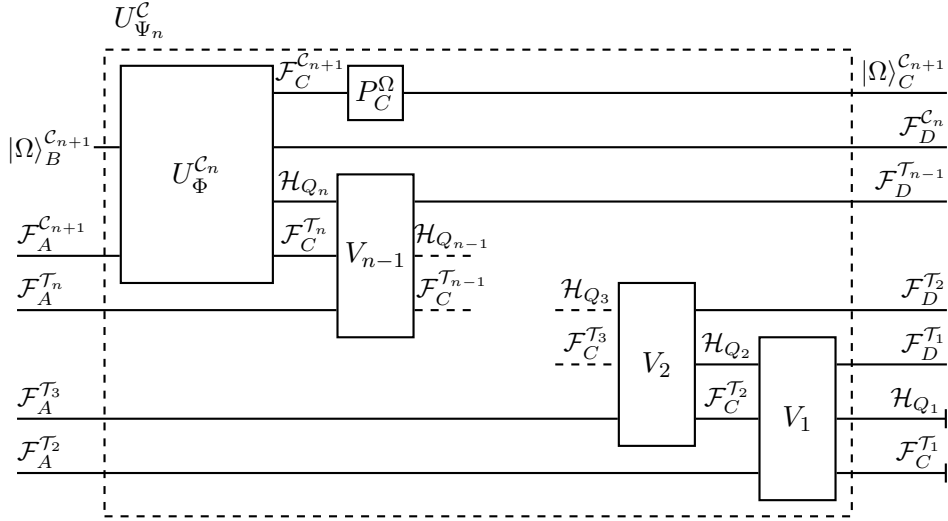
\begin{figure}[tb]
\begin{centering}
\begin{tikzpicture}[
wire/.style={-,thick},
unitary2/.style={draw,thick,inner sep=2pt,minimum width=1cm,minimum height=2.16cm},
invisible2/.style={thick,minimum width=1cm,minimum height=2.16cm},
unitary2a/.style={draw,thick,minimum width=2cm,minimum height=2.88cm},
unitary1/.style={draw,thick,inner sep=1pt,minimum width=.72cm,minimum height=.72cm}]

\def\t{.72}

\node (l1) at (0,0) {};
\node (l2) at (0,-\t) {};
\node[right,inner sep=0] (l2label) at (l2) {\small$\vacuum^{\cC_{n+1}}_B$};
\node (l3) at (0,-2*\t) {};
\node[above right,yshift=-2] at (l3) {};
\node (l4) at (0,-3*\t) {};
\node[above right,yshift=-2] at (l4) {\small$\cF^{\cC_{n+1}}_A$};
\node (l5) at (0,-4*\t) {};
\node[above right,yshift=-2] at (l5) {\small$\cF^{\T_n}_A$};
\node (l6) at (0,-5*\t) {};
\node (l7) at (0,-6*\t) {};
\node[above right,yshift=-2] at (l7) {\small$\cF^{\T_{3}}_A$};
\node (l8) at (0,-7*\t) {};
\node[above right,yshift=-2] at (l8) {\small$\cF^{\T_{2}}_A$};

\node (r1) at (12.5,0) {};
\node[above left,yshift=-2] at (r1) {\small$\vacuum^{\cC_{n+1}}_C$};
\node (r2) at (12.5,-\t) {};
\node[above left,yshift=-2] at (r2) {\small$\cF^{\cC_n}_D$};
\node (r3) at (12.5,-2*\t) {};
\node[above left,yshift=-2] at (r3) {\small$\cF^{\T_{n-1}}_D$};
\node (r4) at (12.5,-3*\t) {};
\node (r5) at (12.5,-4*\t) {};
\node[yshift=-2,above left] at (r5) {\small$\cF^{\T_{2}}_D$};
\node (r6) at (12.5,-5*\t) {};
\node[above left,yshift=-2] at (r6) {\small$\cF^{\T_{1}}_D$};
\node (r7) at (12.5,-6*\t) {};
\node[yshift=-2,above left,xshift=-2] at (r7) {\small$\hilbert_{Q_1}$};
\node (r8) at (12.5,-7*\t) {};
\node[yshift=-2,above left,xshift=-2] at (r8) {\small$\cF^{\T_1}_C$};

\node[unitary2a] (u1) at (2.5,-1.5*\t) {$U^{\cC_n}_\Phi$};
\node[unitary1] (p1) at (4.85,0) {$P^{\Omega}_C$};
\node[unitary2] (u2) at (4.85,-3*\t) {$V_{n-1}$};
\node[invisible2] (u3) at (6.7,-4*\t) {};
\node[unitary2] (u4) at (8.55,-5*\t) {$V_2$};
\node[unitary2] (u5) at (10.4,-6*\t) {$V_1$};
\node[fit=(u1)(u5),thick,dashed,draw,inner sep=.2cm] (u) {};
\node[above right] at (u.north west) {$U_{\Psi_n}^{\cC}$};

\draw[wire] (l2label) to (u1.west |- l2);
\draw[wire] (l4) to (u1.west |- l4);
\draw[wire] (l5) to (u2.west |- l5);
\draw[wire] (l7) to (u4.west |- l7);
\draw[wire] (l8) to (u5.west |- l8);

\draw[wire] (u1.east |- r3) to node[auto,yshift=-2] {\small$\hilbert_{Q_n}$} (u2.west |- r3);
\draw[wire] (u1.east |- r4) to node[auto,yshift=-2] {\small$\cF^{\T_n}_C$} (u2.west |- r4);
\draw[wire,dashed] (u2.east |- r4) to node[auto,yshift=-2,pos=.6] {\small$\hilbert_{Q_{n-1}}$} (u3.west |- r4);
\draw[wire,dashed] (u2.east |- r5) to node[auto,yshift=-2,pos=.6] {\small$\cF^{\T_{n-1}}_C$} (u3.west |- r5);
\draw[wire,dashed] (u3.east |- r5) to node[auto,yshift=-2] {\small$\hilbert_{Q_{3}}$} (u4.west |- r5);
\draw[wire,dashed] (u3.east |- r6) to node[auto,yshift=-2] {\small$\cF^{\T_{3}}_C$} (u4.west |- r6);
\draw[wire] (u4.east |- r6) to node[auto,yshift=-2] {\small$\hilbert_{Q_{2}}$} (u5.west |- r6);
\draw[wire] (u4.east |- r7) to node[auto,yshift=-2] {\small$\cF^{\T_{2}}_C$} (u5.west |- r7);

\draw[wire] (u1.east |- r1) to node[auto,yshift=-2] {\small$\cF^{\cC_{n+1}}_C$} (p1);
\draw[wire] (p1) to (r1);
\draw[wire] (u1.east |- r2) to (r2);
\draw[wire] (u2.east |- r3) to (r3);
\draw[wire] (u4.east |- r5) to (r5);
\draw[wire] (u5.east |- r6) to (r6);
\draw[wire,-|] (u5.east |- r7) to (r7);
\draw[wire,-|] (u5.east |- r8) to (r8);
\end{tikzpicture}

\end{centering}
\caption[A partial loop]{\label{fig:loop.app}The outputs on $C$ in
  positions $\T_i$ for $i \leq n$ are looped back to the inputs on
  $B$. The outputs on $C$ for $i > n$ are projected on the vacuum
  state \--- $P^\Omega_C$ denotes this projector \--- and a vacuum
  state is input on $B$ in the corresponding positions.}
\end{figure}

\begin{prop}
\label{prop:loopfromsequence}
The \cjrep of the map $\Psi^{\cC_1}$ from \eqnref{eq:loop.limit.app}
is given by \eqnref{eq:loop}. Furthermore, the terms in
$\eqnref{eq:loop}$ are absolutely convergent, and if $\Phi^{\cC_1}$ is
CPTP, then so is $\Psi^{\cC_1}$.
\end{prop}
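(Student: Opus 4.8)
The plan is to compute the \cjrep of the finite partial\-/loop map $\Psi^{\cC_1}_n(\rho)=\trace[Q_1C_1]{U^{\cC_1}_{\Psi_n}\rho\hconj{(U^{\cC_1}_{\Psi_n})}}$ attached to the operator $U^{\cC_1}_{\Psi_n}$ of \eqnref{eq:lemmas.loops.after}, to recognize it as a truncation of the right\-/hand side of \eqnref{eq:loop}, and then to pass to the limit $n\to\infty$. Throughout I would work with $\cC=\cC_1$ and the cuts $\cC_i=\chi^{i-1}(\cC)$ from \propref{prop:sequencerepresentation}, so that $\cF^{\cC}_C\cong\bigotimes_{i\ge 1}\cF^{\T_i}_C$ and, by \eqnref{eq:causality.finite}, $\bigcap_i\cC_i=\emptyset$.

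First I would expand $R^{\cC}_{\Psi_n}(\psi_D\tensor\psi_A;\varphi_D\tensor\varphi_A)=\bra{\psi_D}\Psi^{\cC}_n(\ketbra{\bar\psi_A}{\bar\varphi_A})\ket{\varphi_D}$ by inserting the Stinespring form \eqnref{eq:lemmas.loops.after}. The loop connecting the $C$\-/output in positions $\T_i$ (for $i\le n$) to the $B$\-/input, together with the partial trace over the looped wire, yields a contraction $\sum_k\ket{k_C}\tensor\ket{\bar k_B}$ over an orthonormal basis $\{k\}$ of $\cF^{\cC\setminus\cC_{n+1}}_C$, while the vacuum input $\vacuum^{\cC_{n+1}}_B$ and the projection $\bra{\Omega}^{\cC_{n+1}}_C$ fix the remaining factors to the vacuum. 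Rewriting the result through the \cjrep of $\Phi^{\cC}$ (exactly as in the proof of \propref{prop:loop}) gives precisely the sum of \eqnref{eq:loop} restricted to those basis vectors of $\cF^{\cC}_C$ that are vacuum on $\cF^{\cC_{n+1}}_C$. Since $\bigcap_i\cC_i=\emptyset$, every basis vector of $\cF^{\cC}_C$ is vacuum on $\cF^{\cC_{n+1}}_C$ for all sufficiently large $n$, so these truncated sums exhaust the terms of \eqnref{eq:loop} as $n\to\infty$.

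It then remains to interchange the summation with the limit, which is where absolute convergence enters and is the main obstacle. Writing $x_k=k_C\tensor\psi_D\tensor\psi_A\tensor\bar k_B$ and $y_\ell=\ell_C\tensor\varphi_D\tensor\varphi_A\tensor\bar\ell_B$, the bound from positive semidefiniteness of $R^{\cC}_\Phi$ and Cauchy--Schwarz, $\sum_{k,\ell}\left|R^{\cC}_\Phi(x_k;y_\ell)\right|\le\bigl(\sum_k R^{\cC}_\Phi(x_k;x_k)^{1/2}\bigr)\bigl(\sum_\ell R^{\cC}_\Phi(y_\ell;y_\ell)^{1/2}\bigr)$, is too weak on its own, because each diagonal term is only bounded by $\norm{\psi_A}^2\norm{\psi_D}^2$. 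The decisive extra ingredient is the delay structure of the sequence representation: since $V_i$ maps positions $\T_{i+1}$ to $\T_i$, the index $k=\bigotimes_i k^{(i)}$ enters the $C$\-/output of layer $i$ and the $B$\-/input of layer $i-1$, so the $k$\-/sum factorizes across layers into single isometry\-/to\-/isometry links, each norm\-/controlled, and the isometry property of $U^{\cC}_\Phi$ together with finiteness of the output on the bounded cut $\cC$ forces the diagonal terms to be summable. I would make this precise by estimating $\sum_k R^{\cC}_\Phi(x_k;x_k)^{1/2}$ layer by layer along the factorization $U^{\cC}_\Phi=(\prod_i V_i)(U^{\cC_n}_\Phi\tensor I)$; this is exactly the place where causality is indispensable. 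Granting absolute convergence (which also makes the order of summation irrelevant), dominated convergence sends the truncated sums to \eqnref{eq:loop}, and, since $U^{\cC}_{\Psi_n}\to U^{\cC}_{\Psi}$ strongly, the forms $R^{\cC}_{\Psi_n}$ converge to the \cjrep of $\Psi^{\cC}$; equating the two limits establishes that the \cjrep of $\Psi^{\cC_1}$ is \eqnref{eq:loop}.

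Finally, for the CPTP claim I would show that $U^{\cC}_{\Psi}$ is an isometry. Before the vacuum projection, $U^{\cC}_{\Psi_n}$ is a sequential composition of the isometries $V_i$ and $U^{\cC_n}_\Phi$ with a vacuum embedding on $B$, hence itself an isometry; the only defect is the projection $\bra{\Omega}^{\cC_{n+1}}_C$. For a fixed input the lost norm equals the weight of a non\-/vacuum $C$\-/output in positions $\cC_{n+1}$, which tends to $0$ as $\cC_{n+1}\downarrow\emptyset$, so $\norm{U^{\cC}_{\Psi_n}\psi}\to\norm{\psi}$; combined with strong convergence this yields $\norm{U^{\cC}_{\Psi}\psi}=\norm{\psi}$, i.e.\ $U^{\cC}_{\Psi}$ is an isometry. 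Complete positivity of $\Psi^{\cC}(\rho)=\trace[Q_1C_1]{U^{\cC}_{\Psi}\rho\hconj{(U^{\cC}_{\Psi})}}$ is then immediate from this Stinespring form, and trace preservation follows because tracing out the ancilla of an isometric dilation gives a trace\-/preserving map. Alternatively, once convergence is in hand, the identity $K_\Psi=\trace[C\bar C]{K_\Phi}$ of \lemref{lem:nat.rep} delivers the same conclusion.
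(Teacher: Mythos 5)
Your overall route is the paper's own: it too computes the \cjrep of the truncated maps $\Psi^{\cC_1}_n$ by iterating a single\-/connection lemma (\lemref{lem:connecting}), identifies the result with the sum in \eqnref{eq:loop} restricted to basis vectors of $\cF^{\cC_1}_C$ that are vacuum on $\cF^{\cC_{n+1}}_C$, passes to the limit using $\bigcap_i \cC_i = \emptyset$, and gets the CPTP claim from the strong convergence and isometry of $U^{\cC_1}_{\Psi}$ (\lemref{lem:backwardconvergence}); those parts of your proposal are sound. The genuine gap is your plan for absolute convergence: the quantity you propose to bound, $\sum_k R^{\cC}_\Phi(x_k;x_k)^{1/2}$, is infinite in general, and no layer\-/by\-/layer exploitation of the delay structure can rescue it. Concretely, let $\Phi^{\cC}(\rho) = \trace{\rho}\,\sigma_C \tensor \tau_D$ be the box that ignores all inputs and emits fixed states; this is a valid causal box satisfying \defref{def:causality.function} with $\chi(\cC') = \emptyset$, i.e., the strongest conceivable delay structure. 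For any orthonormal basis one computes
\[
R^{\cC}_\Phi(x_k ; y_\ell) = \delta_{k\ell}\, \bra{k_C}\sigma\ket{k_C}\, \braket{\psi_A}{\varphi_A}\, \bra{\psi_D}\tau\ket{\varphi_D}\,,
\]
so the double series is trivially absolutely convergent (it is diagonal, with absolute sum proportional to $\trace{\sigma}$), yet $\sum_k R^{\cC}_\Phi(x_k;x_k)^{1/2} \propto \sum_k \sqrt{\bra{k_C}\sigma\ket{k_C}}$ diverges whenever $\bra{k_C}\sigma\ket{k_C} \propto k^{-2}$. Your own diagnosis that per\-/term Cauchy--Schwarz is too weak was correct, but the fix is not a finer norm estimate of the diagonal: even $\sum_k R^{\cC}_\Phi(x_k;x_k) < \infty$ would not imply absolute convergence of the double sum without the orthogonality that the loop wiring enforces between distinct $k$, and the $\ell^1$ bound on the square\-/rooted diagonal is simply false.

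The paper closes this step structurally and essentially for free, and you should adopt that argument: the computation in \lemref{lem:connecting} shows that the series defining the looped \cjrep converges to the same value for \emph{every} choice of the orthonormal bases $\{\ket{k_C}\}_k$ and $\{\ket{\ell_C}\}_\ell$, in particular for every enumeration of a fixed basis; a scalar series that converges unconditionally converges absolutely, which is the claimed absolute convergence with no quantitative estimates at all. (The same basis independence is visible through \lemref{lem:nat.rep}, which exhibits the loop as a partial trace of the \natrep.) With that replacement, the rest of your proposal \--- exhaustion of the terms as $n \to \infty$, convergence of the forms from the strong convergence $U^{\cC_1}_{\Psi_n} \to U^{\cC_1}_\Psi$, and isometry of the limit because the weight lost to the vacuum projection on $\cC_{n+1}$ vanishes \--- goes through and coincides with the paper's proof.
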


We prove this proposition with the help of two lemmas. The first
simply states that the operators $U^{\cC_1}_{\Psi_n}$ converge.

\begin{lem}
  \label{lem:backwardconvergence}
  Let the operators $U^{\cC_1}_{\Psi_n}$ be defined as above. Then the
  limit operator
  $U^{\cC_1}_{\Psi} = \lim_{n \to \infty} U^{\cC_1}_{\Psi_n}$
  exists. Furthermore, if $U^{\cC_1}_\Phi$ is an isometry, then so is
  $U^{\cC_1}_\Psi$.
\end{lem}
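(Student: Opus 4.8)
The plan is to prove both assertions in the strong operator topology, exploiting the factorization of each approximant into a genuine isometry followed by a vacuum projection. Writing $\eqnref{eq:lemmas.loops.after}$ schematically as $U^{\cC_1}_{\Psi_n} = P_n\,W_n$, where $W_n$ collects the vacuum insertion $\vacuum^{\cC_{n+1}}_B$, the isometry $U^{\cC_n}_\Phi$ and the isometries $V_1,\dots,V_{n-1}$ wired together by the loops, and $P_n \coloneqq \bra{\Omega}^{\cC_{n+1}}_C \tensor I$ projects the not-yet-fed $C$\nobreakdash-output on the vacuum, I would first check that $W_n$ is an isometry: it is a composition of isometries in which the $C$\nobreakdash-output of each stage at positions $\T_i$ is routed to the $B$\nobreakdash-input of a \emph{later} stage, so the network is feed-forward and no genuine feedback occurs. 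Hence $\norm{U^{\cC_1}_{\Psi_n}} \le 1$, the sequence is uniformly bounded, and by the usual density argument it suffices to prove that $\{U^{\cC_1}_{\Psi_n}\psi\}_n$ is Cauchy for every $\psi$, which I would verify on the dense set of finite-particle inputs and then extend by the uniform bound.

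The engine of the convergence is the single-step recursion $U^{\cC_n}_\Phi = \left(I \tensor V_n \tensor I\right)\left(U^{\cC_{n+1}}_\Phi \tensor I\right)$ supplied by the sequence representation (\propref{prop:sequencerepresentation}, \defref{def:sequencerepresentation}), together with the finite-causality condition $\eqnref{eq:causality.finite}$, which yields $\bigcap_{i} \cC_i = \emptyset$ and hence the disjoint decomposition $\cC_1 = \bigsqcup_i \T_i$. Passing from $U^{\cC_1}_{\Psi_n}$ to $U^{\cC_1}_{\Psi_{n+1}}$ changes only the treatment of the single segment $\T_{n+1}$: in $\Psi_n$ it is capped by a vacuum input on $B$ and a vacuum projection on $C$, whereas in $\Psi_{n+1}$ it is looped. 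I would show that the increment $U^{\cC_1}_{\Psi_{n+1}}\psi - U^{\cC_1}_{\Psi_n}\psi$ is controlled by the weight living on the shrinking tail $\cC_{n+1}$, both of $\psi$ itself and of the messages the box produces there. Since $\cC_{n+1}\downarrow\emptyset$, the Fock-space fact that projecting any fixed state onto the vacuum on $\cF^{\cC_{n+1}}$ converges to that state in norm makes these increments summable, giving the Cauchy property and thus the existence of $U^{\cC_1}_\Psi = \lim_{n\to\infty} U^{\cC_1}_{\Psi_n}$.

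For the second claim the factorization makes the defect explicit: since $W_n$ is an isometry,
\[
\norm{U^{\cC_1}_{\Psi_n}\psi}^2 = \norm{P_n W_n \psi}^2 = \norm{\psi}^2 - \norm{(I - P_n)W_n\psi}^2,
\]
so the defect from being an isometry is exactly the weight of non-vacuum $C$\nobreakdash-output at positions in $\cC_{n+1}$. As these positions exhaust into the empty set, I would argue that this defect tends to $0$ for each fixed $\psi$, whence $\norm{U^{\cC_1}_\Psi \psi} = \norm{\psi}$ whenever $U^{\cC_1}_\Phi$ (and hence each $V_i$) is an isometry; that is, the limit is again an isometry.

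The main obstacle is the tail estimate underlying both parts: rigorously bounding the increments and the defect by the weight carried on the shrinking cuts $\cC_{n+1}$. The delicate point is that $\cC_1 \in \bcut$ may still contain infinitely many positions, so the limiting state can carry unboundedly many messages and the argument cannot reduce to eventual constancy, even on finite-particle inputs (a box may generate messages from the vacuum deep in the tail and loop them back). It is precisely the finite-causality condition $\eqnref{eq:causality.finite}$ — ensuring $\bigcap_i \cC_i = \emptyset$ and that each position is reached in finitely many causal steps — that forces the tail contributions to vanish and the sequence to converge.
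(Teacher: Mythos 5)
Your strategy coincides with the paper's proof in its essentials: you factor $U^{\cC_1}_{\Psi_n} = P_n W_n$ with $W_n$ the isometric feed-forward network and $P_n$ the vacuum projection on the $C$\nobreakdash-output in $\cC_{n+1}$, and you control both claims by the non-vacuum weight carried by the $C$\nobreakdash-output on the shrinking cuts, which vanishes because $\bigcap_i \cC_i = \emptyset$, i.e., by \eqnref{eq:causality.finite} \--- exactly the mechanism of the paper's proof. Your Pythagoras identity $\norm{U^{\cC_1}_{\Psi_n}\psi}^2 = \norm{\psi}^2 - \norm{(I-P_n)W_n\psi}^2$ is correct and is a sharper form of the paper's closing remark; since the defect there is the tail weight of a single fixed vector, which tends to $0$, that part of your argument is sound.

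The Cauchy step, however, fails as written. You bound the consecutive increment $\norm{U^{\cC_1}_{\Psi_{n+1}}\psi - U^{\cC_1}_{\Psi_n}\psi}$ by the tail weight at level $n$ and then assert that the increments are \emph{summable}. The vacuum-projection fact only yields that the tail weights $\eps_m \coloneqq \norm{(I - Q_m)\varphi}^2$ decrease to $0$, where $Q_m$ projects the $C$\nobreakdash-output in $\cC_{m+1}$ on the vacuum and $\varphi = U^{\cC_1}_\Phi\left(\psi \tensor \vacuum^{\cC_2}_B\right)$ is the vacuum-fed output; nothing forbids $\eps_m \sim 1/\log m$, in which case $\sum_m \sqrt{\eps_m}$ diverges and telescoping gives no bound on $\norm{U^{\cC_1}_{\Psi_n}\psi - U^{\cC_1}_{\Psi_m}\psi}$. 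What is needed \--- and what the paper proves \--- is a two-index estimate uniform in the larger index: for $n \geq m$, decompose the isometric part of the evolution into the branch on which the $C$\nobreakdash-output at every position of $\cC_{m+1}$ is vacuum, and its orthocomplement. On the vacuum branch, looping a segment and feeding/projecting vacuum act identically stage by stage, so this branch reproduces $U^{\cC_1}_{\Psi_m}\psi$ exactly and carries weight at least $1 - \eps_m$; since isometries preserve orthogonality even after the looped wires are consumed, and the final projection $P_n$ is a contraction, the remaining branches contribute a vector of norm at most $\sqrt{\eps_m}$. Hence $\norm{U^{\cC_1}_{\Psi_n}\psi - U^{\cC_1}_{\Psi_m}\psi} \leq \sqrt{\eps_m}$ for all $n \geq m$, which gives the Cauchy property directly, with no summability required. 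All the ingredients for this fix are already in your write-up (the single-step recursion, the vacuum-branch agreement, the tail fact); only the aggregation step must be replaced.
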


\begin{proof}
  We prove the case in which $U^{\cC_1}_\Phi$ is an
  isometry.  The case of a linear operator $U^{\cC_1}_\Phi$ that is not an isometry
  follows by writing it as an isometry followed by a projection.

  Let $\{b^j_i\}_i$ be a basis of $\cF^{\T_j}_C$ and let $\Omega^n$ be
  shorthand for the vacuum state $\vacuum^{\cC_n}_C$. We build a basis
  for $\cF^{\cC_1}_{C}$ consisting of states that have a vacuum as
  prefix, i.e., all states of the form
  $c_i = \Omega^{n} \tensor \bigotimes_{j = 1}^{n-1} b^j_{i_j}$, where
  $b^j_{i_j}$ is a basis state of $\cF^{\T_j}_C$ and $n \in \N$. Any
  state $\varphi \in \cF^{\cC_1}_C$ can thus be written as
  $\varphi = \sum_i \alpha_i c_i$ for some coefficients
  $\alpha_i \in \C$. For a normalized state
  $\varphi \in \cF^{\cC_1}_C$ and any $\eps > 0$ one can always find a
  finite set of basis states indexed by $i \in \cS$ such that
  $\sum_{i \in \cS} |\alpha_i|^2 \geq 1 - \eps$. Let $n_0$ be such
  that for all $i \in \cS$, $c_i = \Omega^{n_i} \tensor c'_i$ with
  $n_i \leq n_0$.  This means that
  $\varphi = \Omega^{n_0} \tensor \varphi_{n_0-1} + \varphi'$, where
  $\varphi_{n_0-1} \in \bigotimes_{j = 1}^{n_0-1}\cF^{\T_j}_C$ and
  $\norm{\varphi'}^2 \leq \eps$.

  We fix $\eps > 0$ and $\psi \in \cF^{\cC_2}_A$. To prove that the
  operators $U^{\cC_1}_{\Psi_n}$ converge, we will show that there
  exists an $n_0$ such that for any $n,m \geq n_0$,
  \[\norm{U^{\cC_1}_{\Psi_n} \psi - U^{\cC_1}_{\Psi_m} \psi} \leq \sqrt{\eps}\,.\]
  Let
  $\varphi \coloneqq U^{\cC_1}_\Phi (\psi_A \tensor \vacuum^{\cC_2}_B)
  \in \cF^{\cC_1}_{CD}$
  and let $n_0 \in \N$ be such that
  $\varphi = \Omega^{n_0} \tensor \varphi_{n_0 -1} + \varphi'$, where
  $\varphi_{n_0 -1} \in \bigotimes_{j = 1}^{n_0-1}\cF^{\T_j}_C \tensor
  \cF^{\cC_1}_D$
  and $\norm{\varphi'}^2 \leq \eps$. Let $n \geq m \geq n_0$. The
  decompositions of $U^{\cC_1}_{\Psi_n}$ and $U^{\cC_1}_{\Psi_m}$ into
  sequences of operators as in \figref{fig:loop.app} both start with
  $U^{\cC_n}_{\Phi}$ and finish with the sequence $V_{m-1}$ to
  $V_1$. They differ only in their behavior on the positions
  $\cC_{m} \setminus \cC_n$. Here, $U^{\cC_1}_{\Psi_n}$ consists in
  the sequence $V_{n-1}$ to $V_m$ that are applied to the values on
  the $A$ and $C$ wires as well as the internal memory with Hilbert
  spaces $\hilbert_{Q_i}$, whereas the sequence decomposition of the
  operator $U^{\cC_1}_{\Psi_m}$ additionally contains projectors on
  the vacuum state and the operators $V_i$, $i \in \{m,\dotsc,n-1\}$,
  are only applied to input states that have a vacuum on the $B$
  wire. Because these are linear operators we have
  $U^{\cC_1}_{\Psi_m} \psi = \Omega^{m+1} \tensor \varphi_{m}$ and
  $U^{\cC_1}_{\Psi_n} \psi = \Omega^{m+1} \tensor \varphi_{m} +
  \varphi'$.
  And because $m \geq n_0$, $\varphi_{m} $ must have weight
  $\norm{\varphi_{m}}^2 \geq 1 -\eps$, hence
  $\norm{\varphi'}^2 \leq \eps$. Putting this together we get
  \begin{align*}
    \norm{U^{\cC_1}_{\Psi_n} \psi - U^{\cC_1}_{\Psi_m} \psi}
    & = \norm{\Omega^{m+1} \tensor \varphi_{m} + \varphi'
       - \Omega^{m+1} \tensor \varphi_{m}} \\
    & = \norm{\varphi'} \leq \sqrt{\eps}\,.
  \end{align*}

  It is immediate that the limit operator $U^{\cC_1}_\Psi$ is an
  isometry, since for any $\psi$ with $\norm{\psi} = 1$ and any $\eps$
  there exists $n_0$ such that for any $n \geq n_0$,
  $\norm{U^{\cC_1}_{\Psi_{n}} \psi} \geq 1 - \eps$.
\end{proof}

We now need to relate the maps $\Psi^{\cC_1}_n$ which have the \streps
$U^{\cC_1}_{\Psi_n}$ to their \cjreps. In the following lemma we do
this first for a map
$\Phi : \tcop{\hilbert_{AB}} \to \tcop{\hilbert_{CD}}$ for which the
output on $C$ does not depend on the input on $B$. This special case
was illustrated in \figref{fig:connecting}, where $U_{\Phi}$ is
depicted on the left and the resulting system after connecting $C$ to
$B$ is drawn on the right.

\begin{lem}
  \label{lem:connecting}
  Let $\Phi : \tcop{\hilbert_{AB}} \to \tcop{\hilbert_{CD}}$ be a CP
  map with a \strep given by
  \[ U_\Phi = \left(I_C \tensor V\right) \left(U_\Psi \tensor
    I_B\right)\,,\]
  where $U_\Psi : \tcop{\hilbert_{A}} \to \tcop{\hilbert_{CQ}}$ and
  $V : \tcop{\hilbert_{QB}} \to \tcop{\hilbert_{DR}}$. Let
  $\{\ket{k_C}\}_k$ and $\{\ket{\ell_C}\}_\ell$ be any orthonormal
  bases of $\hilbert_C$ and let $\{\ket{k_B}\}_k$ and
  $\{\ket{\ell_B}\}_\ell$ denote the corresponding bases of
  $\hilbert_B$, i.e., for all $k$ and $\ell$,
  $\ket{k_C} \cong \ket{k_B}$ and $\ket{\ell_C} \cong \ket{\ell_B}$.
  The map $\Phi': \tcop{\hilbert_{A}} \to \tcop{\hilbert_{D}}$
  obtained by plugging $C$ into $B$ has
  \cjrep \begin{multline} \label{eq:loop.specialcase}
    R_{\Phi'}(\psi_{D} \tensor \psi_{A} ; \varphi_{D} \tensor
    \varphi_{A}) = \\ \sum_{k,\ell} R_{\Phi}(k_C \tensor \psi_{D}
    \tensor \psi_{A} \tensor \bar{k}_B ; \ell_C \tensor \varphi_{D}
    \tensor \varphi_{A} \tensor \bar{\ell}_{B})\,, \end{multline}
  where
  $\ket{\bar{k}_B} = \sum_{i = 1}^{\infty}\ket{i}
  \overline{\braket{i}{k}}$
  for the basis $\{\ket{i_B}\}_i$ of $\hilbert_B$ used in the \cjrep
  of $\Phi$. Furthermore, the terms in this sum are absolutely
  convergent.
\end{lem}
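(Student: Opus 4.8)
The plan is to build the \strep of the looped map explicitly, read off its \cjrep, and match it term-by-term with the right\-/hand side of \eqnref{eq:loop.specialcase}, leaving the convergence of the double sum as the one delicate point.

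First I would construct a Stinespring dilation of $\Phi'$. The decomposition $U_\Phi = (I_C \tensor V)(U_\Psi \tensor I_B)$ says that $U_\Psi : \hilbert_A \to \hilbert_{CQ}$ produces the $C$\-/output together with an internal register $Q$, and $V : \hilbert_{QB} \to \hilbert_{DR}$ then maps $Q$ and the $B$\-/input to $D$ and the ancilla $R$. Looping $C$ into $B$ means identifying $\hilbert_C$ with $\hilbert_B$ via $\ket{k_C} \cong \ket{k_B}$ and feeding the $C$\-/output directly into the $B$\-/port of $V$, which is exactly the right\-/hand picture of \figref{fig:connecting}. Concretely, let $\iota : \hilbert_{CQ} \to \hilbert_{QB}$ be the relabelling isometry $\ket{c_C}\tensor\ket{q_Q} \mapsto \ket{q_Q}\tensor\ket{c_B}$ and set $U_{\Phi'} \coloneqq V \iota\, U_\Psi : \hilbert_A \to \hilbert_{DR}$. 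As a composition of isometries this is itself an isometry, so $\Phi'(\rho) = \trace[R]{U_{\Phi'} \rho \hconj{U_{\Phi'}}}$ is a legitimate \strep, and $\Phi'$ is CP (and CPTP whenever $U_\Phi$ is).

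Next I would compute the \cjrep directly from \eqnref{eq:cjform}. Writing $c_k \coloneqq (\bra{k_C}\tensor I_Q)U_\Psi\ket{\bar\psi_A} \in \hilbert_Q$ and $a_k \coloneqq (\bra{\psi_D}\tensor I_R)V(\ket{c_k}\tensor\ket{k_B}) \in \hilbert_R$, a short expansion of each summand of \eqnref{eq:loop.specialcase} through the \strep $U_\Phi$ shows that the $k$\-/th contribution of the first argument is exactly $a_k$ and, analogously, the $\ell$\-/th contribution of the second argument is $b_\ell \coloneqq (\bra{\varphi_D}\tensor I_R)V(\ket{d_\ell}\tensor\ket{\ell_B})$ with $d_\ell \coloneqq (\bra{\ell_C}\tensor I_Q)U_\Psi\ket{\bar\varphi_A}$; here I use $\overline{\bar k_B} = k_B$. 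The summand then equals the scalar $\braket{b_\ell}{a_k}$. The crucial algebraic observation is that summing over $k$ reconnects the output to the input, $\sum_k \ket{c_k}\tensor\ket{k_B} = \iota\, U_\Psi\ket{\bar\psi_A}$, so that $\sum_k a_k = (\bra{\psi_D}\tensor I_R)U_{\Phi'}\ket{\bar\psi_A}$ and likewise $\sum_\ell b_\ell = (\bra{\varphi_D}\tensor I_R)U_{\Phi'}\ket{\bar\varphi_A}$. Tracing out $R$ in $\Phi'(\ketbra{\bar\psi_A}{\bar\varphi_A})$ then yields $\bra{\psi_D}\Phi'(\ketbra{\bar\psi_A}{\bar\varphi_A})\ket{\varphi_D} = \braket{\textstyle\sum_\ell b_\ell}{\textstyle\sum_k a_k}$, which is $R_{\Phi'}(\psi_D\tensor\psi_A;\varphi_D\tensor\varphi_A)$. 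This matches the right\-/hand side of \eqnref{eq:loop.specialcase} \emph{provided} the double sum may be split as $\braket{\sum_\ell b_\ell}{\sum_k a_k} = \sum_{k,\ell}\braket{b_\ell}{a_k}$.

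The main obstacle is precisely this interchange of the two infinite sums, i.e.\ the asserted absolute convergence. The tool I would use is that $U_\Psi$ and $V$ are isometries: the family $\{\ket{c_k}\tensor\ket{k_B}\}_k$ is orthogonal with $\sum_k\norm{\ket{c_k}\tensor\ket{k_B}}^2 = \norm{U_\Psi\ket{\bar\psi_A}}^2 = \norm{\psi_A}^2 < \infty$, so it converges unconditionally in $\hilbert_{QB}$; applying the bounded operators $V$ and $\bra{\psi_D}\tensor I_R$ gives $\sum_k\norm{a_k}^2 \le \norm{\psi_D}^2\norm{\psi_A}^2$ together with the unconditional convergence of $\sum_k a_k$, and symmetrically for $\sum_\ell b_\ell$. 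From here the partial sums over finite index sets converge to $\braket{\sum_\ell b_\ell}{\sum_k a_k}$ by continuity of the inner product, and since the terms are scalars this is what underlies the order\-/independence claimed in the statement. The point I would have to check most carefully is that the coupling between the two sums runs only through the single bounded operator $\ketbra{\varphi_D}{\psi_D}\tensor I_R$; controlling this coupling, via Cauchy--Schwarz together with the square\-/summability above, is where the real content of the convergence argument lies, and it is what ultimately justifies leaving the summation order unspecified in \eqnref{eq:loop.specialcase}.
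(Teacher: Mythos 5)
Your construction of the looped Stinespring operator and the resulting identity is sound, and it is essentially a vector\-/level version of the paper's own computation: where you expand $U_\Psi\ket{\bar\psi_A} = \sum_k \ket{k_C}\tensor\ket{c_k}$ and track the vectors $a_k, b_\ell \in \hilbert_R$, the paper instead writes $\Phi = \bar{\Gamma}\circ\bar{\Psi}$ and $\Phi' = \bar{\Gamma}'\circ\bar{\Psi}'$ as compositions of CP maps and inserts resolutions of the identity into the \cjreps, using that $\bar{\Psi}$ and $\bar{\Gamma}$ act as identity on $B$ and $C$ to collapse the basis sums. Your identification of each summand of \eqnref{eq:loop.specialcase} with $\braket{b_\ell}{a_k}$, and of $R_{\Phi'}$ with $\braket{\sum_\ell b_\ell}{\sum_k a_k}$ via unconditional norm convergence of the orthogonal, square\-/summable families, is correct (a minor slip: for a general CP map $U_\Psi$ and $V$ need not be isometries, but boundedness is all your estimates use, and one can reduce to the isometric case anyway).

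The genuine gap is the absolute convergence claim, exactly at the step you flag but do not carry out. Square\-/summability of $\{\norm{a_k}\}_k$ and $\{\norm{b_\ell}\}_\ell$ together with Cauchy\--Schwarz yields only $\sum_{k,\ell}\left|\braket{b_\ell}{a_k}\right| \leq \bigl(\sum_k \norm{a_k}\bigr)\bigl(\sum_\ell \norm{b_\ell}\bigr)$, and $\ell^2$\-/summability does not imply $\ell^1$\-/summability of the norms. Even the sharper Bessel\-/type refinement \--- for any unit $u \in \hilbert_R$ one has $\sum_\ell |\braket{b_\ell}{u}| \leq \norm{\varphi_D}\norm{\varphi_A}$, since the $v_\ell = V(\ket{d_\ell}\tensor\ket{\ell_B})$ are orthogonal \--- only reduces the problem to showing $\sum_k \norm{a_k} < \infty$, which fails within your setup: take $U_\Psi\ket{\bar\psi_A} = \sum_k \gamma_k \ket{k_C}\tensor\ket{q_0}$ and $V(\ket{q_0}\tensor\ket{k_B}) = \ket{\psi_D}\tensor\ket{r_k}$ with $\{\ket{r_k}\}_k$ orthonormal and $(\gamma_k)_k \in \ell^2 \setminus \ell^1$; then $a_k = \gamma_k\ket{r_k}$ and $\sum_k\norm{a_k} = \sum_k \gamma_k = \infty$. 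So the coupling through $\ketbra{\varphi_D}{\psi_D}\tensor I_R$ cannot be controlled by norm estimates of the kind you propose: what your argument actually delivers is unconditional convergence of the vector series $\sum_k a_k$ and $\sum_\ell b_\ell$ and hence convergence of the rectangular partial sums of the scalar double series \--- not absolute convergence of its terms. The paper does not attempt a norm estimate at all; it derives the equality for \emph{arbitrary} orthonormal bases $\{\ket{k_C}\}_k$, $\{\ket{\ell_C}\}_\ell$ and infers order\-/irrelevance, and thereby absolute convergence, from that basis independence. To complete your route you would need either that style of argument or some use of the structure linking the two families (both pass through the same $V$ with matched $B$\-/labels, so $\braket{v_\ell}{w_k} = \braket{d_\ell}{c_k}\delta_{k\ell}$), which your sketch never invokes.
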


\begin{proof}
  Define $\bar{\Psi} : \tcop{\hilbert_{AB}} \to \tcop{\hilbert_{CQB}}$
  and $\bar{\Gamma} : \tcop{\hilbert_{CQB}} \to \tcop{\hilbert_{CD}}$
  as \begin{align*} \bar{\Psi}(\rho) & = \left(U_\Psi \tensor I_B
    \right) \rho \left(\hconj{U}_\Psi \tensor I_B \right)\,, \\
    \bar{\Gamma}(\rho) & = \ktrace[R]{\left( I_C \tensor V
      \right) \rho \left( I_C \tensor \hconj{V} \right)}\,,
\end{align*}
and $\bar{\Psi}' : \tcop{\hilbert_{A}} \to \tcop{\hilbert_{CQ}}$ and
$\bar{\Gamma}' : \tcop{\hilbert_{QB}} \to \tcop{\hilbert_{D}}$
as \begin{align*} \bar{\Psi}'(\rho) & = U_\Psi \rho
  \hconj{U}_\Psi, \\
  \bar{\Gamma}'(\rho) & = \trace[R]{V \rho \hconj{V}}\,.
\end{align*}
Then $\Phi = \bar{\Gamma} \circ \bar{\Psi}$ and $\Phi' = \bar{\Gamma}'
\circ \bar{\Psi}'$. Writing up the \cjreps of these two maps we get
for any basis $\{\ket{i_Q}\}_i$ of $Q$,
\begin{multline*}
  R_\Phi(k_C \tensor \psi_D \tensor \psi_A \tensor \bar{k}_B ; \ell_C
  \tensor \varphi_D \tensor \varphi_A \tensor \bar{\ell}_B) \\
  \begin{split}& = \bra{k_C,\psi_D} \bar{\Gamma}
    \left(\bar{\Psi}\left(
        \ketbra{\bar{\psi}_A,k_B}{\bar{\varphi}_A,\ell_B} \right)
    \right) \ket{\ell_C,\varphi_D} \\
    & = \sum_{e,h,i,j,m,n} \bra{k_C,\psi_D} \bar{\Gamma} \left(
      \ketbra{e_C,i_Q,m_B}{h_C,j_Q,n_B} \right)
    \ket{\ell_C,\varphi_D} \\ & \qquad \qquad \qquad
    \qquad \quad \bra{e_C,i_Q,m_B} \bar{\Psi}\left(
      \ketbra{\bar{\psi}_A,k_B}{\bar{\varphi}_A,\ell_B} \right)
    \ket{h_C,j_Q,n_B} \\
    & = \sum_{i,j} \bra{k_C,\psi_D} \bar{\Gamma} \left(
      \ketbra{k_C,i_Q,k_B}{\ell_C,j_Q,\ell_B} \right)
    \ket{\ell_C,\varphi_D} \\ & \qquad \qquad \qquad
    \qquad \quad \bra{k_C,i_Q,k_B} \bar{\Psi}\left(
      \ketbra{\bar{\psi}_A,k_B}{\bar{\varphi}_A,\ell_B} \right)
    \ket{\ell_C,j_Q,k_B} \\
    & = \sum_{i,j} \bra{\psi_D} \bar{\Gamma}' \left(
      \ketbra{i_Q,k_B}{j_Q,\ell_B} \right) \ket{\varphi_D}
    \bra{k_C,i_Q} \bar{\Psi}'\left(
      \ketbra{\bar{\psi}_A}{\bar{\varphi}_A} \right)
    \ket{\ell_C,j_Q}\,, \end{split}
\end{multline*} 
where we have used that $\bar{\Psi}$ and $\bar{\Gamma}$ perform
identity on $B$ and $C$, respectively. And for $\Phi'$ we obtain.
\begin{multline*}
  R_{\Phi'}(\psi_D \tensor \psi_A ; \varphi_D \tensor \varphi_A) \\
  \begin{split} & = \bra{\psi_D} \bar{\Gamma}' \left(\bar{\Psi}'\left(
      \ketbra{\bar{\psi}_A}{\bar{\varphi}_A} \right) \right)
  \ket{\varphi_D} \\
  & = \sum_{i,j,k,\ell} \bra{\psi_D} \bar{\Gamma}' \left(
    \ketbra{i_Q,k_B}{j_Q,\ell_B} \right) \ket{\varphi_D} \bra{k_C,i_Q}
  \bar{\Psi}'\left( \ketbra{\bar{\psi}_A}{\bar{\varphi}_A} \right)
  \ket{\ell_C,j_Q} \\
  & = \sum_{k,\ell} R_{\Phi}(k_C \tensor \psi_{D} \tensor \psi_{A}
  \tensor \bar{k}_B ; \ell_C \tensor \varphi_{D} \tensor \varphi_{A}
  \tensor \bar{\ell}_{B})\,.\end{split}
\end{multline*}

Note that this holds for any orthonormal bases $\{\ket{k_C}\}_k$ and
$\{\ket{\ell_C}\}_\ell$ of $\hilbert_C$. Hence the order in the sum is
irrelevant and the terms are absolutely convergent.
\end{proof}

With these two lemmas, we can now prove
\propref{prop:loopfromsequence}. We first use \lemref{lem:connecting}
to show that the \cjrep of the system after the loop, namely
$\Psi^{\cC}$, is given by \eqnref{eq:loop}. Since in
\lemref{lem:backwardconvergence} we prove that the limit operator
$U^\cC_\Psi$ is an isometry, then $\Psi^{\cC}$ must be CPTP.

\begin{proof}[Proof of \propref{prop:loopfromsequence}]
  Let $R^{\cC_1}_{\Phi}(\cdot;\cdot)$ be the \cjrep of $\Phi^{\cC_1}$. We
  derive the \cjrep of $\Psi^{\cC_1}_n$ by repeating $n$ times the
  proof from  \lemref{lem:connecting}, i.e., connecting $\cF^{\T_1}_C$ to
  $\cF^{\T_1}_B$, $\cF^{\T_2}_C$ to $\cF^{\T_2}_B$, etc. This yields
  \begin{multline*} R^{\cC_1}_{\Psi_n}(\psi_D \tensor \psi_A ; \varphi_D
    \tensor \varphi_A) = \\
    = \sum_{k,\ell}
    R^{\cC_1}_{\Phi}\Bigl(\Omega^{\cC_{n+1}}_C \tensor k_C \tensor \psi_D
    \tensor \psi_A \tensor \bar{\Omega}^{\cC_{n+1}}_B \tensor
      \bar{k}_B ; \\
    \Omega^{\cC_{n+1}}_C \tensor \ell_C \tensor \varphi_D \tensor
    \varphi_A \tensor \bar{\Omega}^{\cC_{n+1}}_B \tensor \bar{\ell}_B
    \Bigr)\,,
  \end{multline*} 
  where $\{k_C\}_k$ and $\{\ell_C\}_\ell$ are bases of
  $\bigotimes_{i = 1}^n \cF^{\T_i}_C = \cF^{\cC_1 \setminus
    \cC_{n+1}}_C$.
  As in \lemref{lem:connecting}, the order of the summation is not
  relevant and so the terms are absolutely convergent. Using the fact
  that $\bigcup_{i=1}^\infty \T_i = \cC_1$, we immediately get the
  limiting case,
\begin{multline*} R^{\cC_1}_{\Psi}(\psi_D \tensor \psi_A ; \varphi_D
    \tensor \varphi_A) \\ \begin{split} & = \lim_{n \to \infty} R^{\cC_1}_{\Psi_n}(\psi_D \tensor \psi_A ; \varphi_D
    \tensor \varphi_A) \\
    & = \sum_{k,\ell} R^{\cC_1}_{\Phi}\left(k_C \tensor \psi_D
    \tensor \psi_A \tensor \bar{k}_B ; \ell_C \tensor \varphi_D \tensor
    \varphi_A \tensor \bar{\ell}_B \right) \,, \end{split}
\end{multline*}
where $\{k_C\}_k$ and $\{\ell_C\}_k$ are bases of $\cF^{\cC_1}_C$, and
as above, the terms are absolutely convergent. Finally, from
\lemref{lem:backwardconvergence} we know that $U^{\cC_1}_{\Psi}$ is an
isometry if $U^{\cC_1}_\Phi$ is an isometry, hence $\Psi^{\cC_1}$ is
CPTP if $\Phi^{\cC_1}$ is CPTP.
\end{proof}

\section{Alternative distinguishers}
\label{app:alternative}

In this section we consider two changes that could be made to the
notion of distinguishers and the corresponding pseudo\-/metric
introduced in
\secref{sec:metric}. In \appendixref{app:alternative.subnormalized} we
define distinguishers as subnormalized causal boxes: instead of a
normalized system which outputs either $0$ or $1$, we consider the
subnormalized system resulting from conditioning the output on
$0$. This results in a simplification of the pseudo\-/metric
definition. The second change discussed
in \appendixref{app:alternative.limit} consists in removing the
constraint that a distinguisher $\aD$ terminates at a fixed point
$t_{\aD}$. Instead we consider a possibly infinite sequence of bounded
cuts $\cC_1 \subseteq \cC_2 \subseteq \dotsb$, and define the
distinguisher's output to be the limit value of the output on $\cC_i$
as $i \to \infty$.  We then prove that the resulting pseudo\-/metric
is equivalent to the one from \secref{sec:metric}.

\subsection{Subnormalized distinguishers}
\label{app:alternative.subnormalized}

Let $\hat{\fD} = \{\hat{\aD}\}$ be a set of distinguishers. Since by
construction we define the output value of a distinguisher $\hat{\aD}$
to be computed at some point $t_{\hat{\aD}}$, the behavior of
$\hat{\aD}$ on all points $p \nleq t_{\hat{\aD}}$ is irrelevant, and
one can assume that $\hat{\aD}$ is entirely described by the map
$\hat{\aD}^{\leq t_{\hat{\aD}}}$. For any $\cC \in \bcut$ one can take
\begin{equation} \label{eq:distinguisher.completion} \hat{\aD}^\cC(\rho)
  \coloneqq \trace[\T\setminus\cC]{\hat{\aD}^{\leq t_{\hat{\aD}}}(\rho) \tensor
    \proj{\Omega}^{\T \setminus \T^{\leq t_{\hat{\aD}}}}}\,.\end{equation}

Given such a set $\hat{\fD} = \{\hat{\aD_i}\}_i$, we define a set of
subnormalized distinguishers $\fD = \{\aD_i\}_i$ by projecting the
output of the distinguisher $\hat{\aD}_i$ on the vacuum state
$\vacuum$.  Like for $\hat{\aD}_i$, each $\aD_i$ is entirely described
by a subnormalized map $\aD^{\leq t_{\aD}}_i$, where
$t_{\aD} = t_{\hat{\aD}}$. Using the construction from
\eqnref{eq:distinguisher.completion}, any subnormalized map
$\aD^{\leq t_\aD} = \id^{\leq t_\aD} \tensor (\aD')^{\leq t_{\aD}}$
such that $(\aD')^{\leq t_{\aD}}$ satisfies causality is a valid
subnormalized distinguisher.

Connecting a subnormalized $(m,n)$\-/distinguisher $\aD$ to an
$(m,n)$\-/causal boxes $\Phi$ results in a system
$\aD\Phi = \{\aD^\cC\Phi^\cC\}_{\cC \in \bcut}$ that has no output or
input wires. It is thus a set of numbers, where
$\aD^{\leq t_{\aD}}\Phi^{\leq t_{\aD}} \in [0,1]$ corresponds to the
probability of $\hat{\aD}$ outputting a vacuum state, i.e.,
\[\aD^{\leq t_{\aD}}\Phi^{\leq t_{\aD}} = \Pr\left[\hat{\aD}[\Phi] = 0\right]\,.\]
Using the \natrep introduced in \remref{rem:nat.rep}, this probability
may be written as a trace, namely
\[\aD^{\leq t_{\aD}}\Phi^{\leq t_{\aD}} = \trace{K^{\leq t_{\aD}}_{\aD\Phi}}\,,\]
where $K^{\leq t_{\aD}}_{\aD\Phi}$ is the \natrep of
$\mathrm{SWAP}_{\aD\Phi} \circ \left(\aD^{\leq t_{\aD}} \tensor
  \Phi^{\leq t_{\aD}}\right)$,
where $\mathrm{SWAP}_{\aD\Phi}$ permutes the output wires so that they
are aligned with the corresponding input wires and get connected by
the trace operator $\tr$.

The statistical distance between binary random variables may be
written in terms of one of the outcomes:
\begin{multline*} \frac{1}{2} \sum_{x \in \{0,1\}} \left| \Pr
    \left[\hat{\aD}[\Phi] = x \right] - \Pr \left[\hat{\aD}[\Psi] = x \right]
  \right| \\ = \left| \Pr\left[\hat{\aD}[\Phi] = 0 \right] - \Pr
    \left[\hat{\aD}[\Psi] = 0 \right] \right|\,. \end{multline*} Hence the
distinguisher pseudo\-/metric from \defref{def:metric} may equivalently be written as
\[d^{\fD}(\Phi,\Psi) = \sup_{\aD \in \fD}\left| \trace{K^{\leq
      t_{\aD}}_{\aD\Phi}} -\trace{K^{\leq
      t_{\aD}}_{\aD\Psi}}\right|\,,\]
where $t_\aD$ is the position at which the output value of $\aD$ is
computed.

Putting this together we get an alternative definition of the distance
between causal boxes which is equivalent to \defref{def:metric}.

\begin{deff}[Distance] \label{def:alternative.metric} Let
  $\fD = \{\aD\}$ be a set of subnormalized distinguishers as
  described above. Let $\Phi$ and $\Psi$ be two $(m,n)$\-/causal
  boxes. The distance between $\Phi$ and $\Psi$ with respect to $\fD$
  is given by
  \[d^{\fD}(\Phi,\Psi) \coloneqq \sup_{\aD \in \fD}\left|
    \trace{K^{\leq t_{\aD}}_{\aD\Phi}}
    -\trace{K^{\leq t_{\aD}}_{\aD\Psi}}
  \right|\,.\]
\end{deff}

\subsection{The distinguishing limit}
\label{app:alternative.limit}

In \secref{sec:metric} we only consider the output of a distinguisher
$\aD$ up to some point $t_{\aD} \in \T$. In this section we
consider an alternative definition of a distinguisher that is not
constrained in such a way. Instead, we consider the output on a totally ordered
(possibly infinite) sequence of bounded cuts $\{\cC_i\}_{i \in \cI}$. The output of the
distinguisher is then taken to be the limit value as the sequence
progresses. For example, if the set $\T$ is totally ordered, then one
can choose $\cI = \T$ and $\cC_i = \T^{\leq i}$.

\begin{deff}[Limit distinguisher] \label{def:distinguisher.limit} Let
  $\cI$ be a totally ordered set with $\infty \coloneqq \sup \cI$ and
  let $\{\cC_i\}_{i \in \cI}$ be a sequence of bounded cuts such that
  $i \leq j \implies \cC_i \subseteq \cC_j$. A \emph{$(m,n)$\-/limit
    distinguisher}
  $\aD = \{{\id^\cC} \tensor \hat{\aD}^\cC\}_{\cC \in \bcut}$ consists
  of a $(\hat{n},\hat{m})$\-/causal box
  $\hat{\aD} = \{\hat{\aD}^\cC\}_{\cC \in \bcut}$ with
  $m+1-\hat{m} = n - \hat{n}$, a sequence of bounded cuts
  $\{\cC_i\}_{i \in \cI}$ as described above, and a specification of
  how the distinguisher is connected to an $(m,n)$\-/dimensional
  system \--- i.e., which input and output sub-wires are connected to
  $\hat{\aD}$ and which are directly connected by a loop.  For an
  $(m,n)$\-/distinguisher $\aD$ and an $(m,n)$\-/causal box $\Phi$,
  let $\aD\Phi$ denote the causal box with no input wire and a
  $1$\=/dimensional output wire resulting from connecting the systems
  as specified. We define $\aD^\cC[\Phi^\cC]$ to be the binary random
  variable on $\{0,1\}$ obtained by projecting the output of $\aD\Phi$
  within $\cC$ on $P^{\cC}_0 = \proj{\Omega}^{\cC}$ and
  $P^{\cC}_1 = I^{\cC} - \proj{\Omega}^{\cC}$. $\aD[\Phi]$ is then
  defined as the limit over $\{\cC_i\}$ as $i \to \infty$, namely
  \[\Pr\left[\aD[\Phi]=x\right] \coloneqq \lim_{i \to \infty}
  \Pr\left[\aD^{\cC_i}[\Phi^{\cC_i}] = x\right],\]
  which we also write
  $\aD[\Phi] = \lim_{i \to \infty} \aD^{\cC_i}[\Phi^{\cC_i}]$.
\end{deff}

Note that this limit is always well\-/defined, because for any $i <
j$,
\begin{align*} \Pr\left[\aD^{\cC_i}[\Phi^{\cC_i}] = 0\right] & \geq
  \Pr\left[\aD^{\cC_j}[\Phi^{\cC_j}] = 0\right] \\ \intertext{and}
  \Pr\left[\aD^{\cC_i}[\Phi^{\cC_i}] = 1\right] & \leq
  \Pr\left[\aD^{\cC_j}[\Phi^{\cC_j}] = 1\right]\,, \end{align*} but
both are bounded, by $0$ and $1$, respectively.

The sequence of bounded cuts $\{\cC_i\}_{i \in \cI}$ is essential in
\defref{def:distinguisher.limit}. An output at two points $t$ and $t'$
that have no common future \--- i.e., $\nexists t_0$ such that
$t_0 \geq t$ and $t_0 \geq t'$ \--- is not necessarily well\-/defined
on both points simultaneously. This sequence then tells us which is
the relevant point. Since different distinguishers might use different
cuts $\{\cC_i\}_{i \in \cI}$, the entire set $\T$ is still covered by
considering the supremum over sets of distinguishers.

\begin{deff}[Limit distance] \label{def:metric.limit} Given a set of
  $(m,n)$\-/limit distinguishers $\fD$, the limit distance between two
  $(m,n)$\-/causal boxes $\Phi$ and $\Psi$ is defined as
\[d^{\fD}(\Phi,\Psi) \coloneqq \sup_{\aD \in \fD}
\delta(\aD[\Phi],\aD[\Psi]),\] where $\delta(\cdot,\cdot)$ is the
statistical or total variation distance.
\end{deff}

As in \secref{sec:metric} one can show that this distance is a
pseudo\-/metric. It is also a metric if $\fD$ is the set of all
distinguishers. One can also rewrite the definition using
subnormalized boxes as
in \appendixref{app:alternative.subnormalized}. We do not write up
these proofs, since they are nearly identical to the case of
distinguishers from \defref{def:distinguisher}.

Since a distinguisher $\aD$ that takes a decision before some point
$t_\aD$ is a special case of a limit distinguisher with a single cut
$\cC_1 = \T^{\leq t_\aD}$, \defref{def:metric.limit} is more general
than \defref{def:metric}. We now show that the converse also holds: we
prove that for any set of limit distinguishers $\fD$ we can find a set
of distinguishers $\fE$ that is at least as good at distinguishing any
pair of boxes. For $i \in \cI$ and a limit distinguisher $\aD$
(satisfying \defref{def:distinguisher.limit}), we define a
distinguisher $\aD_i$ (satisfying \defref{def:distinguisher}) as
follows. $\aD_i$ behaves identically to $\aD$ within $\cC_i$, but
ignores all inputs and produces no outputs out of $\cC_i$, i.e.,
$\aD_i$ is defined by the map $\aD^{\cC_i}$. The output of $\aD_i$ is
then computed on $\T^{\leq t_i}$ for any $t_i \in \T$ such that
$\cC_i \subseteq \T^{\leq t_i}$.

Let $\fD$ be any set of distinguishers, and let
\[\fE = \{\aD_i : \aD \in \fD, i \in \cI\}\]
be the set obtained as described above. We now prove that $\fE$ is at
least as good as $\fD$ at distinguishing causal boxes.

\begin{lem}
\label{lem:metric.limit}
Let $\fD$ and $\fE$ be defined as above. Then for any $\Phi$ and
$\Psi$, \[ d^{\fD}(\Phi,\Psi) \leq d^{\fE}(\Phi,\Psi). \]
\end{lem}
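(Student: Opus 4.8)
The plan is to show that every limit distinguisher $\aD \in \fD$ achieves, on the pair $(\Phi,\Psi)$, an advantage that is dominated by the advantage of its cut-restrictions $\aD_i \in \fE$, and then to take suprema. First I would establish that, as binary random variables, $\aD_i[\Phi] = \aD^{\cC_i}[\Phi^{\cC_i}]$ for each $i \in \cI$ (and likewise with $\Psi$). By construction $\aD_i$ is defined by the single map $\aD^{\cC_i}$ and produces no output in positions outside $\cC_i$; hence the state of $\aD_i\Phi$ on $\T^{\leq t_i} \setminus \cC_i$ is always the vacuum. Projecting the output of $\aD_i\Phi$ within $\T^{\leq t_i}$ onto $\proj{\Omega}^{\leq t_i}$ therefore gives exactly the same probability as projecting the output of $\aD\Phi$ within $\cC_i$ onto $P^{\cC_i}_0 = \proj{\Omega}^{\cC_i}$, and similarly for the complementary projectors. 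Consequently
\[
\delta\bigl(\aD_i[\Phi], \aD_i[\Psi]\bigr) = \delta\bigl(\aD^{\cC_i}[\Phi^{\cC_i}], \aD^{\cC_i}[\Psi^{\cC_i}]\bigr) \le d^{\fE}(\Phi,\Psi),
\]
the inequality holding because $\aD_i \in \fE$.

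Next I would rewrite the limit distinguisher's advantage as a limit of the advantages of its cut-restrictions. Using that for binary random variables $\delta(X_1,X_2) = \lvert \Pr[X_1=0] - \Pr[X_2=0]\rvert$, together with $\Pr[\aD[\Phi]=0] = \lim_{i\to\infty}\Pr[\aD^{\cC_i}[\Phi^{\cC_i}]=0]$ from \defref{def:distinguisher.limit} and the fact that both limits exist (they are monotone and bounded, as noted after that definition), continuity of $\lvert\cdot\rvert$ yields
\[
\delta\bigl(\aD[\Phi], \aD[\Psi]\bigr) = \lim_{i \to \infty} \delta\bigl(\aD^{\cC_i}[\Phi^{\cC_i}], \aD^{\cC_i}[\Psi^{\cC_i}]\bigr).
\]

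Combining the two displays finishes the argument: every convergent sequence is bounded above by its supremum, and each term of the sequence on the right equals $\delta(\aD_i[\Phi],\aD_i[\Psi]) \le d^{\fE}(\Phi,\Psi)$, so $\delta(\aD[\Phi],\aD[\Psi]) \le d^{\fE}(\Phi,\Psi)$ for every $\aD \in \fD$. Taking the supremum over $\aD \in \fD$ gives $d^{\fD}(\Phi,\Psi) = \sup_{\aD \in \fD} \delta(\aD[\Phi],\aD[\Psi]) \le d^{\fE}(\Phi,\Psi)$, as claimed.

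The step I expect to require the most care is the identification $\aD_i[\Phi] = \aD^{\cC_i}[\Phi^{\cC_i}]$: one must verify that the two pairs of projectors, $\proj{\Omega}^{\leq t_i}$ versus $\proj{\Omega}^{\cC_i}$ (and their complements), induce the same binary outcome on the output of $\aD_i\Phi$, which rests precisely on $\aD_i$ never populating positions in $\T^{\leq t_i} \setminus \cC_i$. The remaining manipulations — the interchange of $\lvert\cdot\rvert$ with the two limits and the bound $\lim \le \sup$ — are routine.
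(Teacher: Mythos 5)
Your proposal is correct and follows essentially the same route as the paper's own proof: rewrite $\delta(\aD[\Phi],\aD[\Psi])$ as the limit of $\delta(\aD^{\cC_i}[\Phi^{\cC_i}],\aD^{\cC_i}[\Psi^{\cC_i}])$, identify each term with $\delta(\aD_i[\Phi],\aD_i[\Psi])$ for $\aD_i \in \fE$, bound by the supremum, and take the supremum over $\aD \in \fD$. The only difference is that you spell out the two steps the paper leaves implicit (the identification $\aD_i[\Phi] = \aD^{\cC_i}[\Phi^{\cC_i}]$ via the vacuum on $\T^{\leq t_i} \setminus \cC_i$, and passing $\lvert\cdot\rvert$ through the convergent limits), which is a welcome but inessential elaboration.
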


\begin{proof}
  For any $\aD \in \fD$ one has
\begin{align*}
\delta(\aD[\Phi],\aD[\Psi]) & = \lim_{i \to \infty}
\delta(\aD^{\cC_i}[\Phi^{\cC_i}],\aD^{\cC_i}[\Psi^{\cC_i}]) \\
& \leq \sup_i
  \delta(\aD^{\cC_i}[\Phi^{\cC_i}],\aD^{\cC_i}[\Psi^{\cC_i}]) \\ & =
  \sup_{i}
  \delta(\aD_i[\Phi],
  \aD_i[\Psi])\,, \end{align*} and
therefore \begin{align*} d^{\fD}(\Phi,\Psi) & = \sup_{\aD}
  \delta(\aD[\Phi],\aD[\Psi]) \\ & \leq
  \sup_{\aD,i}
  \delta(\aD_i[\Phi],
  \aD_i[\Psi]) \\ & =
  d^{\fE}(\Phi,\Psi)\,. \qedhere \end{align*}
\end{proof}

\begin{rem}
  We get equality in \lemref{lem:metric.limit} if $\fE \subseteq \fD$,
  i.e., if for every $\aD \in \fD$, the same distinguisher truncated
  at $\cC_i$ is also in $\fD$.
\end{rem}

\section{Technical lemmas}
\label{app:lemmas}

\paragraph{Parallel composition of causality functions.}
In \propref{prop:parallel} we prove that the parallel composition of
two causal boxes is a new valid box. To do this, we define the
causality function for the new box as the union of the causality
functions of both components. We prove in
\lemref{lem:causality.parallel} that this results in a valid causality
function.

\begin{lem}
  \label{lem:causality.parallel} Let $\chi_1$ and $\chi_2$ be two causality functions for
  some partially ordered set $\T$. Then the function defined as
  $\chi_{\text{new}}(\cC) \coloneqq \chi_1(\cC) \cup \chi_2(\cC)$ is also a
  causality function.
\end{lem}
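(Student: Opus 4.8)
The plan is to check directly that $\chi_{\mathrm{new}}$ meets every requirement of \defref{def:causality.function}: that it sends cuts to cuts and obeys \eqnref{eq:causality.homomorphism}--\eqnref{eq:causality.finite}. Three of the ingredients are routine. Since a cut of $\T$ is exactly a downward\-/closed subset, and a union of downward\-/closed sets is downward\-/closed, $\chi_{\mathrm{new}}(\cC) = \chi_1(\cC) \cup \chi_2(\cC)$ is again a cut. The homomorphism property \eqnref{eq:causality.homomorphism} follows by applying it to $\chi_1$ and $\chi_2$ and reordering the four\-/fold union $\chi_1(\cC)\cup\chi_2(\cC)\cup\chi_1(\cD)\cup\chi_2(\cD)$ into $\chi_{\mathrm{new}}(\cC)\cup\chi_{\mathrm{new}}(\cD)$; monotonicity \eqnref{eq:causality.monotone} then follows (or can be read off componentwise from that of $\chi_1,\chi_2$). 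So the content of the lemma lies entirely in the finiteness condition \eqnref{eq:causality.finite}, and --- once that is in hand --- the strict\-/decrease condition \eqnref{eq:causality.decreasing}.

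For finiteness, the key step I would isolate is a \emph{domination} estimate for compositions. For a word $w = w_1\cdots w_n \in \{1,2\}^n$ set $\chi_w \coloneqq \chi_{w_1}\circ\cdots\circ\chi_{w_n}$; iterating \eqnref{eq:causality.homomorphism} gives $\chi_{\mathrm{new}}^{\,n}(\cC) = \bigcup_{w \in \{1,2\}^n} \chi_w(\cC)$. I claim that if $w$ contains $m_1$ letters equal to $1$, then $\chi_w(\cC) \subseteq \chi_1^{m_1}(\cC)$, and symmetrically $\chi_w(\cC)\subseteq\chi_2^{m_2}(\cC)$. This is proved by processing the factors of $\chi_w$ from the innermost outward: each $\chi_2$ factor may be deleted because $\chi_2$ lies below the identity on bounded cuts (\eqnref{eq:causality.decreasing}), and monotonicity \eqnref{eq:causality.monotone} propagates the resulting inclusion through the remaining factors --- formally an induction comparing the chain $S_j = \chi_{w_j}(S_{j-1})$ with the chain obtained by skipping the $2$'s. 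Granting this, finiteness is a counting argument: by \eqnref{eq:causality.finite} for $\chi_1$ and $\chi_2$ there are $a,b\in\N$ with $t\notin\chi_1^{a}(\cC)$ and $t\notin\chi_2^{b}(\cC)$, and since the iterates $\chi_i^{k}(\cC)$ form a decreasing chain these exclusions persist for all larger exponents. Taking $n = a+b-1$, every word of length $n$ has $m_1 \geq a$ or $m_2 \geq b$, whence $\chi_w(\cC)$ is contained in a set avoiding $t$; thus $t \notin \chi_{\mathrm{new}}^{\,n}(\cC)$.

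The strict\-/decrease condition \eqnref{eq:causality.decreasing} then comes for free. For nonempty bounded $\cC$ we have $\chi_{\mathrm{new}}(\cC) = \chi_1(\cC)\cup\chi_2(\cC) \subseteq \cC$ because each $\chi_i(\cC)\subseteq\cC$; were the inclusion an equality, $\cC$ would be a fixed point of $\chi_{\mathrm{new}}$, so $\chi_{\mathrm{new}}^{\,n}(\cC)=\cC$ for all $n$, contradicting the finiteness just established (pick any $t\in\cC$). Hence $\chi_{\mathrm{new}}(\cC)\subsetneq\cC$. The main obstacle is precisely the finiteness clause: the naive hope that $\chi_1(\cC)\subsetneq\cC$ and $\chi_2(\cC)\subsetneq\cC$ give $\chi_1(\cC)\cup\chi_2(\cC)\subsetneq\cC$ is false in general, and the interleaving of $\chi_1$ and $\chi_2$ in the iterates $\chi_{\mathrm{new}}^{\,n}$ must be controlled --- which is exactly what the domination estimate achieves, reducing the mixed compositions to pure powers of a single causality function.
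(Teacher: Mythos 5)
Your proposal is correct and follows essentially the same route as the paper's proof: the same routine verification of the homomorphism and monotonicity conditions, the same expansion $\chi_{\text{new}}^{\,n}(\cC)=\bigcup_{w\in\{1,2\}^n}\chi_w(\cC)$ via the homomorphism property, the same key domination estimate $\chi_w(\cC)\subseteq\chi_1^{m_1}(\cC)$ (obtained, as in the paper, by combining the decreasing and monotonicity conditions to delete the $\chi_2$ factors), the same counting argument to conclude finiteness, and the same derivation of the strict-decrease condition from finiteness rather than directly. The only differences are immaterial: your explicit remark that a union of cuts is a cut, and the marginally sharper exponent $n=a+b-1$ where the paper takes $n_1+n_2$.
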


\begin{proof}
  We need to prove that the four conditions from
  \defref{def:causality.function} hold. \eqnref{eq:causality.homomorphism} follows because
  \begin{align*}
    \chi_{\text{new}}(\cC \cup \cD) & = \chi_1(\cC \cup \cD) \cup
                                      \chi_2(\cC \cup \cD) \\
    & = \chi_1(\cC) \cup \chi_1(\cD) \cup \chi_2(\cC) \cup \chi_2(\cD) \\
    & = \chi_{\text{new}}(\cC) \cup \chi_{\text{new}}(\cD)\,.
  \end{align*}

 \eqnref{eq:causality.monotone} is satisfied because if $\cC
 \subseteq \cD$, then
  \begin{align*}
    \chi_{\text{new}}(\cC) & = \chi_1(\cC) \cup \chi_2(\cC) \\
    & \subseteq \chi_1(\cD) \cup \chi_2(\cD) \\
    & = \chi_{\text{new}}(\cD)\,.
  \end{align*}

  Since $\chi_1(\cC) \subsetneq \cC$ and $\chi_2(\cC) \subsetneq \cC$,
  we have that $\chi_{\text{new}}(\cC) \subseteq \cC$. To prove that
  \eqnref{eq:causality.decreasing} is satisfied, it remains to show
  that $\chi_{\text{new}}(\cC) \neq \cC$. This follows immediately if
  \eqnref{eq:causality.finite} holds for $\chi_{\text{new}}$, since
  otherwise for all $n$, $\chi^n_{\text{new}}(\cC) = \cC$ and we
  therefore have $t \in \chi^n_{\text{new}}(\cC)$ for all $n$.

  Hence it remains to prove that \eqnref{eq:causality.finite} is
  satisfied, i.e., we need to show that for every $\cC \in \bcut$ and
  $t \in \cC$, there exists an $n \in \N$ such that
  $t \notin \chi^n_{\text{new}}(\cC)$.  Since $\chi_1$ and $\chi_2$
  are valid causality functions, we know that there exists an $n_1$
  such that $t \notin \chi^{n_1}_1(\cC)$ and an $n_2$ such that
  $t \notin \chi^{n_2}_2(\cC)$. We now prove that
  $t \notin \chi^{n_1+n_2}_{\text{new}}(\cC)$. For $s \in \{1,2\}^n$
  we define
  $\chi_s(\cC) \coloneqq \chi_{s_1} \circ \dotsb \circ \chi_{s_n}
  (\cC)$.
  From \eqnref{eq:causality.homomorphism} it follows that
  $\chi^n_{\text{new}}(\cC) = \bigcup_{s \in \{1,2\}^n} \chi_s(\cC)$.
  Furthermore, combining \eqnsref{eq:causality.monotone} and
  \eqref{eq:causality.decreasing} we find that
  $\chi' \circ \chi \circ \chi' (\cC) \subseteq \chi \circ
  \chi' (\cC) \subseteq \chi(\cC)$.
  Thus, if at least $k$ bits of $s$ take the value $1$ ($2$), then
  $\chi_s(\cC) \subseteq \chi^k_1(\cC)$
  ($\chi_s(\cC) \subseteq \chi^k_1(\cC)$). So
  $t \notin \chi^{n_1+n_2}_{\text{new}}(\cC).$
\end{proof}

\paragraph{Commuting limits.} Swapping the order in which two loops
are applied to a quantum box corresponds to swapping the order of the
limits implicit in \eqnref{eq:loop}. We give here a simple condition
that allows the order of the limits to be swapped, which is sufficient
to prove in \lemref{lem:loop.commute} that loops commute. We define
the double limit as
\[ \lim_{m,n \to \infty} a_{mn} = L \coloniff \forall \eps > 0,
\exists N \in \N,
\forall m,n \geq N, \quad |a_{mn} - L | < \eps\,.\]

\begin{lem} \label{lem:limit.swap}
Let $a_{mn}$ be a sequence of complex numbers indexed by $m,n \in
\N$. If \begin{align*} \lim_{m \to \infty} \lim_{n \to \infty} a_{mn}
          & & \text{and} & & \lim_{m,n \to \infty} a_{mn} \end{align*}
both exist, then they converge to the same value.
\end{lem}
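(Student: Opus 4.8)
The plan is to compare the two limiting values directly through a single $\eps$\=/argument, exploiting the uniformity that is built into the double limit. Denote by $L \coloneqq \lim_{m,n \to \infty} a_{mn}$ the double limit and by $M \coloneqq \lim_{m \to \infty} \lim_{n \to \infty} a_{mn}$ the iterated limit, both of which exist by hypothesis. For the iterated limit to be well\-/defined the inner limit must exist for each (sufficiently large) $m$, so set $b_m \coloneqq \lim_{n \to \infty} a_{mn}$; then $M = \lim_{m \to \infty} b_m$. The goal is to show $L = M$, which I will do by proving that the sequence $(b_m)_m$ itself converges to $L$.

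First I would fix $\eps > 0$ and invoke the definition of the double limit to obtain an $N \in \N$ such that $|a_{mn} - L| < \eps$ for all $m,n \geq N$. The key step is then to hold $m \geq N$ fixed and pass to the limit $n \to \infty$ inside this estimate. Since $b_m = \lim_{n \to \infty} a_{mn}$ exists and $|a_{mn} - L| < \eps$ for every $n \geq N$, continuity of the modulus gives $|b_m - L| \leq \eps$. The one point that deserves care is precisely this passage: a strict inequality $|a_{mn} - L| < \eps$ holding for all large $n$ only yields the non\-/strict bound $|b_m - L| \leq \eps$ in the limit, which is nevertheless sufficient.

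Because the bound $|b_m - L| \leq \eps$ now holds for every $m \geq N$, and $\eps > 0$ was arbitrary, I conclude that $\lim_{m \to \infty} b_m = L$. Hence $M = \lim_{m \to \infty} b_m = L$, establishing that the iterated and double limits coincide. I do not anticipate any genuine obstacle here, as the argument is a routine consequence of the uniform control supplied by the double limit; the only subtlety is the strict\-/to\-/non\-/strict relaxation noted above, which does not affect the conclusion since the final $\eps$ is still arbitrary. The same reasoning applies verbatim in the other order (swapping the roles of $m$ and $n$), which is why \lemref{lem:limit.swap} suffices to justify the commutation of the limits implicit in \eqnref{eq:loop} used in the proof of \lemref{lem:loop.commute}.
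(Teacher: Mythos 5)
Your proof is correct and follows essentially the same route as the paper: fix $\eps$, use the uniformity of the double limit to control $|a_{mn}-L|$ for all $m,n \geq N$, and then, with $m$ held fixed, transfer this bound to the inner limit $b_m$. The only (immaterial) difference is that you pass to the limit $n \to \infty$ inside the inequality to get $|b_m - L| \leq \eps$ directly, whereas the paper picks a witness index $N_m$ with $|a_{mN_m}-b_m|<\eps$ and concludes $|b_m - L| < 2\eps$ via the triangle inequality.
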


\begin{proof} Let $\lim_{m,n \to \infty} a_{mn} = L$ and
  $\lim_{n \to \infty} a_{mn} = L_m$. We want to prove that
  $\lim_{m \to \infty} L_m = L$. Fix $\eps > 0$ and let $M$ be such
  that $\forall m,n \geq M$, $|a_{mn} - L | < \eps$. For any
  $m \geq M$, pick $N_m \geq M$ such that $|a_{mN_m} - L_m| <
  \eps$. Hence, there exists $M$ such that for any $m \geq M$
\[ |L_m - L | \leq |L_m - a_{mN_m}| + |a_{mN_m} - L| < 2 \eps\,. \qedhere\]
\end{proof}

\providecommand{\bibhead}[1]{}
\expandafter\ifx\csname pdfbookmark\endcsname\relax%
  \providecommand{\tocrefpdfbookmark}{}
\else\providecommand{\tocrefpdfbookmark}{%
   \phantomsection%
   \addcontentsline{toc}{section}{\refname}}%
\fi

\tocrefpdfbookmark


\begin{thebibliography}{10}

\bibitem{GW07}
Gus Gutoski and John Watrous.
\newblock Toward a general theory of quantum games.
\newblock In {\em Proceedings of the 39th Symposium on Theory of Computing,
  STOC~'07}, pages 565--574. ACM, 2007.
\newblock [\epfmtdoi{10.1145/1250790.1250873},
  \epfmt{arxiv}{quant-ph/0611234}].

\bibitem{Gut12}
Gus Gutoski.
\newblock On a measure of distance for quantum strategies.
\newblock {\em Journal of Mathematical Physics}, 53(3):032202, 2012.
\newblock [\epfmtdoi{10.1063/1.3693621}, \epfmt{arxiv}{1008.4636}].

\bibitem{CDP09}
Giulio Chiribella, Giacomo~Mauro D'Ariano, and Paolo Perinotti.
\newblock Theoretical framework for quantum networks.
\newblock {\em Physical Review A}, 80:022339, August 2009.
\newblock [\epfmtdoi{10.1103/PhysRevA.80.022339}, \epfmt{arxiv}{0904.4483}].

\bibitem{Har11}
Lucien Hardy.
\newblock Reformulating and reconstructing quantum theory.
\newblock eprint, 2011.
\newblock [\epfmt{arxiv}{1104.2066}].

\bibitem{Har12}
Lucien Hardy.
\newblock The operator tensor formulation of quantum theory.
\newblock {\em Philosophical Transactions of the Royal Society of London A:
  Mathematical, Physical and Engineering Sciences}, 370(1971):3385--3417, 2012.
\newblock [\epfmtdoi{10.1098/rsta.2011.0326}, \epfmt{arxiv}{1201.4390}].

\bibitem{Har15}
Lucien Hardy.
\newblock Quantum theory with bold operator tensors.
\newblock {\em Philosophical Transactions of the Royal Society of London A:
  Mathematical, Physical and Engineering Sciences}, 373(2047), 2015.
\newblock [\epfmtdoi{10.1098/rsta.2014.0239}].

\bibitem{Har05}
Lucien Hardy.
\newblock Probability theories with dynamic causal structure: A new framework
  for quantum gravity.
\newblock eprint, 2005.
\newblock [\epfmt{arxiv}{gr-qc/0509120}].

\bibitem{Har07}
Lucien Hardy.
\newblock Towards quantum gravity: a framework for probabilistic theories with
  non-fixed causal structure.
\newblock {\em Journal of Physics A: Mathematical and Theoretical},
  40(12):3081, 2007.
\newblock [\epfmtdoi{10.1088/1751-8113/40/12/S12},
  \epfmt{arxiv}{gr-qc/0608043}].

\bibitem{Mau02}
Ueli Maurer.
\newblock Indistinguishability of random systems.
\newblock In {\em Advances in Cryptology -- EUROCRYPT 2002}, volume 2332 of
  {\em Lecture Notes in Computer Science}, pages 110--132. Springer, 2002.
\newblock [\epfmtdoi{10.1007/3-540-46035-7_8}].

\bibitem{MPR07}
Ueli Maurer, Krzysztof Pietrzak, and Renato Renner.
\newblock Indistinguishability amplification.
\newblock In {\em Advances in Cryptology -- CRYPTO 2007}, volume 4622 of {\em
  Lecture Notes in Computer Science}, pages 130--149. Springer, 2007.
\newblock [\epfmtdoi{10.1007/978-3-540-74143-5_8}].

\bibitem{DFPR14}
Vedran Dunjko, Joseph Fitzsimons, Christopher Portmann, and Renato Renner.
\newblock Composable security of delegated quantum computation.
\newblock In {\em Advances in Cryptology -- ASIACRYPT 2014, Proceedings, Part
  II}, volume 8874 of {\em Lecture Notes in Computer Science}, pages 406--425.
  Springer, 2014.
\newblock [\epfmtdoi{10.1007/978-3-662-45608-8_22}, \epfmt{arxiv}{1301.3662}].

\bibitem{CDPV13}
Giulio Chiribella, Giacomo~Mauro D'Ariano, Paolo Perinotti, and Benoit Valiron.
\newblock Quantum computations without definite causal structure.
\newblock {\em Physical Review A}, 88:022318, August 2013.
\newblock [\epfmtdoi{10.1103/PhysRevA.88.022318}, \epfmt{arxiv}{0912.0195}].

\bibitem{TGMV13}
Jayne Thompson, Mile Gu, and Kavan~Modi amnd Vlatko~Vedral.
\newblock Quantum computing with black-box subroutines.
\newblock eprint, 2013.
\newblock [\epfmt{arxiv}{1310.2927}].

\bibitem{AFCB14}
Mateus Ara\'ujo, Adrien Feix, Fabio Costa, and \v{C}aslav Brukner.
\newblock Quantum circuits cannot control unknown operations.
\newblock {\em New Journal of Physics}, 16(9):093026, 2014.
\newblock [\epfmtdoi{10.1088/1367-2630/16/9/093026}, \epfmt{arxiv}{1309.7976}].

\bibitem{ZRKZPLO11}
Xiao-Qi Zhou, Timothy~C. Ralph, Pruet Kalasuwan, Mian Zhang, Alberto Peruzzo,
  Benjamin~P. Lanyon, and Jeremy~L. O'Brien.
\newblock Adding control to arbitrary unknown quantum operations.
\newblock {\em Nature Communications}, 2:413, August 2011.
\newblock [\epfmtdoi{10.1038/ncomms1392}, \epfmt{arxiv}{1006.2670}].

\bibitem{FDDB14}
Nicolai Friis, Vedran Dunjko, Wolfgang D\"ur, and Hans~J. Briegel.
\newblock Implementing quantum control for unknown subroutines.
\newblock {\em Physical Review A}, 89:030303, March 2014.
\newblock [\epfmtdoi{10.1103/PhysRevA.89.030303}, \epfmt{arxiv}{1401.8128}].

\bibitem{ACB14}
Mateus Ara\'ujo, Fabio Costa, and \v{C}aslav Brukner.
\newblock Computational advantage from quantum-controlled ordering of gates.
\newblock {\em Physical Review Letters}, 113:250402, December 2014.
\newblock [\epfmtdoi{10.1103/PhysRevLett.113.250402},
  \epfmt{arxiv}{1401.8127}].

\bibitem{PMACADHRBW15}
Lorenzo~M. Procopio, Amir Moqanaki, Mateus Ara\'ujo, Fabio Costa, Irati
  Alonso~Calafell, Emma~G. Dowd, Deny~R. Hamel, Lee~A. Rozema, \v{C}aslav
  Brukner, and Philip Walther.
\newblock Experimental superposition of orders of quantum gates.
\newblock {\em Nature Communications}, 6:7913, August 2015.
\newblock [\epfmtdoi{10.1038/ncomms8913}, \epfmt{arxiv}{1412.4006}].

\bibitem{CDFP12}
Timoteo Colnaghi, Giacomo~Mauro D'Ariano, Stefano Facchini, and Paolo
  Perinotti.
\newblock Quantum computation with programmable connections between gates.
\newblock {\em Physics Letters A}, 376(45):2940 -- 2943, 2012.
\newblock [\epfmtdoi{10.1016/j.physleta.2012.08.028},
  \epfmt{arxiv}{1109.5987}].

\bibitem{Chi12}
Giulio Chiribella.
\newblock Perfect discrimination of no-signalling channels via quantum
  superposition of causal structures.
\newblock {\em Physical Review A}, 86:040301, October 2012.
\newblock [\epfmtdoi{10.1103/PhysRevA.86.040301}, \epfmt{arxiv}{1109.5154}].

\bibitem{FAB15}
Adrien Feix, Mateus Ara\'ujo, and \v{C}aslav Brukner.
\newblock Quantum superposition of the order of parties as a communication
  resource.
\newblock {\em Physical Review A}, 92:052326, November 2015.
\newblock [\epfmtdoi{10.1103/PhysRevA.92.052326}, \epfmt{arxiv}{1508.07840}].

\bibitem{Unr04}
Dominique Unruh.
\newblock Simulatable security for quantum protocols.
\newblock eprint, 2004.
\newblock [\epfmt{arxiv}{quant-ph/0409125}].

\bibitem{Unr10}
Dominique Unruh.
\newblock Universally composable quantum multi-party computation.
\newblock In {\em Advances in Cryptology -- EUROCRYPT 2010}, volume 6110 of
  {\em Lecture Notes in Computer Science}, pages 486--505. Springer, 2010.
\newblock [\epfmtdoi{10.1007/978-3-642-13190-5_25}, \epfmt{arxiv}{0910.2912}].

\bibitem{BM04}
Michael {Ben-Or} and Dominic Mayers.
\newblock General security definition and composability for quantum \&
  classical protocols.
\newblock eprint, 2004.
\newblock [\epfmt{arxiv}{quant-ph/0409062}].

\bibitem{OG16}
Ognyan Oreshkov and Christina Giarmatzi.
\newblock Causal and causally separable processes.
\newblock {\em New Journal of Physics}, 18(9):093020, 2016.
\newblock [\epfmtdoi{10.1088/1367-2630/18/9/093020},
  \epfmt{arxiv}{1506.05449}].

\bibitem{ABCFGB15}
Mateus Ara\'ujo, Cyril Branciard, Fabio Costa, Adrien Feix, Christina
  Giarmatzi, and \v{C}aslav Brukner.
\newblock Witnessing causal nonseparability.
\newblock {\em New Journal of Physics}, 17(10):102001, 2015.
\newblock [\epfmtdoi{10.1088/1367-2630/17/10/102001},
  \epfmt{arxiv}{1506.03776}].

\bibitem{MR11}
Ueli Maurer and Renato Renner.
\newblock Abstract cryptography.
\newblock In {\em Proceedings of Innovations in Computer Science, ICS 2011},
  pages 1--21. Tsinghua University Press, 2011.

\bibitem{PR14}
Christopher Portmann and Renato Renner.
\newblock Cryptographic security of quantum key distribution.
\newblock eprint, 2014.
\newblock [\epfmt{arxiv}{1409.3525}].

\bibitem{Mau13}
Ueli Maurer.
\newblock Authentication amplification by synchronization.
\newblock In {\em Proceedings of the 2013 IEEE International Symposium on
  Information Theory, ISIT 2013}, pages 2711--2714. IEEE, 2013.
\newblock [\epfmtdoi{10.1109/ISIT.2013.6620719}].

\bibitem{Por14}
Christopher Portmann.
\newblock Key recycling in authentication.
\newblock {\em IEEE Transactions on Information Theory}, 60(7):4383--4396, July
  2014.
\newblock [\epfmtdoi{10.1109/TIT.2014.2317312}, \epfmt{arxiv}{1202.1229}].

\bibitem{Ken99}
Adrian Kent.
\newblock Unconditionally secure bit commitment.
\newblock {\em Physical Review Letters}, 83:1447--1450, August 1999.
\newblock [\epfmtdoi{10.1103/PhysRevLett.83.1447},
  \epfmt{arxiv}{quant-ph/9810068}].

\bibitem{Ken12}
Adrian Kent.
\newblock Unconditionally secure bit commitment by transmitting measurement
  outcomes.
\newblock {\em Physical Review Letters}, 109:130501, September 2012.
\newblock [\epfmtdoi{10.1103/PhysRevLett.109.130501},
  \epfmt{arxiv}{1108.2879}].

\bibitem{KTHW13}
J\c{e}drzej Kaniewski, Marco Tomamichel, Esther H\"anggi, and Stephanie Wehner.
\newblock Secure bit commitment from relativistic constraints.
\newblock {\em IEEE Transactions on Information Theory}, 59(7):4687--4699, July
  2013.
\newblock [\epfmtdoi{10.1109/TIT.2013.2247463}, \epfmt{arxiv}{1206.1740}].

\bibitem{LKBHTWZ14}
T.~Lunghi, J.~Kaniewski, F.~Bussi\`eres, R.~Houlmann, M.~Tomamichel, S.~Wehner,
  and H.~Zbinden.
\newblock Practical relativistic bit commitment.
\newblock {\em Physical Review Letters}, 115:030502, July 2015.
\newblock [\epfmtdoi{10.1103/PhysRevLett.115.030502},
  \epfmt{arxiv}{1411.4917}].

\bibitem{AK15}
Emily Adlam and Adrian Kent.
\newblock Deterministic relativistic quantum bit commitment.
\newblock {\em International Journal of Quantum Information}, 13(05):1550029,
  2015.
\newblock [\epfmtdoi{10.1142/S021974991550029X}, \epfmt{arxiv}{1504.00943}].

\bibitem{BCFGGOS11}
Harry Buhrman, Nishanth Chandran, Serge Fehr, Ran Gelles, Vipul Goyal, Rafail
  Ostrovsky, and Christian Schaffner.
\newblock Position-based quantum cryptography: Impossibility and constructions.
\newblock In {\em Advances in Cryptology -- CRYPTO 2011}, volume 6841 of {\em
  Lecture Notes in Computer Science}, pages 429--446. Springer, 2011.
\newblock [\epfmtdoi{10.1007/978-3-642-22792-9_24}, \epfmt{arxiv}{1009.2490},
  \epfmt{cryptoeprint}{2010/275}].

\bibitem{Unr14}
Dominique Unruh.
\newblock Quantum position verification in the random oracle model.
\newblock In {\em Advances in Cryptology -- CRYPTO 2014}, volume 8617 of {\em
  Lecture Notes in Computer Science}, pages 1--18. Springer, 2014.
\newblock [\epfmtdoi{10.1007/978-3-662-44381-1_1},
  \epfmt{cryptoeprint}{2014/118}].

\bibitem{UM10}
Dominique Unruh and J\"orn M\"uller-Quade.
\newblock Universally composable incoercibility.
\newblock In {\em Advances in Cryptology -- CRYPTO 2010}, volume 6223 of {\em
  Lecture Notes in Computer Science}, pages 411--428. Springer, 2010.
\newblock [\epfmtdoi{10.1007/978-3-642-14623-7_22},
  \epfmt{cryptoeprint}{2009/520}].

\bibitem{MM13}
Christian Matt and Ueli Maurer.
\newblock The one-time pad revisited.
\newblock In {\em Proceedings of the 2013 IEEE International Symposium on
  Information Theory, ISIT 2013}, pages 2706--2710. IEEE, 2013.
\newblock [\epfmtdoi{10.1109/ISIT.2013.6620718}].

\bibitem{Har09}
Lucien Hardy.
\newblock Quantum gravity computers: On the theory of computation with
  indefinite causal structure.
\newblock In Wayne~C. Myrvold and Joy Christian, editors, {\em Quantum Reality,
  Relativistic Causality, and Closing the Epistemic Circle: Essays in Honour of
  Abner Shimony}, pages 379--401. Springer, 2009.
\newblock [\epfmtdoi{10.1007/978-1-4020-9107-0_21},
  \epfmt{arxiv}{quant-ph/0701019}].

\bibitem{Har10}
Lucien Hardy.
\newblock Formalism locality in quantum theory and quantum gravity.
\newblock In Alisa Bokulich and Gregg Jaeger, editors, {\em Philosophy of
  Quantum Information and Entanglement}, pages 44--62. Cambridge University
  Press, 2010.
\newblock [\epfmt{arxiv}{0804.0054}].

\bibitem{OCB12}
Ognyan Oreshkov, Fabio Costa, and \v{C}aslav Brukner.
\newblock Quantum correlations with no causal order.
\newblock {\em Nature Communications}, 3:1092, October 2012.
\newblock [\epfmtdoi{10.1038/ncomms2076}, \epfmt{arxiv}{1105.4464}].

\bibitem{Bru14a}
\v{C}aslav Brukner.
\newblock Quantum causality.
\newblock {\em Nature Physics}, 10(4):259--263, April 2014.
\newblock [\epfmtdoi{10.1038/nphys2930}].

\bibitem{Bra16}
Cyril Branciard.
\newblock Witnesses of causal nonseparability: an introduction and a few case
  studies.
\newblock {\em Scientific Reports}, 6:26018, May 2016.
\newblock [\epfmtdoi{10.1038/srep26018}, \epfmt{arxiv}{1603.00043}].

\bibitem{BW14}
\"Amin Baumeler and Stefan Wolf.
\newblock Perfect signaling among three parties violating predefined causal
  order.
\newblock In {\em Proceedings of the 2014 IEEE International Symposium on
  Information Theory, ISIT 2014}, pages 526--530. IEEE, 2014.
\newblock [\epfmtdoi{10.1109/ISIT.2014.6874888}, \epfmt{arxiv}{1312.5916}].

\bibitem{Bru14b}
\v{C}aslav Brukner.
\newblock Bounding quantum correlations with indefinite causal order.
\newblock {\em New Journal of Physics}, 17(8):083034, 2015.
\newblock [\epfmtdoi{10.1088/1367-2630/17/8/083034}, \epfmt{arxiv}{1404.0721}].

\bibitem{BAFCB16}
Cyril Branciard, Mateus Ara\'ujo, Adrien Feix, Fabio Costa, and \v{C}aslav
  Brukner.
\newblock The simplest causal inequalities and their violation.
\newblock {\em New Journal of Physics}, 18(1):013008, 2016.
\newblock [\epfmtdoi{10.1088/1367-2630/18/1/013008},
  \epfmt{arxiv}{1508.01704}].

\bibitem{BW16b}
\"Amin Baumeler and Stefan Wolf.
\newblock Device-independent test of causal order and relations to
  fixed-points.
\newblock {\em New Journal of Physics}, 18(3):035014, 2016.
\newblock [\epfmtdoi{10.1088/1367-2630/18/3/035014},
  \epfmt{arxiv}{1511.05444}].

\bibitem{BFW14}
\"Amin Baumeler, Adrien Feix, and Stefan Wolf.
\newblock Maximal incompatibility of locally classical behavior and global
  causal order in multiparty scenarios.
\newblock {\em Physical Review A}, 90:042106, October 2014.
\newblock [\epfmtdoi{10.1103/PhysRevA.90.042106}, \epfmt{arxiv}{1403.7333}].

\bibitem{BW16a}
\"Amin Baumeler and Stefan Wolf.
\newblock The space of logically consistent classical processes without causal
  order.
\newblock {\em New Journal of Physics}, 18(1):013036, 2016.
\newblock [\epfmtdoi{10.1088/1367-2630/18/1/013036},
  \epfmt{arxiv}{1507.01714}].

\bibitem{BW16c}
\"Amin Baumeler and Stefan Wolf.
\newblock Non-causal computation avoiding the grandfather and information
  antinomies.
\newblock eprint, 2016.
\newblock [\epfmt{arxiv}{1601.06522}].

\bibitem{OC16}
Ognyan Oreshkov and Nicolas~J.\ Cerf.
\newblock Operational quantum theory without predefined time.
\newblock {\em New Journal of Physics}, 18(7):073037, 2016.
\newblock [\epfmtdoi{10.1088/1367-2630/18/7/073037}, \epfmt{arxiv}{1406.3829}].

\bibitem{MMPRT16}
Christian Matt, Ueli Maurer, Christopher Portmann, Renato Renner, and Bj\"orn
  Tackmann.
\newblock Toward an algebraic theory of systems.
\newblock eprint, 2016.
\newblock [\epfmt{arxiv}{1609.04293}].

\bibitem{Har13}
Lucien Hardy.
\newblock On the theory of composition in physics.
\newblock In Bob Coecke, Luke Ong, and Prakash Panangaden, editors, {\em
  Computation, Logic, Games, and Quantum Foundations. The Many Facets of Samson
  Abramsky: Essays Dedicated to Samson Abramsky on the Occasion of His 60th
  Birthday}, pages 83--106. Springer, 2013.
\newblock [\epfmtdoi{10.1007/978-3-642-38164-5_7}, \epfmt{arxiv}{1303.1537}].

\bibitem{LS98}
Edward~A. Lee and Alberto Sangiovanni-Vincentelli.
\newblock A framework for comparing models of computation.
\newblock {\em IEEE Transactions on Computer-Aided Design of Integrated
  Circuits and Systems}, 17(12):1217--1229, December 1998.
\newblock [\epfmtdoi{10.1109/43.736561}].

\bibitem{Pau03}
Vern Paulsen.
\newblock {\em Completely Bounded Maps and Operator Algebras}.
\newblock Number~78 in Cambridge Studies in Advanced Mathematics. Cambridge
  University Press, 2003.
\newblock [\epfmtdoi{10.1017/CBO9780511546631}].

\bibitem{Wat16}
John Watrous.
\newblock Theory of quantum information, 2016.
\newblock Book draft, \url{https://cs.uwaterloo.ca/~watrous/TQI/}.

\bibitem{Hol11}
Alexander~S. Holevo.
\newblock The {C}hoi-{J}amio\l{}kowski forms of quantum {G}aussian channels.
\newblock {\em Journal of Mathematical Physics}, 52(4):042202, 2011.
\newblock [\epfmtdoi{10.1063/1.3581879}, \epfmt{arxiv}{1004.0196}].

\bibitem{Ben80}
Paul Benioff.
\newblock The computer as a physical system: A microscopic quantum mechanical
  hamiltonian model of computers as represented by turing machines.
\newblock {\em Journal of Statistical Physics}, 22(5):563--591, 1980.
\newblock [\epfmtdoi{10.1007/BF01011339}].

\bibitem{Deu85}
David Deutsch.
\newblock Quantum theory, the church-turing principle and the universal quantum
  computer.
\newblock {\em Proceedings of the Royal Society of London A: Mathematical,
  Physical and Engineering Sciences}, 400(1818):97--117, 1985.
\newblock [\epfmtdoi{10.1098/rspa.1985.0070}].

\bibitem{Yao93}
Andrew Chi-Chih Yao.
\newblock Quantum circuit complexity.
\newblock In {\em Proceedings of the 34st Symposium on Foundations of Computer
  Science, FOCS~'93}, pages 352--361. IEEE, 1993.
\newblock [\epfmtdoi{10.1109/SFCS.1993.366852}].

\bibitem{BV97}
Ethan Bernstein and Umesh Vazirani.
\newblock Quantum complexity theory.
\newblock {\em SIAM Journal on Computing}, 26(5):1411--1473, October 1997.
\newblock [\epfmtdoi{10.1137/S0097539796300921}].

\bibitem{Deu89}
David Deutsch.
\newblock Quantum computational networks.
\newblock {\em Proceedings of the Royal Society of London A: Mathematical,
  Physical and Engineering Sciences}, 425(1868):73--90, 1989.
\newblock [\epfmtdoi{10.1098/rspa.1989.0099}].

\bibitem{GHRdRS16}
John Goold, Marcus Huber, Arnau Riera, L\'idia del Rio, and Paul Skrzypczyk.
\newblock The role of quantum information in thermodynamics---a topical review.
\newblock {\em Journal of Physics A: Mathematical and Theoretical},
  49(14):143001, 2016.
\newblock [\epfmtdoi{10.1088/1751-8113/49/14/143001},
  \epfmt{arxiv}{1505.07835}].

\bibitem{Fai16}
Philippe Faist.
\newblock {\em Quantum Coarse-Graining: An Information-Theoretic Approach to
  Thermodynamics}.
\newblock PhD thesis, Swiss Federal Institute of Technology (ETH) Zurich, 2016.
\newblock [\epfmt{arxiv}{1607.03104}].

\bibitem{FAB16}
Adrien Feix, Mateus Ara\'ujo, and \v{C}aslav Brukner.
\newblock Causally nonseparable processes admitting a causal model.
\newblock {\em New Journal of Physics}, 18(8):083040, 2016.
\newblock [\epfmtdoi{10.1088/1367-2630/18/8/083040},
  \epfmt{arxiv}{1604.03391}].

\bibitem{Sch03}
Bernd S.~W. Schr\"oder.
\newblock {\em Ordered Sets: An Introduction}.
\newblock Birkh\"auser Boston, 2003.
\newblock [\epfmtdoi{10.1007/978-1-4612-0053-6}].

\end{thebibliography}
\end{document}